\newcommand\dela[1]{}
\newcommand{\assign}{:=}
\newcommand{\backassign}{=:}
\newcommand{\comma}{{,}}
\newcommand{\infixand}{\text{ and }}
\newcommand{\longhookrightarrow}{{\lhook\joinrel\relbar\joinrel\rightarrow}}
\newcommand{\mathd}{\mathrm{d}}
\newcommand{\mathlambda}{\lambda}
\newcommand{\nobracket}{}
\newcommand{\tmaffiliation}[1]{\\ #1}
\newcommand{\tmemail}[1]{\\ \textit{Email:} \texttt{#1}}
\newcommand{\tmop}[1]{\ensuremath{\operatorname{#1}}}
\newcommand{\tmrsub}[1]{\ensuremath{_{\textrm{#1}}}}
\newcommand{\tmrsup}[1]{\textsuperscript{#1}}
\newcommand{\tmsep}{, }
\newcommand{\tmstrong}[1]{\textbf{#1}}
\newcommand{\tmtextit}[1]{\text{{\itshape{#1}}}}
\newenvironment{enumeratenumeric}{\begin{enumerate}[1.] }{\end{enumerate}}
\newenvironment{itemizedot}{\begin{itemize} }{\end{itemize}}
\newenvironment{proof}{\noindent\textbf{Proof\ }}{\hspace*{\fill}$\Box$\medskip}
\newcounter{nnacknowledgments}
\newtheorem{acknowledgments*}[nnacknowledgments]{Acknowledgments}}
\newtheorem{lemma}{Lemma}
\newtheorem{proposition}{Proposition}
{\theorembodyfont{\rmfamily}\newtheorem{remark}{Remark}}
\newtheorem{theorem}{Theorem}
\newcommand{\tmkeywords}{\textbf{Keywords:} }
\newcommand{\tmmsc}{\textbf{A.M.S. subject classification:} }
\begin{document}

\title{
  Decay of correlations in stochastic quantization:\\
  the exponential Euclidean field in two dimensions
}

\author{
  Massimiliano Gubinelli
  \tmaffiliation{Mathematical Institute \\
  University of Oxford \\
  United Kingdom}
  \tmemail{gubinelli@maths.ox.ac.uk
  }
  \and
  Martina Hofmanov{\'a}
  \tmaffiliation{Fakult{\"a}t f{\"u}r Mathematik\\
  Universit{\"a}t Bielefeld \\
  Germany}
  \tmemail{hofmanova@math.uni-bielefeld.de}
  \and
  Nimit Rana
  \tmaffiliation{Fakult{\"a}t f{\"u}r Mathematik\\
  Universit{\"a}t Bielefeld \\
  Germany\\
  \tmtextit{and}\\
  Department of Mathematics\\
  Imperial College London\\
  United Kingdom}
  \tmemail{n.rana22@imperial.ac.uk}
}

\date{}

\maketitle

\begin{abstract}
  We present two approaches to establish the exponential decay of correlation
  functions of Euclidean quantum field theories (EQFTs) via stochastic
  quantization (SQ). In particular we consider the elliptic stochastic
  quantization of the H{\o}egh--Krohn (or $\exp (\alpha \phi)_2$) EQFT in two
  dimensions. The first method is based on a path-wise coupling argument and
  PDE apriori estimates, while the second on estimates of the Malliavin
  derivative of the solution to the SQ equation.
\end{abstract}

\tmmsc{60H17}{\tmsep}{60H07}{\tmsep}{81T07}

\tmkeywords{stochastic quantization}{\tmsep}{H{\o}egh-Krohn
model}{\tmsep}{decay of correlations}{\tmsep}{Euclidean quantum field theory}

{\tableofcontents}

\section{Introduction}
\qquad The last decade has seen a renewed interest in the study of rigorous
stochastic quantization (SQ) of Euclidean quantum field theories (EQFTs). SQ
is a technique, first proposed by
Nelson~{\cite{nelsonDynamicalTheoriesBrownian1967}} and
Parisi--Wu~{\cite{PW81}}, to realize EQFTs, or more generally Gibbsian
measures on $\mathbb{R}^d$ obtained as limits of perturbations of Gaussian
measures, as solutions to certain stochastic partial differential equations (SPDEs) driven by Gaussian noise. After the pioneering work of Jona--Lasinio and
Mitter~{\cite{JLM85,JLM90}} and
Da~Prato--Debussche~{\cite{dapratoStrongSolutionsStochastic2003}}, only very
recently substantial advances have allowed to attack the challenging problem
of the SQ for classical EQFTs, including the $\Phi^4_3$ model, see
e.g.~{\cite{H14,K16,mourratGlobalWellposednessDynamic2017,CC18,albeverioInvariantMeasureFlow2018,GH2019,moinatSpaceTimeLocalisationDynamic2020,gubinelliPDEConstructionEuclidean2021,gubinelliStochasticQuantisationFractional2023}}.

While the original approach of Parisi--Wu to the SQ method based on a Langevin equilibrium diffusion gives rise to parabolic SPDEs, this it is not the only possibility. Nowadays we dispose of at least two other methods of stochastic quantization:
\begin{itemize}
  \item \tmtextit{the elliptic SQ
  approach}~{\cite{AVG20,AVG2021,BV2021arXiv,GH2019}}, based on the
  dimensional reduction phenomenon described by Parisi and
  Sourlas~{\cite{PS79,PS82}} and involving the solutions of an elliptic
  singular SPDE in $d + 2$ dimensions;
  
  \item \tmtextit{the variational
  method}~{\cite{barashkovVariationalMethodPhi2020,barashkovStochasticControlApproach2022,barashkovVariationalMethodEuclidean2021}}
  which involves forward--backward SDEs and can be also applied to fermionic
  EQFTs~{\cite{devecchiStochasticAnalysisSubcritical2022}}.
\end{itemize}

The aim of this work is to discuss the decay of correlations of Euclidean
quantum fields from the point of view of the SQ methods. In particular we
consider the elliptic SQ framework and restrict our attention to the following
elliptic SQ equation with respect to the real valued random field $\varphi
(z), z \in \mathbb{R}^4$,
\begin{equation}
  (- \Delta_{} + m^2) \varphi + \alpha \exp (\alpha \varphi - \infty) = \xi,
  \label{intro-renorm-SPDE}
\end{equation}
where $\alpha \in \mathbb{R}$ and $m > 0$. Here, $\xi$ is a Gaussian white
noise on $\mathbb{R}^4$ and $- \infty$ means that the equation should be
properly renormalized. The existence of a unique solution to
equation~\eqref{intro-renorm-SPDE} and the link with the corresponding EQF
measure in two dimensions, called the H{\o}egh--Krohn model~{\cite{HK71}} (also
known as Liouville model in the literature) has been established
in~{\cite{AVG2021}} for
\[ | \alpha | < \alpha_{\max} \assign 4 \pi \sqrt{8 - 4 \sqrt{3}} . \]
More precisely, well-posedness holds in the weighted Besov space $B_{p, p,
\ell}^s (\mathbb{R}^4)$, for suitable  ($p, s$)  given
in~\eqref{rangeOfParams} and $\ell > 0$ large enough (see
Section~\ref{sec-notation} for precise notations). 

The estimation of connected (or truncated) correlation functions, for
example, the connected two-point function,
\begin{equation}
  \begin{array}{l}
    \mathbb{E} [\varphi (x_1) \varphi (x_2)] -\mathbb{E} [\varphi (x_1)]
    \mathbb{E} [\varphi (x_2)], \quad x_1, x_2 \in \mathbb{R}^4,
  \end{array} \label{intro-corr}
\end{equation}
is a basic goal of any constructive EQFT approach. General truncated
correlation functions allow to infer informations about masses of the
particles in the QFT and estimate scattering amplitudes (see
e.g.~{\cite{H65}}).
In the constructive literature, estimation of the connected correlation
functions is obtained via cluster expansion methods or correlation
inequalities. See for example the early work of
Glimm--Jaffe--Spencer~{\cite{GJS73,GJS74}}. The literature about expansion
methods abounds. We suggest the interested reader to refer
to~{\cite{GlimmJaffe87,APS09,AKHZ89,DCGR20}} and the reference therein for
details and to~{\cite{J00}} for a nice review of related results. Expansion
methods for Euclidean fields involve two primary steps. The initial step is to expand the interaction into parts localized in different bounded volumes of Euclidean space. This gives control over the infinite volume method to
establish the exponential decay of correlations. The second step is to expand interaction into components which are localized on different momentum scales. This helps in dealing with the local regularity properties of correlation functions. The technical difficulty is to mix these two expansions in a manageable way and to systematically extract contributions which require renormalization.  Correlation inequalities methods instead employ discrete approximations, such as lattice approximations, whose specific algebraic properties allow for establishing bounds on a sufficiently broad class of observables.

While expansion methods can be applied to stochastic quantization, as evidenced in works such as {\cite{D90,JLS96}}, we look here for a \tmtextit{stochastic analytic approach} leveraging the intrinsic features of SQ. Parisi {\cite{P81}} presented an early non-rigorous discussion of correlations within the SQ approach and studied how to estimate them directly via computer simulations. In this paper we introduce two simple, general and direct methods to study correlations in SQ applying them to the elliptic SQ of the exponential model~\eqref{intro-renorm-SPDE}:
\begin{description}
  \item[Coupling approach] It is possible to infer the decay of truncated
  correlations by proving that the solutions to the SQ equation exhibit almost  independent behaviour in different regions of space. This can be achieved by coupling the solution to two independent copies by suitably choosing the driving noises. As far as our knowledge extends, it has been Funaki {\cite{funakiReversibleMeasuresMultidimensional1991}} who first
  introduced this idea in the context of equilibrium dynamics of
  Ginzburg--Landau continuum models.
  
  \item[Malliavin calculus approach] Parisi~{\cite{P81}} suggests to study
  variations of the SQ equations in order to infer truncated two-point
  correlations. His observation can actually be made precise and more general using the stochastic calculus of variations, i.e. the Malliavin
  calculus {\cite{Nualart2006}}, and computing derivatives of the solutions to  the SQ equation w.r.t. the driving noise $\xi$.
\end{description}
These two approaches will be used to prove the following statement about a
general class of truncated covariances:

\begin{theorem}
  \label{thm-main}Let $F_1, F_2$ be Lipschitz and functionals on $B_{p, p, \ell}^s (\mathbb{R}^4)$ and $f$ be a given smooth
  function supported in an open ball of unit radius around the origin. Then we  have the following exponential decay
  \begin{equation}
    | \tmop{Cov} (F_1 (f \cdot \varphi (\cdot + x_1)), F_2 (f \cdot \varphi
    (\cdot + x_2))) | \leqslant M  e^{- c | x_1 - x_2 |}, \label{corrDecay}
  \end{equation}
  for all $x_1, x_2 \in \mathbb{R}^4$ where the constant $M$ depends on $m, f, F_1, F_2$, the constant $c$ depends on $m$ but both are independent of $x_1, x_2$.
\end{theorem}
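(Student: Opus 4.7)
The plan is to follow the two strategies announced in the introduction, both of which reduce the bound~\eqref{corrDecay} to quantitative estimates on how the solution $\varphi$ of~\eqref{intro-renorm-SPDE} depends on the driving noise $\xi$ in disjoint regions of $\mathbb{R}^4$. The common mechanism is the following: the (formal) linearisation of~\eqref{intro-renorm-SPDE} is governed by the elliptic operator $-\Delta + m^2 + \alpha^2 \exp(\alpha \varphi - \infty)$, whose principal part is massive and whose zeroth order perturbation is non-negative, so its Green's function decays exponentially at rate at least $m$. Both approaches convert this pointwise decay into a decay estimate for the covariance.

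\textbf{Coupling approach.} I would fix a smooth partition of unity $\chi_1 + \chi_2 = 1$ on $\mathbb{R}^4$ with $\chi_i$ equal to one on a ball of radius $|x_1 - x_2|/3$ around $x_i$, and split the white noise as $\xi = \xi_1 + \xi_2$ with $\xi_i = \chi_i \xi$. Let $\tilde\xi_1, \tilde\xi_2$ be independent copies of $\xi_1, \xi_2$, and let $\varphi^{(i)}$ be the solution of~\eqref{intro-renorm-SPDE} driven by the noise obtained from $\xi$ by replacing the $j$-th piece ($j \neq i$) with its tilded copy. Then $F_1(f \cdot \varphi^{(1)}(\cdot + x_1))$ and $F_2(f \cdot \varphi^{(2)}(\cdot + x_2))$ are independent, and a short telescoping bounds the covariance in~\eqref{corrDecay} by the Lipschitz constants of $F_1, F_2$ times the $B^s_{p,p,\ell}$ norm of $\varphi - \varphi^{(i)}$ localised on the unit ball around $x_i$. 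The latter satisfies an elliptic equation driven only by noise supported away from $x_i$; applying a weighted a priori estimate with exponentially growing weights to the difference equation, and exploiting monotonicity of $\exp(\alpha \cdot)$, should produce the decay $e^{-c |x_1 - x_2|}$.

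\textbf{Malliavin approach.} Applied to $X = F_1(f \cdot \varphi(\cdot + x_1))$ and $Y = F_2(f \cdot \varphi(\cdot + x_2))$, the Gaussian covariance identity reads
\begin{equation}
  \tmop{Cov}(X, Y) = \int_0^\infty e^{-s}\, \mathbb{E} \langle DX, P_s DY \rangle_{L^2(\mathbb{R}^4)}\, ds,
\end{equation}
where $D$ is the Malliavin derivative with respect to $\xi$ and $P_s$ the Ornstein--Uhlenbeck semigroup on Wiener space. Differentiating~\eqref{intro-renorm-SPDE} shows that $h_z(y) = D_z \varphi(y)$ solves $(- \Delta + m^2 + \alpha^2 \exp(\alpha \varphi - \infty)) h_z = \delta_z$, and a Feynman--Kac representation together with the positivity of the potential yields $|h_z(y)| \lesssim e^{- m |y - z|}$ in a suitable Besov norm. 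Since $F_i$ is Lipschitz and $f$ localises the field on the unit ball around $x_i$, the chain rule gives $|D_z F_i(f \cdot \varphi(\cdot + x_i))| \lesssim e^{- m |z - x_i|}$, and Cauchy--Schwarz together with contractivity of $P_s$ turn the covariance identity into the desired bound.

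\textbf{Main obstacle.} The nonlinearity $\alpha \exp(\alpha \varphi - \infty)$ is a genuine random distribution and not a classical function, so both the comparison argument and the linearisation underlying the Malliavin calculation must be carried out in the renormalised/paracontrolled framework of~\cite{AVG2021}. The delicate point is to show that the resulting constants are uniform in $x_1, x_2$ and in the mollifications used to give meaning to the equation, and that the linearised massive operator still delivers the claimed exponential decay once renormalisation is properly taken into account. A natural strategy is to work first with mollified noise, prove the bounds with constants independent of the regularisation, and then pass to the limit via the existing well-posedness theory.
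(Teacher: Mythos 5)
Your Malliavin sketch matches the paper's second proof in all essentials: the covariance identity $\tmop{Cov}(X,Y)=\int_0^\infty e^{-s}\,\mathbb{E}\langle DX,P_sDY\rangle\,ds$ is the integral form of the resolvent identity $F-\mathbb{E}[F]=\delta(I-L)^{-1}DF$ used in the paper (via Furlan--Gubinelli), the linearised equation for $D_z\varphi$ and its Feynman--Kac bound are exactly what the paper proves (Theorem~\ref{thm-MallDiff} after truncating the nonlinearity by $K_R$ and justifying Malliavin differentiability through Tindel's criterion), and the passage from the pointwise decay of $D_z\varphi$ to the bound on the covariance is carried out in the paper by an explicit Fourier/residue computation of the kernel $\int\mathcal{G}_\varepsilon(x-z)\mathcal{G}_\varepsilon(y-z)\,dz$. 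Your ``main obstacle'' paragraph correctly identifies where the real technical work lies and the mollify-then-pass-to-the-limit strategy is precisely what the paper does, so I regard that part as a faithful, if compressed, account of the argument.

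The coupling construction, however, has a genuine gap. You split $\xi=\chi_1\xi+\chi_2\xi$ with a \emph{smooth} partition of unity $\chi_1+\chi_2=1$, replace one piece by an independent copy, and assert that $F_1(f\cdot\varphi^{(1)}(\cdot+x_1))$ and $F_2(f\cdot\varphi^{(2)}(\cdot+x_2))$ are independent. This fails for two reasons. First, $\varphi^{(1)}$ depends on $\chi_1\xi$ and $\varphi^{(2)}$ on $\chi_2\xi$, and for any smooth partition these are correlated ($\mathbb{E}[(\chi_1\xi)(f)(\chi_2\xi)(g)]=\int\chi_1\chi_2 fg\neq0$ on the overlap), so the two coupled solutions are not independent and the cross term in the telescoping does not vanish. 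Second, the modified driving noise $\chi_1\xi+\tilde\xi_2$ has covariance $\chi_1^2+\chi_2^2$, which is strictly less than $1$ in the overlap region; it is therefore not a white noise, so $\varphi^{(1)}$ does not have the law of $\varphi$ and the existing well-posedness theory does not directly produce it. Both defects are unavoidable for smooth $\chi_i$: requiring both independence and the correct law forces $\chi_1\chi_2=0$ a.e. together with $\chi_1^2+\chi_2^2=1$, i.e.\ indicator functions. This is exactly what the paper does, taking two \emph{disjoint} balls $D_1,D_2$ of radius $|x_1-x_2|/2$ around $x_1,x_2$ and setting $\xi_i=\mathbbm{1}_{D_i}\xi+\mathbbm{1}_{D_i^c}\zeta_i$ with $\zeta_1,\zeta_2$ fresh independent white noises. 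Each $\xi_i$ is then a genuine white noise equal to $\xi$ on $D_i$, and $\xi_1,\xi_2$ are independent because $D_1\cap D_2=\emptyset$. The rest of your plan (telescoping via the Lipschitz bound, weighted energy estimate for the difference equation, monotonicity of the exponential nonlinearity) is in the right direction: the paper tests the equation for $\chi_\varepsilon=\varphi_\varepsilon-\varphi_{1,\varepsilon}$ against $\bar\rho^{\mathfrak{p}}|\chi_\varepsilon|^{\mathfrak{p}-2}\chi_\varepsilon$ with an exponential weight $\rho(x)=e^{-\beta m|x-x_1|}$ localised by a cutoff $\theta$, exploits $Q_\varepsilon>0$, and feeds in a priori moment bounds for $\bar\varphi_\varepsilon$ and $X_\varepsilon-X_{1,\varepsilon}$; but the argument only closes because $\xi-\xi_1$ vanishes identically on $D_1$, which your partition does not ensure.
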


\begin{remark}
  Here $\tmop{Cov} (F, G) \assign \mathbb{E} [F G] -\mathbb{E} [F] \mathbb{E}[G]$ as usual and $(f \cdot \varphi (\cdot + x)) (\phi)  := \varphi (f (\cdot - x) \phi (\cdot))$ for every test function $\phi$.
\end{remark}
In particular we prove that the solution of SQE~\eqref{intro-renorm-SPDE}
satisfies (formally),
\begin{equation}
  | \tmop{Cov} (\varphi (x_1), \varphi (x_2)) | \lesssim e^{- c | x_1 - x_2
  |}, \forall x_1, x_2 \in \mathbb{R}^4 . \label{intro-corr-est}
\end{equation}
It follows from Theorem~\ref{thm-main} that the exponential EQFT in two dimensions has a mass gap, a fact first proven in {\cite{AHK74}} via correlation inequalities for the lattice approximation.

These approaches are general enough to be applicable to other EQFT models
like $P(\varphi)_2$ or $\Phi^4_3$ models. However a fundamental difficulty
presents itself in establishing the required apriori estimates for the
coupling method or in controlling the decay of Malliavin derivative in the
Malliavin method. Both these difficulties originate in the lack of convexity
of the renormalized interaction for a general EQFT. A similar problem is
present in the analysis of logarithmic Sobolev inequalities for EQFT in
bounded volumes {\cite{bauerschmidtLogSobolevInequality2021,bauerschmidtLogSobolevInequalityVarphi2022}} especially for polynomial models. It also manifests in controlling the infinite volume limit of EQFT via stochastic
quantization {\cite{GH2019,gubinelliPDEConstructionEuclidean2021,barashkovVariationalMethodEuclidean2021,gubinelliStochasticQuantisationFractional2023}}, leading to a major obstacle in establishing uniqueness of the infinite volume solutions to the SQ equation.

Fortunately, these difficulties do not show up in the exponential model
because its renormalization is multiplicative and it does not spoil the convex
character of the interaction. For this reason our methods could be readily
applied to obtain decay of correlations for the Sinh--Gordon model studied
in {\cite{BV2021arXiv}}. Another model where uniqueness and correlations can
be controlled via stochastic quantization is the Sine--Gordon model (for large
mass and up the first renormalization threshold), studied by Barashkov via the
variational method in {\cite{barashkovStochasticControlApproach2022}}. Let us also mention that, inspired by the present paper, the coupling method has been already used to show decay of correlations for Euclidean fermionic QFTs and and for sine-Gordon Euclidean QFTs via the FBSDE SQ method, respectively in {\cite{devecchiStochasticAnalysisSubcritical2022}} and {\cite{GubinelliMeyer2024}}.

Let us stress that proving uniqueness of (any kind of) stochastic
quantization and establishing decay of correlation of models like $\Phi^4_{2,
3}$ at high temperature is still largely an open problem which should be
considered, in our opinion, as a crucial test to evaluate the effectivity of
stochastic quantization as a constructive tool in quantum field theory. The
present work is a preliminary step in the direction of understanding better
this problem, and in general in devising appropriate tools to study
stochastically quantized EQFTs.

\begin{acknowledgments*}
  Most of this work was written when the last author was a post-doctoral fellow at Universit{\"a}t Bielefeld, Germany. The financial support from the German Science Foundation DFG through the Research Unit FOR 2402 is gratefully acknowledged.
\end{acknowledgments*}

{\noindent}{\tmstrong{Plan of the paper. }}After introducing notations and definitions of function spaces in Section~\ref{sec-notation}, the paper is structured into two main parts. In Section~\ref{sec-coupling}, we present a proof of Theorem~\ref{thm-main} utilizing the coupling method, commencing with a review of essential results from~{\cite{AVG2021}}. Following this, in Section~\ref{sec-malliavin}, we provide the Malliavin calculus proof of Theorem~\ref{thm-main}, beginning with a summary of relevant tools. The paper concludes with Appendix~\ref{sec-appendix}, where we revisit a few necessary results from the literature and establish the existence and uniqueness of solutions to the approximate equation~\eqref{SPDE-Kregapprox0}.{\medskip}

\subsection{Notations}\label{sec-notation}
In this section we describe some notations and definitions of function spaces used across the whole paper. Some approach depending notations which are also used in the paper are discussed in the corresponding sections.
\begin{itemizedot}
  \item Throughout the paper, we use the notation $a \lesssim b$ if there exists a constant $c > 0$, independent of the variables under consideration, such that $a \leqslant c b$. If we want to emphasize the dependence of $c$ on the variable $x$, then we write $a (x) \lesssim_x b (x)$. The symbol $\assign$ means that the right hand side of the equality defines the left hand side.
  
  \item We set $\mathcal{L} \assign - \Delta + m^2$.
  
  \item For a distribution $\varphi$, a smooth function $f$ and $x \in
  \mathbb{R}^d$, we define the translated distribution $(\varphi (\cdot + x))(\phi) = \varphi (\phi (\cdot - x))$ for all test functions $\phi$ and by $f\cdot \varphi (\cdot + x)$ we denote the multiplication of a smooth function $f$ and distribution $\varphi (\cdot + x)$.
  
  \item By $\mathbb{N}$ we understand the set of natural numbers $\{ 1, 2, \ldots \}.$ For $k \in \mathbb{N} \cup \{ 0 \}$, we write $C^k  (\mathbb{R}^d)$ to denote the set of real valued functions which are
  differentiable up to $k$-times and the $k$-th derivative is continuous. We write $C (\mathbb{R}^d)$ for $k = 0$ and the topology we consider on this space is uniform norm topology. By $C_c^k (\mathbb{R}^4)$ we mean the collection of functions in $C^k  (\mathbb{R}^d)$ having compact support. We denote the the space of smooth functions having compact support by $C_c^{\infty} (\mathbb{R}^4)$. 
  
  \item For any $\ell > 0$ and weight $r_{\ell} (x) \assign (1 + | x |^2)^{-
  \ell / 2}$, by $C_{\ell}^0 (\mathbb{R}^d)$ we denote the space of continuous functions on $\mathbb{R}^d$ such that
  \[ \| f \|_{C_{\ell}^0} {\assign \sup_{x \in \mathbb{R}^d}} 
     | f (x) r_{\ell} (x) | < \infty . \]
  \item By symbol $L^p_{\ell} (\mathbb{R}^d), p \in [1, \infty],$ we mean the
  Banach space of all (equivalence classes of) $\mathbb{R}$-valued weighted
  $p$-integrable functions on $\mathbb{R}^d$. The norm in $L^p_{\ell}
  (\mathbb{R}^d), 1 \leqslant p < \infty$ is given by
  \[ \| f \|_{L^p_{\ell}} \assign \left[ \int_{\mathbb{R}^d} |
     f (y) r_{\ell} (y) |^p \mathd y \right]^{1 / p}, \qquad f \in L^p_{\ell}
     (\mathbb{R}^d) . \]
  For $p = \infty$ we understand it with the usual modification. If $\ell = 0$
  we only write $L^p (\mathbb{R}^d)$ instead $L^p_0 (\mathbb{R}^d)$. Sometimes
  we also use weight function $r_{\lambda, \ell} (x) \assign (1 + \lambda | x
  |^2)^{- \ell / 2}$, for $\lambda, \ell >0$,  and in this case we define $L^p_{\lambda, \ell}
  (\mathbb{R}^d)$ by writing $r_{\lambda, \ell}$ in place of $r_{\ell}$ in
  definition of $L^p_{\ell} (\mathbb{R}^d)$. Similarly we define $L^p_{\ell}(E)$ and $L^p_{\lambda, \ell}(E)$ for an open subset $E \subset \mathbb{R}^d$. 
  
  \item Let $s$ be a real number and $(p, q)$ be in $[1, \infty]^2$. The
  weighted Besov space $B_{p, q, \ell}^s (\mathbb{R}^d)$ consists of all
  tempered distributions $f \in \mathcal{S}' (\mathbb{R}^d)$ such that the
  norm
  \[ \| f \|_{B_{p, q, \ell}^s} \assign \left[ \sum_{j \geqslant - 1} 2^{s j
     q} \| \Delta_j (f) \|_{L^p_{\ell} (\mathbb{R}^d)}^q \right]^{1 / q} \]
  is finite, where $\Delta_j$ are the non-homogeneous dyadic blocks. See
  Appendix A of~{\cite{AVG2021}} for details and properties of $B_{p, q,
  \ell}^s (\mathbb{R}^d)$. We set $C_{\ell}^2(\mathbb{R}^d) := B_{\infty, \infty,
  \ell}^2 (\mathbb{R}^d)$. 
  
  \item For $r > 0, x \in \mathbb{R}^d,$ we denote an open ball of radius $r$
  around $x$ by $B (x, r)$. We also use $d (x, S)$ to define the distance
  between the point $x \in \mathbb{R}^d $ and set $S \subset \mathbb{R}^d .$
  
  \item Let $\mathfrak{a}$ be an auxiliary (radial) smooth, compactly
  supported function such that $\tmop{supp} \mathfrak{a} \subset B (0, 1)$,
  $\int \mathfrak{a} (x) d x = 1,$ and $\mathfrak{a}_{\varepsilon} (x) \assign
  \varepsilon^{- 4} \mathfrak{a} (x / \varepsilon), x \in \mathbb{R}^4$. Note
  that $\tmop{supp} \mathfrak{a}_{\varepsilon} \subset B (0, \varepsilon) .$
\end{itemizedot}
Note that to save space we do not write the integration limit and the measure in the case when it is easily understood from the context.

\section{The coupling approach}\label{sec-coupling}
In this approach towards to proof of Theorem~\ref{thm-main} we first
prove~\eqref{corrDecay} for a random field $\varphi_{\varepsilon}$ which solves an approximation~\eqref{approxSPDE} of SPDE~\eqref{intro-renorm-SPDE}. Then due to Fatou's lemma we pass to the limit $\varepsilon \rightarrow 0$ and obtain \eqref{corrDecay} for $\varphi$. We only need to consider the case of large $l \assign | x_1 - x_2 |$ in detail as for small $l$ the estimate \eqref{corrDecay} holds trivially.

Let us now sketch briefly the idea of the coupling approach. We consider two open balls $D_1$ and $D_2$ in $\mathbb{R}^4$ of radius $l / 2$ with centers
$x_1$ and $x_2$. Further, we take
two copies of Gaussian independent space white noises $\zeta_1$ and $\zeta_2$
and define, for $i = 1, 2$, \
\[ \xi_i \assign \mathbbm{1}_{D_i} \xi +\mathbbm{1}_{D_i^c} \zeta_i . \]
In this way, in $D_i$ we have that $\xi = \xi_i$ for $i = 1, 2$, while $\xi_1$
and $\xi_2$ are independent everywhere. We let $X_{\varepsilon}, X_{1,
\varepsilon}$ and $X_{2, \varepsilon}$ be the solutions to linear part (cfr.
\eqref{OU}) of the approximations of the equation~\eqref{intro-renorm-SPDE}
with noises replaced by $\xi_{\varepsilon}$, $\xi_{1, \varepsilon}$ and
$\xi_{2, \varepsilon}$, respectively. Therefore $X_{1, \varepsilon}$ and
$X_{2, \varepsilon}$ are independent while we will have $X_{i, \varepsilon}
\approx X_{\varepsilon}$ in $D_i$. By stability estimates for
eq.~\eqref{intro-renorm-SPDE} we can derive estimates of the form
(cfr.~\eqref{eqn12}) \
\begin{align*}
	 \mathbb{E} & [\| f \cdot \varphi_{\varepsilon} (\cdot + x_i) - f \cdot
	\varphi_{i, \varepsilon} (\cdot + x_i) \|_{B_{p, p,
			\ell}^s}^{\mathfrak{p}}]  \\
		& \qquad \lesssim  e^{-\mathfrak{c} \left( 1 -
			\frac{l}{8} \right)} (\mathbb{E} [\| X_{\varepsilon} - X_{i, \varepsilon}
		\|_{L^{\mathfrak{p}}}^{\mathfrak{p}}] +\mathbb{E} [\|
		\bar{\varphi}_{\varepsilon} \|_{L^{\mathfrak{p}}}^{\mathfrak{p}}]
		+\mathbb{E} [\| \bar{\varphi}_{i, \varepsilon}
		\|_{L^{\mathfrak{p}}}^{\mathfrak{p}}]
\end{align*}
for some $\mathfrak{c}$ which depends on $m$ and $\mathfrak{p}$, where
$\mathfrak{p} \in [2, \infty)$ is fixed. In the above we have
$\varphi_{\varepsilon} = \bar{\varphi}_{\varepsilon} + X_{\varepsilon}$ and
$\varphi_{i, \varepsilon} = \bar{\varphi}_{i, \varepsilon} + X_{i,
\varepsilon}$, where ${\varphi}_{\varepsilon}$ and ${\varphi}_{i,
\varepsilon}$ respectively, are the unique solutions to the regularized SPDE
\eqref{approxSPDE} with $\xi_{\varepsilon}$ and $\xi_{i, \varepsilon}$ as detailed in Subsection \ref{subsec-coupling-notation}. This
estimate allows to replace $\varphi_{\varepsilon}$ by $\varphi_{i,
\varepsilon}$ in $D_i$ by paying a small error of the order $e^{- c l}$ for some $c>0$ (independent of $x_i$). Since
$\varphi_{1, \varepsilon}$ and $\varphi_{2, \varepsilon}$ are independent,
from the last estimate we can conclude easily the exponential decay for
Lipschitz observables, see Subsection \ref{subsec-coupling-proof} for details.

\subsection{Preliminaries}\label{subsec-coupling-notation}

In this subsection we summarize the steps, with another suitably modified
approximation, of the proof from~{\cite{AVG2021}}, which also set further
required notation. The main result of~{\cite{AVG2021}}, which is about the
existence of a unique solution to the singular SPDE~\eqref{intro-renorm-SPDE},
is based on the Da Prato--Debussche
trick~{\cite{dapratoStrongSolutionsStochastic2003}} and the fact that the Wick
exponential is a positive measure.
\begin{itemizedot}
  \item Let us consider a complete probability space $(\Omega, \mathfrak{F},
  \mathbb{P})$, which satisfies the usual hypothesis, and $\xi$ as Gaussian
  white noise on $\mathbb{R}^4$ defined on $(\Omega, \mathfrak{F},
  \mathbb{P})$.
  
  \item Let $X$ be the solution to $\mathcal{L}X = \xi$. The existence and
  uniqueness of such $X \in B_{q, q, \ell}^{- \delta} (\mathbb{R}^4)$ for
  every $q \in [1, \infty], \delta > 0$ and $\ell > 0$ is proved
  in~{\cite{GH2019}}.
  
  \item To avoid clumsy notation we write $\eta \assign \exp^{\diamond}
  (\alpha \mathcal{L}^{- 1} \xi)$ for the renormalized version of the
  distribution $\exp (\alpha \mathcal{L}^{- 1} \xi - \infty)$, where
  $\exp^{\diamond}$ denotes the Wick exponential of the Gaussian distribution
  $X =\mathcal{L}^{- 1} \xi$.
  
  \item The first step in giving a meaning to equation
  \eqref{intro-renorm-SPDE} is to take the decomposition $\varphi =
  \bar{\varphi} + X$. Then observe that formally $\bar{\varphi} $ satisfies
  \begin{equation}
    \begin{array}{l}
      \mathcal{L} \bar{\varphi} + \alpha \exp (\alpha \bar{\varphi}) \eta = 0.
    \end{array} \label{goodSPDE}
  \end{equation}
  \item For any $\varepsilon > 0$ let us set $\xi_{\varepsilon} \assign
  \mathfrak{a}_{\varepsilon} \ast \xi$ where $\ast$ denotes convolution. Note
  that
  \[ \begin{array}{l}
       \eta = \sum_{k = 0}^{\infty} \frac{\alpha^k}{k!}  (\mathcal{L}^{- 1}
       \xi)^{\diamond k},
     \end{array} \]
  where $\diamond$ denotes the Wick product and $(\mathcal{L}^{- 1} \xi)^{\diamond
  k}$= $\underbrace{\mathcal{L}^{- 1} \xi \diamond \mathcal{L}^{- 1} \xi
  \diamond \cdots \diamond \mathcal{L}^{- 1} \xi}_{k - \tmop{times}} =
  X^{\diamond k}$. By denoting $X_{\varepsilon} =\mathcal{L}^{- 1}
  \xi_{\varepsilon}$ as the unique smooth solution to
  $\mathcal{L}X_{\varepsilon} = \xi_{\varepsilon}$, we set
  $\eta_{\varepsilon}$ as the following positive measure
  \begin{equation}\label{def-eta-eps}
      \eta_{\varepsilon} (\mathd z) = \exp^{\diamond} (\alpha \mathcal{L}^{-
       1} \xi_{\varepsilon}) \mathd z = \exp (\alpha \mathcal{L}^{- 1}
       \xi_{\varepsilon} - C_{\varepsilon}) \mathd z,
  \end{equation}
  where $C_{\varepsilon} \assign \frac{\alpha^2}{2} \mathbb{E} [|
  X_{\varepsilon} |^2]$.
  
  Moreover, from Section 3.1 of~{\cite{AVG2021}}, we know that
  \[ \begin{array}{l}
       \eta_{\varepsilon} = \sum_{k = 0}^{\infty} \frac{\alpha^k}{k!} 
       (\mathcal{L}^{- 1} \xi_{\varepsilon})^{\diamond k},
     \end{array} \]
  and, for $| \alpha | < 4 \sqrt{2} \pi, p \in (1, 2], s \leqslant -
  \frac{\alpha^2 (p - 1)}{(4 \pi)^2}$ and $\ell > 0$ large enough, \
  $\eta_{\varepsilon} \rightarrow \eta,$ as $\varepsilon \rightarrow 0$, in
  probability in $B_{p, p, \ell}^s (\mathbb{R}^4) .$ Note that the convergence
  $\eta_{\varepsilon} \rightarrow \eta$ in probability implies that there
  exists a sequence $\varepsilon_n$, which converges to $0$, such that
  $\eta_{\varepsilon_n} \rightarrow \eta,$ as $\varepsilon_n \rightarrow 0,$
  in $B_{p, p, \ell}^s (\mathbb{R}^4)$ $\mathbb{P}$-almost surely. We will fix
  this sequence $\{ \varepsilon_n \}_{n \geqslant 1}$ in the whole paper. \
  
  \item By Theorems~21 and~25 from~{\cite{AVG2021}} we have that for any $|
  \alpha | < \alpha_{\max}$, there exist $p, s, \delta$ satisfying
  \begin{equation}
    1 < p \leqslant 2, \qquad p < \frac{2 (4 \pi)^2}{\alpha^2}, \qquad - 1 < s \leqslant -
    \frac{\alpha^2 (p - 1)}{(4 \pi)^2}  \quad \infixand \quad 0 < \delta < s + 1,
    \label{rangeOfParams}
  \end{equation}
  the equation \eqref{goodSPDE} has a unique solution $\bar{\varphi}$ in
  $B_{p, p, \ell + \delta'}^{s + 2 - \delta} (\mathbb{R}^4)$, $\mathbb{P}$-almost surely,
  for large enough $\ell > 0$ and small enough $\delta' > 0$. Moreover,
  \begin{equation}
    \begin{array}{l}
      \alpha \bar{\varphi} \leqslant 0
    \end{array} \label{neg-phi-bar}
  \end{equation}
  holds true. Furthermore, for $\{ \varepsilon_n \}_{n \geqslant 1}$ as fixed
  above, $\bar{\varphi}_{\varepsilon_n} \rightarrow \bar{\varphi} $ in $B_{p,
  p, \ell + \delta'}^{s + 2 - \delta} (\mathbb{R}^4)$ as $n \rightarrow
  \infty$, $\mathbb{P}$-almost surely, where $\bar{\varphi}_{\varepsilon_n}$ solves the
  approximate equation
  \begin{equation}
    \mathcal{L} \bar{\varphi}_{\varepsilon_n} + \alpha \exp (\alpha
    \bar{\varphi}_{\varepsilon_n}) \eta_{\varepsilon_n} = 0
    \label{good-approx-SPDE}
  \end{equation}
  uniquely in $C_{\ell}^0 (\mathbb{R}^4)$ such that $\alpha
  \varphi_{\varepsilon_n} \leqslant 0$.
  
  \item Thus, for $(p, s)$ such that \eqref{rangeOfParams} holds and $\ell >
  0$ large enough, $\varphi = X + \bar{\varphi} \in B_{p, p, \ell}^s
  (\mathbb{R}^4),$ $\mathbb{P}$-almost surely, solves SPDE \eqref{intro-renorm-SPDE}
  uniquely. If we consider the following approximation of
  SPDE~\eqref{intro-renorm-SPDE}
  \begin{equation}
    \begin{array}{l}
      \mathcal{L} \varphi_{\varepsilon_n} + \alpha \exp (\alpha
      \varphi_{\varepsilon_n} - C_{\varepsilon_n})
      =\mathfrak{a}_{\varepsilon_n} \ast \xi,
    \end{array} \label{approxSPDE}
  \end{equation}
  then, from the proof of Theorem 35 of~{\cite{AVG2021}}, we know that
  $\varphi_{\varepsilon_n} = \bar{\varphi}_{\varepsilon_n} +
  X_{\varepsilon_n}$ is the unique solution to~\eqref{approxSPDE} and
  $\varphi_{\varepsilon_n} \rightarrow \varphi$ in $B_{p, p, \ell}^s
  (\mathbb{R}^4)$, $\mathbb{P}$-almost surely as $n \rightarrow \infty$. 
\end{itemizedot}
Let us recall that we have fixed the sequence of $\{ \varepsilon_n \}_{n \in
\mathbb{N}}$ which converges to $0$ as $n \rightarrow \infty$. To shorten the
notation, we will write $\varepsilon \rightarrow 0$ equivalently to $n
\rightarrow \infty$.

\subsection{Proof of Theorem \ref{thm-main}}\label{subsec-coupling-proof}

\qquad Assume that $| x_1 - x_2 | \leqslant 8.$ It is trivial to
get~\eqref{corrDecay} because its l.h.s. is bounded. Consider now the
complementary case and let $l \assign | x_1 - x_2 | > 8$. Take two open balls
$D_1$ and $D_2$ in $\mathbb{R}^4$ of radius $l / 2$ with centers $x_1$ and
$x_2,$ respectively. \ Further, we take two copies of Gaussian independent
space white noises $\zeta_1$ and $\zeta_2$ defined on $(\Omega, \mathfrak{F},
\mathbb{P})$. Define the processes $X_1$ and $X_2$ as follows:
\begin{equation}
  \quad \begin{array}{l}
    \mathcal{L}X_1 =\mathbbm{1}_{D_1} \xi +\mathbbm{1}_{D_1^c} \zeta_1
    \backassign \xi_1, \qquad \textrm{\tmop{and}} \qquad \mathcal{L}X_2
    =\mathbbm{1}_{D_2} \xi +\mathbbm{1}_{D_2^c} \zeta_2 \backassign \xi_2 .
    \label{OU}
  \end{array}
\end{equation}
Note that that $\xi - \xi_i = 0$ on $D_i, i = 1, 2$ in the sense of
distributions $\mathbb{P}$-a.s. Moreover, since \ $D_1 \cap D_2 = \emptyset$,
the processes $X_1$ and $X_2$ are independent. Indeed, by setting
$(\mathbbm{1}_{D_i} \xi) (f) \assign \xi (\mathbbm{1}_{D_i} f)$, we observe
that for $f, g \in L^2 (\mathbb{R}^4),$
\[ \mathbb{E} [\xi_1 (f) \xi_2 (g)] = \langle \mathbbm{1}_{D_1} f,
   \mathbbm{1}_{D_2} g \rangle_{L^2} = 0. \]

Let us set $\xi_{\varepsilon} \assign \mathfrak{a}_{\varepsilon} \ast \xi$ and
$\xi_{i, \varepsilon} \assign \mathfrak{a}_{\varepsilon} \ast \xi_i, i = 1, 2$
for the whole subsection. Let $\varphi_{\varepsilon}$ and $\varphi_{i,
\varepsilon}$, respectively, be the unique solutions to the following
regularized version of eq.~\eqref{intro-renorm-SPDE}
\begin{equation}\label{b-approxSPDE}
    \mathcal{L} \varphi_{\varepsilon} + \alpha \exp (\alpha
    \varphi_{\varepsilon} - C_{\varepsilon}) = \xi_{\varepsilon}, 
\end{equation}
and
\begin{equation}\label{b-approxSPDE-1}
    \mathcal{L} \varphi_{i, \varepsilon} + \alpha \exp (\alpha \varphi_{i,
    \varepsilon} - C_{\varepsilon}) = \xi_{i, \varepsilon},
\end{equation}
where $C_{\varepsilon} \assign \frac{\alpha^2}{2} \mathbb{E} [|
X_{\varepsilon} |^2] = \frac{\alpha^2}{2} \mathbb{E} [| X_{i, \varepsilon}
|^2] $ for $X_{\varepsilon} =\mathcal{L}^{- 1} \xi_{\varepsilon}$ and $X_{i,
\varepsilon} =\mathcal{L}^{- 1} \xi_{i, \varepsilon}, i = 1, 2$. Note that due
to stationarity in space of the white noise $\xi$, the constant
$C_{\varepsilon}$ does not depend on $x \in \mathbb{R}^4$.

Next, let us fix $\mathfrak{p} \in [2, \infty)$ and consider $\varphi$, $\varphi_1$ and $\varphi_2$ as the unique solutions to the SPDE~\eqref{intro-renorm-SPDE} with noises $\xi$, $\xi_1$ and $\xi_2$, respectively. Their existence has been summarized in Subsection \ref{subsec-coupling-notation}.  Then observe that,  since $F_1$ and $F_2$ are Lipschitz and bounded functionals, using the H{\"o}lder inequality we get the following
\begin{align}\label{coupling-cov-est1}
    & | \tmop{Cov} (F_1 (f \cdot \varphi (\cdot + x_1)), F_2 (f \cdot \varphi
    (\cdot + x_2))) | \nonumber\\
    & \quad \leq  | \mathbb{E} [(F_1 (f \cdot \varphi (\cdot + x_1)) - F_1 (f
    \cdot \varphi_1 (\cdot + x_1))) F_2 (f \cdot \varphi (\cdot + x_2))] | \nonumber\\
    & \quad \quad + | \mathbb{E} [F_1 (f \cdot \varphi_1 (\cdot + x_1))  (F_2 (f \cdot
    \varphi (\cdot + x_2)) - F_2 (f \cdot \varphi_2 (\cdot + x_2)))] | \nonumber\\
    & \quad \quad + | \mathbb{E} [F_1 (f \cdot \varphi_1 (\cdot + x_1)) F_2 (f \cdot
    \varphi_2 (\cdot + x_2))] -\mathbb{E} [F_1 (f \cdot \varphi (\cdot +
    x_1))] \mathbb{E} [F_2 (f \cdot \varphi (\cdot + x_2))] | \nonumber\\
    & \quad \lesssim_{F_1, F_2}   [\mathbb{E} [\| f \cdot \varphi (\cdot + x_1) - f
    \cdot \varphi_1 (\cdot + x_1) \|_{B_{p, p, \ell}^s}^{\mathfrak{p}}]]^{1
    /\mathfrak{p}}  \nonumber\\
    & \quad \quad + [\mathbb{E} [\| f \cdot \varphi (\cdot + x_2) - f \cdot
    \varphi_2 (\cdot + x_2) \|_{B_{p, p, \ell}^s}^{\mathfrak{p}}]]^{1
    /\mathfrak{p}}.  
\end{align}
Here we used that, since the processes $\xi_1$ and $\xi_2$ are independent and the processes $\xi,\xi_1$ and $\xi_2$ have same law, 
\[ \mathbb{E} [F_1 (f \cdot \varphi_1 (\cdot + x_1)) F_2 (f \cdot \varphi_2
   (\cdot + x_2))] -\mathbb{E} [F_1 (f \cdot \varphi (\cdot + x_1))]
   \mathbb{E} [F_2 (f \cdot \varphi (\cdot + x_2))] = 0. \]
But thanks to Fatou's lemma (see Theorem 2.72~of~{\cite{BahouriEtAl2011}}), to
get \eqref{corrDecay} from \eqref{coupling-cov-est1} it is enough to prove
that, for $i = 1, 2$,
\begin{equation}
  \begin{array}{lll}
    \mathbb{E} [\| f \cdot \varphi_{\varepsilon} (\cdot + x_i) - f \cdot
    \varphi_{i, \varepsilon} (\cdot + x_i) \|_{B_{p, p,
    \ell}^s}^{\mathfrak{p}}] & \lesssim  & e^{- c l},
  \end{array} \label{formal-decay}
\end{equation}
uniform in $\varepsilon$, for some $c > 0$ which does not depend on $x_1,
x_2$.

Due to symmetry, it is sufficient to estimate $\mathbb{E} [\| f \cdot
\varphi_{\varepsilon} (\cdot + x_1) - f \cdot \varphi_{1, \varepsilon} (\cdot
+ x_1) \|_{B_{p, p, \ell}^s}^{\mathfrak{p}}]$. For that let $\tilde{D}_1
\assign B \left( x_1, \frac{l}{4} \right)$ and take 
\[\rho (x) \assign e^{- \beta m | x - x_1 |}, \quad x
\in \mathbb{R}^4, \]  
a weight function where we set the value of $\beta$ later.
Further, let us take $\theta$ as a non-negative smooth function supported in
$\tilde{D}_1$ such that $\theta = 1$ in $\bar{D}_1 \assign B \left( x_1,
\frac{l}{8} \right)$. To shorten the notation we also set $\bar{\rho} (x) :=
\theta (x) \rho (x)$.

Since $f$ has support in $B (0, 1)$, by the Besov
embedding Theorem \ref{thm-BesovEmbedding} followed by continuous embedding of
$L_{\ell}^{\mathfrak{p}}(\mathbb{R}^4)$ into $B^0_{\mathfrak{p}, \infty, \ell}(\mathbb{R}^4)$ we get,
where $\chi_{\varepsilon} \assign \varphi_{\varepsilon} - \varphi_{1,
\varepsilon}$,
\begin{align}\label{eqn0}
    & \| f \cdot \varphi_{\varepsilon} (\cdot + x_1) - f \cdot \varphi_{1,
    \varepsilon} (\cdot + x_1) \|_{B_{p, p, \ell}^s}^{\mathfrak{p}} \lesssim
     \| f (\cdot - x_1) \chi_{\varepsilon} \|_{B_{\mathfrak{p}, p,
    \ell}^s}^{\mathfrak{p}} \nonumber \\
    & \qquad \lesssim \int_{B (x_1, 1)} | f (x - x_1) \chi_{\varepsilon} (x)
    |^{\mathfrak{p}} \mathd x ~\leq~ e^{m\mathfrak{p} \beta} \| f \|_{L^{\infty}} \|  \bar{\rho}
    \chi_{\varepsilon} \|^{\mathfrak{p}}_{L^{\mathfrak{p}} (B (x_1, 1))}. 
\end{align}
\dela{
\begin{equation}
  \begin{array}{lll}
    \| f \cdot \varphi_{\varepsilon} (\cdot + x_1) - f \cdot \varphi_{1,
    \varepsilon} (\cdot + x_1) \|_{B_{p, p, \ell}^s}^{\mathfrak{p}} & \lesssim
    & \| f (\cdot - x_1) \chi_{\varepsilon} \|_{B_{\mathfrak{p}, p,
    \ell}^s}^{\mathfrak{p}}\\
    & \lesssim & \int_{B (x_1, 1)} | f (x - x_1) \chi_{\varepsilon} (x)
    |^{\mathfrak{p}} \mathd x\\
    & \leqslant & e^{m\mathfrak{p} \beta} \| f \|_{L^{\infty}} \|  \bar{\rho}
    \chi_{\varepsilon} \|^{\mathfrak{p}}_{L^{\mathfrak{p}} (B (x_1, 1))} .
  \end{array} %\label{eqn0}
\end{equation}}
Towards estimating $\| \bar{\rho} \chi_{\varepsilon}
\|^{\mathfrak{p}}_{L^{\mathfrak{p}} (B (x_1, 1))} $, first we claim that
\begin{equation}
  \begin{array}{l}
    \theta (x) (\xi_{\varepsilon} - \xi_{1, \varepsilon}) (x) = 0 \quad
    \tmop{for} \tmop{all} \quad x \in \mathbb{R}^4 .
  \end{array} \label{theta-noise}
\end{equation}
This is obvious for $x \in \mathbb{R}^4 \setminus \tilde{D}_1$. So let us take
$x \in \tilde{D}_1$. Since 
\begin{equation*}
    (\theta (\xi_{\varepsilon} - \xi_{1, \varepsilon})) (x)  =  \theta (x)
   (\mathfrak{a}_{\varepsilon} \ast (\xi - \xi_1)) (x),
\end{equation*}
\dela{\[ \begin{array}{ccc}
     (\theta (\xi_{\varepsilon} - \xi_{1, \varepsilon})) (x) & = & \theta (x)
   \end{array} (\mathfrak{a}_{\varepsilon} \ast (\xi - \xi_1)) (x), \]}
it is sufficient to show that $(\xi - \xi_1, \mathfrak{a}_{\varepsilon} \ast
g)_{\mathcal{S}', \mathcal{S}} = 0$ for all $g \in C_c^{\infty}
(\tilde{D}_1)$, where $(\cdot, \cdot)_{\mathcal{S}', \mathcal{S}}$ is duality
between Schwartz function $\mathcal{S}$ and Schwartz distribution
$\mathcal{S}'$. But, since $\xi - \xi_1 = 0$ on $D_1$, for this it is enough
to show that $\tmop{supp} (\mathfrak{a}_{\varepsilon} \ast g) \subset D_1$.
This follows because
\[ \begin{array}{l}
     (\mathfrak{a}_{\varepsilon} \ast g) (z) = \int_{\tilde{D}_1}
     \mathfrak{a}_{\varepsilon} (z - y) g (y) \, \mathd y,
   \end{array} \]
and for $z \in D_1^c$ and $y \in \tilde{D}_1$, $| z - y | \geqslant
\frac{l}{4} > \frac{\varepsilon l}{4}$. Hence the claim \eqref{theta-noise}.

Next, observe that $\chi_{\varepsilon}$ satisfies
\begin{equation}
  \begin{array}{ll}
    & \mathcal{L} \chi_{\varepsilon} + Q_{\varepsilon} \chi_{\varepsilon} =
    \xi_{\varepsilon} - \xi_{1, \varepsilon},
  \end{array} \label{approxEqnForDiff}
\end{equation}
where $Q_{\varepsilon} \assign \alpha^2 \int_0^1 \exp \{ \alpha \varphi_{1,
\varepsilon} - C_{\varepsilon} + \Theta \alpha (\varphi_{\varepsilon} -
\varphi_{1, \varepsilon}) \} \mathd \Theta > 0$. Then, testing
\eqref{approxEqnForDiff} with $\bar{\rho}^{\mathfrak{p}} |
\chi_{\varepsilon} |^{\mathfrak{p}- 2} \chi_{\varepsilon}$ and
integrating on $\mathbb{R}^4$ give
\begin{equation}
  \int \bar{\rho}^{\mathfrak{p}} | \chi_{\varepsilon} |^{\mathfrak{p}- 2}
  \chi_{\varepsilon} \mathcal{L} \chi_{\varepsilon} + \int
  \bar{\rho}^{\mathfrak{p}} | \chi_{\varepsilon} |^{\mathfrak{p}}
  Q_{\varepsilon} = 0, \label{eqn1}
\end{equation}
where the noise term vanishes because of~\eqref{theta-noise}. The first term
on the l.h.s. above can be expanded as
\begin{align*}
    \int \bar{\rho}^{\mathfrak{p}} | \chi_{\varepsilon} |^{\mathfrak{p}-
     2} \chi_{\varepsilon} \mathcal{L} \chi_{\varepsilon} & = m^2 \int
     \bar{\rho}^{\mathfrak{p}} | \chi_{\varepsilon} |^{\mathfrak{p}} + \int
     \bar{\rho}^{\mathfrak{p}- 1} | \chi_{\varepsilon} |^{\mathfrak{p}- 2}
     \chi_{\varepsilon} (- \Delta (\bar{\rho} \chi_{\varepsilon})) \\
     & \quad +
     \int \bar{\rho}^{\mathfrak{p}- 1} | \chi_{\varepsilon}
     |^{\mathfrak{p}} \Delta \bar{\rho} + 2 \int \bar{\rho}^{\mathfrak{p}- 1} | \chi_{\varepsilon}
     |^{\mathfrak{p}- 2} \chi_{\varepsilon} \nabla \bar{\rho} \cdot \nabla
     \chi_{\varepsilon} .
\end{align*}
But, since the integration by parts and the definition of divergence give
\begin{align*}
    & 2 \int \bar{\rho}^{\mathfrak{p}- 1} | \chi_{\varepsilon}
     |^{\mathfrak{p}- 2} \chi_{\varepsilon} \nabla \bar{\rho} \cdot \nabla
     \chi_{\varepsilon}  =  \frac{2}{\mathfrak{p}} \int
     \bar{\rho}^{\mathfrak{p}- 1} \nabla \bar{\rho} \cdot \nabla |
     \chi_{\varepsilon} |^{\mathfrak{p}}\\
     & \quad = - \frac{2}{\mathfrak{p}} \int | \chi_{\varepsilon}
     |^{\mathfrak{p}} \tmop{div} (\bar{\rho}^{\mathfrak{p}- 1} \nabla
     \bar{\rho}) = - \frac{2}{\mathfrak{p}} \int | \chi_{\varepsilon}
     |^{\mathfrak{p}}  [(\mathfrak{p}- 1) \bar{\rho}^{\mathfrak{p}- 2} |
     \nabla \bar{\rho} |^2 + \bar{\rho}^{\mathfrak{p}- 1} \Delta \bar{\rho}],
\end{align*}
we have
\begin{align}\label{eqn2}
    \int \bar{\rho}^{\mathfrak{p}} | \chi_{\varepsilon} |^{\mathfrak{p}- 2}
    \chi_{\varepsilon} \mathcal{L} \chi_{\varepsilon} & =  m^2 \int
    \bar{\rho}^{\mathfrak{p}} | \chi_{\varepsilon} |^{\mathfrak{p}} + \int
    \bar{\rho}^{\mathfrak{p}- 1} | \chi_{\varepsilon} |^{\mathfrak{p}- 2}
    \chi_{\varepsilon} (- \Delta (\bar{\rho} \chi_{\varepsilon})) \nonumber \\
    & \quad  + \left( 1 - \frac{2}{\mathfrak{p}} \right) \int
    \bar{\rho}^{\mathfrak{p}- 1} | \chi_{\varepsilon} |^{\mathfrak{p}}
    \Delta \bar{\rho} - \frac{2 (\mathfrak{p}- 1)}{\mathfrak{p}} \int |
    \chi_{\varepsilon} |^{\mathfrak{p}}  \bar{\rho}^{\mathfrak{p}- 2} |
    \nabla \bar{\rho} |^2 . 
\end{align}
Thus, substitution of \eqref{eqn2} into \eqref{eqn1} together with
$Q_{\varepsilon} > 0$ yield
\begin{align}\label{eqn3}
    \int \bar{\rho}^{\mathfrak{p}- 1} | \chi_{\varepsilon}
    |^{\mathfrak{p}- 2} \chi_{\varepsilon} (- \Delta (\bar{\rho}
    \chi_{\varepsilon})) & + \left( 1 - \frac{2}{\mathfrak{p}} \right) \int
    \bar{\rho}^{\mathfrak{p}- 1} | \chi_{\varepsilon} |^{\mathfrak{p}}
    \Delta \bar{\rho} + m^2 \int \bar{\rho}^{\mathfrak{p}} |
    \chi_{\varepsilon} |^{\mathfrak{p}} \nonumber\\
    & \leq \frac{2 (\mathfrak{p}- 1)}{\mathfrak{p}} \int |
    \chi_{\varepsilon} |^{\mathfrak{p}}  \bar{\rho}^{\mathfrak{p}- 2} |
    \nabla \bar{\rho} |^2.
\end{align}
Furthermore, since the integration by parts and the product rule of derivative
give
\begin{align}\label{eqn4}
    \int \bar{\rho}^{\mathfrak{p}- 1} | \chi_{\varepsilon} |^{\mathfrak{p}-
    2} \chi_{\varepsilon} (- \Delta (\bar{\rho} \chi_{\varepsilon})) & =
    \int (\mathfrak{p}- 1) | \chi_{\varepsilon} |^{\mathfrak{p}- 2}
    \chi_{\varepsilon} \bar{\rho}^{\mathfrak{p}- 2} \nabla \bar{\rho} \cdot
    \nabla (\bar{\rho} \chi_{\varepsilon}) \nonumber\\
    &  \quad + \int
    \bar{\rho}^{\mathfrak{p}-
    1} | \chi_{\varepsilon} |^{\mathfrak{p}- 2} \nabla
    \chi_{\varepsilon} \cdot \nabla (\bar{\rho} \chi_{\varepsilon}) \nonumber\\
    &  \quad + \int (\mathfrak{p}- 2)
    \bar{\rho}^{\mathfrak{p}- 1} | \chi_{\varepsilon} |^{\mathfrak{p}- 2}
    \nabla \chi_{\varepsilon} \cdot \nabla (\bar{\rho}
    \chi_{\varepsilon}) \nonumber\\
    & =  (\mathfrak{p}- 1) \int | \chi_{\varepsilon} |^{\mathfrak{p}- 2}
    \bar{\rho}^{\mathfrak{p}- 2} | \nabla (\bar{\rho} \chi_{\varepsilon})
    |^2 \nonumber \\
    & \geqslant  0,
\end{align}
from inequality \eqref{eqn3} we obtain
\begin{equation} \label{eqn5}
    \left( 1 - \frac{2}{\mathfrak{p}} \right) \int \bar{\rho}^{\mathfrak{p}-
    1} | \chi_{\varepsilon} |^{\mathfrak{p}} \Delta \bar{\rho} + m^2 \int
    \bar{\rho}^{\mathfrak{p}} | \chi_{\varepsilon} |^{\mathfrak{p}}
    \leqslant \frac{2 (\mathfrak{p}- 1)}{\mathfrak{p}} \int |
    \chi_{\varepsilon} |^{\mathfrak{p}}  \bar{\rho}^{\mathfrak{p}- 2} |
    \nabla \bar{\rho} |^2 .
\end{equation}
Since $\nabla \rho (x) = - m \beta \frac{x - x_1}{| x - x_1 |} \rho (x)$ for
$x \in \mathbb{R}^4 \setminus \{ x_1 \}$,
\[ \begin{array}{l}
     \Delta \bar{\rho} = \rho \Delta \theta + 2 \nabla \rho \cdot \nabla
     \theta + m^2 \beta^2 \theta \rho, \quad \tmop{and} \quad  | \nabla
     \bar{\rho} | \leqslant | \nabla \theta | \rho + m \beta \rho \theta,
   \end{array} \]
 inequality \eqref{eqn5} yield
\begin{align}
    & \left( 1 - \frac{2}{\mathfrak{p}} \right)  \int \bar{\rho}^{\mathfrak{p}-
    1} | \chi_{\varepsilon} |^{\mathfrak{p}} \rho \Delta \theta   -2m \beta \left( 1 - \frac{2}{\mathfrak{p}} \right) \int \bar{\rho}^{\mathfrak{p}-
    1} | \chi_{\varepsilon} |^{\mathfrak{p}} \left[\rho \frac{x - x_1}{| x - x_1 |}  \cdot \nabla
     \theta  \right] \\
     & \qquad \qquad + m^2 \beta^2 \left( 1 - \frac{2}{\mathfrak{p}} \right) \int \bar{\rho}^{\mathfrak{p}-
    1} | \chi_{\varepsilon} |^{\mathfrak{p}}  \theta \rho + m^2 \int
    \bar{\rho}^{\mathfrak{p}} | \chi_{\varepsilon} |^{\mathfrak{p}} \nonumber\\
    & \leq \frac{4 (\mathfrak{p}- 1)}{\mathfrak{p}} \int |
    \chi_{\varepsilon} |^{\mathfrak{p}}  \bar{\rho}^{\mathfrak{p}- 2} \left[| \nabla \theta |^2 \rho^2 + m^2 \beta^2 \rho^2 \theta^2 \right],  
\end{align}
where to get the r.h.s. terms we also used $(a+b)^2 \leq 2(a^2 + b^2)$, $\forall a,b \in \mathbb{R}$. Consequently, by regrouping the terms together with $\bigg\vert \frac{x - x_1}{| x - x_1 |}  \cdot \nabla\theta \bigg\vert \leq |\nabla \theta|$ we get
\begin{align}\label{eqn6}
    & m^2 \left( 1 + \beta^2 \left( \frac{2 - 3\mathfrak{p}}{\mathfrak{p}}
    \right) \right) \| \bar{\rho}  \chi_{\varepsilon}
    \|_{L^{\mathfrak{p}}}^{\mathfrak{p}} + \left( 1 - \frac{2}{\mathfrak{p}}
    \right) \int \bar{\rho}^{\mathfrak{p}- 1} | \chi_{\varepsilon}
    |^{\mathfrak{p}} \rho \Delta \theta \nonumber\\
    & \quad \leq  ~ \frac{4 (\mathfrak{p}- 1)}{\mathfrak{p}} \int | \nabla \theta
    |^2 \rho^{\mathfrak{p}} \theta^{\mathfrak{p}- 2}  | \chi_{\varepsilon}
    |^{\mathfrak{p}} + 2 m \beta \left( 1 - \frac{2}{\mathfrak{p}} \right)
    \int \bar{\rho}^{\mathfrak{p}- 1} | \chi_{\varepsilon} |^{\mathfrak{p}}
    \rho | \nabla \theta | .
\end{align}
Moreover, since $\theta$ is supported in $\tilde{D}_1$ and $\theta = 1$ on $\bar{D}_1$, \eqref{eqn6}
gives
\begin{align}\label{eqn8}
     m^2 \left( 1 + \beta^2 \left( \frac{2 - 3\mathfrak{p}}{\mathfrak{p}}
    \right) \right) \| \bar{\rho}  \chi_{\varepsilon} \|_{L^{\mathfrak{p}}
    (\tilde{D}_1)}^{\mathfrak{p}} & \leq \frac{4 (\mathfrak{p}-
    1)}{\mathfrak{p}} \int_{\tilde{D}_1 \setminus \bar{D}_1} | \nabla \theta
    |^2 \rho^{\mathfrak{p}} \theta^{\mathfrak{p}- 2}  | \chi_{\varepsilon}
    |^{\mathfrak{p}} \nonumber\\
    &  \quad + 2 m \beta \left( 1 - \frac{2}{\mathfrak{p}} \right)
    \int_{\tilde{D}_1 \setminus \bar{D}_1} \bar{\rho}^{\mathfrak{p}- 1} |
    \chi_{\varepsilon} |^{\mathfrak{p}} \rho | \nabla \theta | \nonumber\\
    &  \quad  + \left( 1 - \frac{2}{\mathfrak{p}} \right) \int_{\tilde{D}_1
    \setminus \bar{D}_1} \bar{\rho}^{\mathfrak{p}- 1} | \chi_{\varepsilon}
    |^{\mathfrak{p}} \rho | \Delta \theta | .
\end{align}
To keep the coefficient of $\| \bar{\rho}  \chi_{\varepsilon} \|_{L^{\mathfrak{p}}}$ positive in the l.h.s. above, we choose $\beta = \beta (\mathfrak{p}) > 0$ so small such that
\begin{equation}
  \begin{array}{l}
    1 + \beta^2 \left( \frac{2 - 3\mathfrak{p}}{\mathfrak{p}} \right) > 0.
  \end{array} \label{eqn-beta}
\end{equation}
To keep the notation simpler we set
\[ K (m, \beta, \mathfrak{p}) \assign m^2 \left( 1 + \beta^2 \left( \frac{2 -
   3\mathfrak{p}}{\mathfrak{p}} \right) \right) . \]
Thus, from \eqref{eqn8} we deduce that
\begin{equation}\label{eqn9}
    K (m, \beta, \mathfrak{p}) \| \bar{\rho}  \chi_{\varepsilon}
    \|_{L^{\mathfrak{p}} (\tilde{D}_1)}^{\mathfrak{p}} \leq
    M_{\theta}^{\mathfrak{p}} \left( \frac{4 (\mathfrak{p}- 1)}{\mathfrak{p}}
    + (2 m \beta + 1) \left( 1 - \frac{2}{\mathfrak{p}} \right) \right)
    \int_{\tilde{D}_1 \setminus \bar{D}_1} \rho^{\mathfrak{p}}  |
    \chi_{\varepsilon} |^{\mathfrak{p}},
\end{equation}
where $M_{\theta} > 0$ is the bound of $\theta$ and its derivatives up to
order $2$.\\

\noindent Further, since
\begin{equation}
  | \rho (x) | \leqslant e^{- m \beta \frac{l}{8}} \qquad \tmop{for} x \in
  \tilde{D}_1 \setminus \bar{D}_1, \label{eqn10}
\end{equation}
by substituting \eqref{eqn9} in \eqref{eqn0} we infer that
\begin{align}\label{eqn11}
    & \| f \cdot \varphi_{\varepsilon} (\cdot + x_1) - f \cdot \varphi_{1,
    \varepsilon} (\cdot + x_1) \|_{B_{p, p, \ell}^s}^{\mathfrak{p}} \nonumber \\
    & \quad \lesssim ~ \frac{e^{m\mathfrak{p} \beta} \| f \|_{L^{\infty}}}{K (m,
    \beta, \mathfrak{p})} M_{\theta}^{\mathfrak{p}} \left( \frac{4
    (\mathfrak{p}- 1)}{\mathfrak{p}} + (2 m \beta + 1) \left( 1 -
    \frac{2}{\mathfrak{p}} \right) \right) \int_{\tilde{D}_1 \setminus
    \bar{D}_1} \rho^{\mathfrak{p}}  | \chi_{\varepsilon} |^{\mathfrak{p}} \nonumber \\
    & \quad \lesssim_{m, \mathfrak{p}, M_{\theta}, \| f \|_{L^{\infty}}} ~~ 
    e^{m\mathfrak{p} \beta \left( 1 - \frac{l}{8} \right)} (\| X_{\varepsilon}
    - X_{1, \varepsilon} \|_{L^{\mathfrak{p}} (\tilde{D}_1 \setminus
    \bar{D}_1)}^{\mathfrak{p}} + \| \bar{\varphi}_{\varepsilon} -
    \bar{\varphi}_{1, \varepsilon} \|_{L^{\mathfrak{p}} (\tilde{D}_1 \setminus
    \bar{D}_1)}^{\mathfrak{p}}).
\end{align}
Thus, by applying $\mathbb{E}$ on both sides we get
\begin{align}\label{eqn12}
    & \mathbb{E} \left[\| f \cdot \varphi_{\varepsilon} (\cdot + x_1) - f \cdot
    \varphi_{1, \varepsilon} (\cdot + x_1) \|_{B_{p, p,
    \ell}^s}^{\mathfrak{p}} \right] \nonumber \\
	& \quad \lesssim_{m, \mathfrak{p}, M_{\theta}, \| f
    \|_{L^{\infty}}} ~~  e^{m\mathfrak{p} \beta \left( 1 - \frac{l}{8} \right)}
    \bigg( \mathbb{E} \left[\| X_{\varepsilon} - X_{1, \varepsilon} \|_{L^{\mathfrak{p}}
    (\tilde{D}_1 \setminus \bar{D}_1)}^{\mathfrak{p}} \right]  \nonumber \\
    &  \quad \quad   + \mathbb{E} \left[\| \bar{\varphi}_{\varepsilon}
    \|_{L^{\mathfrak{p}} (\tilde{D}_1 \setminus \bar{D}_1)}^{\mathfrak{p}} \right]
    +\mathbb{E} \left[\| \bar{\varphi}_{1, \varepsilon} \|_{L^{\mathfrak{p}}
    (\tilde{D}_1 \setminus \bar{D}_1)}^{\mathfrak{p}} \right] \bigg).
\end{align}

To estimate the term $\mathbb{E} [\| X_{\varepsilon} - X_{1, \varepsilon}
\|_{L^{\mathfrak{p}} (\tilde{D}_1 \setminus \bar{D}_1)}^{\mathfrak{p}}]$,
since $\tmop{supp} \mathfrak{a}_{\varepsilon} \subset B (0, \varepsilon),$ we
first infer that $\xi_{\varepsilon} = \xi_{1, \varepsilon}$ on $D_{1,
\varepsilon} \assign B \left( x_1, \frac{l}{2} - \varepsilon \right)$. By
using the representation from Lemma~\ref{lem-AY2002} we have that
\begin{equation}\label{eqn13}
    (X_{\varepsilon} - X_{1, \varepsilon}) (x) = \int_{\mathbb{R}^4} K  (x -
    z) \mathbbm{1}_{D_{1, \varepsilon}^c} (z) (\xi_{\varepsilon} (\mathd z) -
    \xi_{1, \varepsilon} (\mathd z) ) .
\end{equation}
Since, for $x \in \tilde{D}_1 \setminus \bar{D}_1$, we have $| x - z | >
\frac{l}{4} - \varepsilon \gg 1$ for $z \in D_{1, \varepsilon}^c$, thus  by
Lemma~\ref{lem-AY2002} (1) we obtain
\begin{align*}
     \mathbb{E} \left[((X_{\varepsilon} - X_{1, \varepsilon}) (x))^2 \right] & = \mathbb{E} \left[ \int_{\mathbb{R}^4} K  (x - z) \mathbbm{1}_{D_{1,
     \varepsilon}^c} (z) \xi_{\varepsilon} (\mathd z)  \int_{\mathbb{R}^4} K 
     (x - z_1) \mathbbm{1}_{D_{1, \varepsilon}^c} (z_1) \xi_{\varepsilon}
     (\mathd z_1) \right] \\
     & \quad -\mathbb{E} \left[ \int_{\mathbb{R}^4} K  (x - z) \mathbbm{1}_{D_{1,
     \varepsilon}^c} (z) \xi_{1, \varepsilon} (\mathd z)  \int_{\mathbb{R}^4}
     K  (x - z_1) \mathbbm{1}_{D_{1, \varepsilon}^c} (z_1) \xi_{\varepsilon}
     (\mathd z_1) \right]\\
     & \quad -\mathbb{E} \left[ \int_{\mathbb{R}^4} K  (x - z) \mathbbm{1}_{D_{1,
     \varepsilon}^c} (z) \xi_{\varepsilon} (\mathd z)  \int_{\mathbb{R}^4} K 
     (x - z_1) \mathbbm{1}_{D_{1, \varepsilon}^c} (z_1) \xi_{1, \varepsilon}
     (\mathd z_1) \right]\\
     & \quad +\mathbb{E} \left[ \int_{\mathbb{R}^4} K  (x - z) \mathbbm{1}_{D_{1,
     \varepsilon}^c} (z) \xi_{1, \varepsilon} (\mathd z)  \int_{\mathbb{R}^4}
     K  (x - z_1) \mathbbm{1}_{D_{1, \varepsilon}^c} (z_1) \xi_{1,
     \varepsilon} (\mathd z_1) \right]\\
     & \lesssim \int_{\mathbb{R}^8} C_1^2 {e^{- C_2 | x - z |}}  e^{- C_2 |
     x - z |} \mathbbm{1}_{D_{1, \varepsilon}^c} (z) \mathbbm{1}_{D_{1,
     \varepsilon}^c} (z_1)  \int_{\mathbb{R}^4} \mathfrak{a}_{\varepsilon} (z
     - z_2) \mathfrak{a}_{\varepsilon} (z_1 - z_2) \mathd z_2 \mathd z \mathd
     z_1 \\
     & \leq C_1^2 e^{- C_2 d (x, D_{1, \varepsilon}^c)}
     \int_{\mathbb{R}^4} e^{- C_2 | x - z_1 |} \mathd z_1 \\
     & \lesssim  e^{- C_2 d (x, D_{1, \varepsilon}^c)},
\end{align*}
which is finite and independent of $\varepsilon$. Here we have also employed the fact that
$\int_{\mathbb{R}^4} (\mathfrak{a}_{\varepsilon} \ast
\mathfrak{a}_{\varepsilon}) (z - z_1) \mathd z = 1$, which holds true because $\mathfrak{a}_{\varepsilon} \ast \mathfrak{a}_{\varepsilon}$ approximates
$\delta \ast \delta$, where $\delta$ represents the Dirac delta distribution.

\noindent Consequently, since $(X_{\varepsilon} - X_{1, \varepsilon}) (x)$ is Gaussian from \eqref{eqn13},
by hypercontractivity (see Theorem 3.50 in {\cite{Janson97}}) there exists a constant $C_{\mathfrak{p}} > 0$ such that, for every $x \in \tilde{D}_1 \setminus \bar{D}_1$,
\begin{equation}
  \mathbb{E} \left[| (X_{\varepsilon} - X_{1, \varepsilon}) (x) |^{\mathfrak{p}} \right]
  ~\leq~ C_{\mathfrak{p}} \left( \mathbb{E} \left[| (X_{\varepsilon} - X_{1, \varepsilon})
  (x) |^2 \right] \right)^{\frac{\mathfrak{p}}{2} } ~\lesssim~ C_{\mathfrak{p}} e^{-
  \frac{\mathfrak{p}}{2} C_2 d (x, D_{1, \varepsilon}^c)} \leqslant
  C_{\mathfrak{p}} . \label{eqn14}
\end{equation}
Furthermore, since 
$$ \mathbb{E} \left[\| X_{\varepsilon} - X_{1, \varepsilon}
\|_{L^{\mathfrak{p}} (\tilde{D}_1 \setminus \bar{D}_1)}^{\mathfrak{p}} \right] ~=~
\int_{\Omega} \int_{\tilde{D}_1 \setminus \bar{D}_1} | X_{\varepsilon} (x,
\omega) - X_{1, \varepsilon} (x, \omega) |^{\mathfrak{p}} \, \mathd x\, \mathbb{P}
(\mathd \omega),$$ 
the Fubini Theorem followed by \eqref{eqn14} yield
\begin{equation}
  \begin{array}{lllll}
    \mathbb{E} \left[\| X_{\varepsilon} - X_{1, \varepsilon} \|_{L^{\mathfrak{p}}
    (\tilde{D}_1 \setminus \bar{D}_1)}^{\mathfrak{p}} \right] & = & \int_{\tilde{D}_1
    \setminus \bar{D}_1} \mathbb{E} [| (X_{\varepsilon} - X_{1, \varepsilon})
    (x) |^{\mathfrak{p}}] \mathd x & \lesssim & C_{\mathfrak{p}} .
  \end{array} \label{eqn15}
\end{equation}

Finally, we assert that $\mathbb{E} [\| \bar{\varphi}_{\varepsilon}
\|_{L^{\mathfrak{p}}(\mathbb{R}^4)}^{\mathfrak{p}}] <
\infty$. This assertion trivially implies $\mathbb{E} [\| \bar{\varphi}_{\varepsilon}
\|_{L^{\mathfrak{p}} (\tilde{D}_1 \setminus \bar{D}_1)}^{\mathfrak{p}}] <
\infty$ in \eqref{eqn12}. 
We start the proof of this claim by recalling from Subsection
\ref{subsec-coupling-notation} that $\alpha \bar{\varphi}_{\varepsilon}
\leqslant 0$ and $\bar{\varphi}_{\varepsilon}$ is a unique solution to
\begin{equation}
  \begin{array}{l}
    \mathcal{L} \bar{\varphi}_{\varepsilon} + \alpha \exp (\alpha
    \bar{\varphi}_{\varepsilon}) \eta_{\varepsilon} = 0.
  \end{array} \label{moll-good-SPDE}
\end{equation}
By testing \eqref{moll-good-SPDE} with $\rho^{\mathfrak{p}} |
\bar{\varphi}_{\varepsilon} |^{\mathfrak{p}- 2}  \bar{\varphi}_{\varepsilon}$
and integrating it on $\mathbb{R}^4$ we obtain
\begin{equation}
    \int \rho^{\mathfrak{p}} | \bar{\varphi}_{\varepsilon} |^{\mathfrak{p}- 2}
    \bar{\varphi}_{\varepsilon} \mathcal{L} \bar{\varphi}_{\varepsilon} +
    \int \alpha \rho^{\mathfrak{p}} | \bar{\varphi}_{\varepsilon}
    |^{\mathfrak{p}- 2}  \bar{\varphi}_{\varepsilon} \exp (\alpha
    \bar{\varphi}_{\varepsilon}) \eta_{\varepsilon} = 0. \label{eqn16}
\end{equation}
Since from \eqref{eqn2} and \eqref{eqn4}
\begin{align*}
    \int \rho^{\mathfrak{p}} | \bar{\varphi}_{\varepsilon} |^{\mathfrak{p}-
     2}  \bar{\varphi}_{\varepsilon} \mathcal{L} \bar{\varphi}_{\varepsilon} &
     =  m^2 \int \rho^{\mathfrak{p}} | \bar{\varphi}_{\varepsilon}
     |^{\mathfrak{p}} + \int \rho^{\mathfrak{p}- 1} |
     \bar{\varphi}_{\varepsilon} |^{\mathfrak{p}- 2}
     \bar{\varphi}_{\varepsilon} (- \Delta (\rho
     \bar{\varphi}_{\varepsilon})) \\
     &  \quad  + \left( 1 - \frac{2}{\mathfrak{p}} \right) \int
     \rho^{\mathfrak{p}- 1} | \bar{\varphi}_{\varepsilon} |^{\mathfrak{p}}
     \Delta \rho - \frac{2 (\mathfrak{p}- 1)}{\mathfrak{p}} \int |
     \bar{\varphi}_{\varepsilon} |^{\mathfrak{p}} \rho^{\mathfrak{p}- 2} |
     \nabla \rho |^2, 
\end{align*}
where $\int \rho^{\mathfrak{p}- 1} | \bar{\varphi}_{\varepsilon}
|^{\mathfrak{p}- 2} \bar{\varphi}_{\varepsilon} (- \Delta (\rho
\bar{\varphi}_{\varepsilon})) \geq 0$, \eqref{eqn16} gives
\begin{align}
    & m^2 \int \rho^{\mathfrak{p}} | \bar{\varphi}_{\varepsilon}
    |^{\mathfrak{p}} + \int \alpha \rho^{\mathfrak{p}} |
    \bar{\varphi}_{\varepsilon} |^{\mathfrak{p}- 2}
    \bar{\varphi}_{\varepsilon} \exp (\alpha \bar{\varphi}_{\varepsilon})
    \eta_{\varepsilon} \nonumber \\
    & \qquad \quad = ~ \frac{2 (\mathfrak{p}- 1)}{\mathfrak{p}} \int |
    \bar{\varphi}_{\varepsilon} |^{\mathfrak{p}} \rho^{\mathfrak{p}- 2} |
    \nabla \rho |^2 - \left( 1 - \frac{2}{\mathfrak{p}} \right) \int
    \rho^{\mathfrak{p}- 1} | \bar{\varphi}_{\varepsilon} |^{\mathfrak{p}}
    \Delta \rho .
  \label{eqn17}
\end{align}
Since $\eta_{\varepsilon} \rho^{\mathfrak{p}}$ is a positive distribution, the second l.h.s. term in \eqref{eqn17} can be estimated as
\begin{equation*}
    \left| \int \alpha \rho^{\mathfrak{p}} | \bar{\varphi}_{\varepsilon}
     |^{\mathfrak{p}- 2}  \bar{\varphi}_{\varepsilon} \exp (\alpha
     \bar{\varphi}_{\varepsilon}) \eta_{\varepsilon} \right| ~\leq~ \| ~ |
     \bar{\varphi}_{\varepsilon} |^{\mathfrak{p}- 2} \alpha
     \bar{\varphi}_{\varepsilon} \exp (\alpha \bar{\varphi}_{\varepsilon})
     \mathbb{I} (\alpha \bar{\varphi}_{\varepsilon}) \|_{C (\mathbb{R}^4)}
     \left[ \int \eta_{\varepsilon} \rho^{\mathfrak{p}} \right],
\end{equation*}
where $\mathbb{I}: \mathbb{R} \rightarrow \mathbb{R}_+$ is a smooth function supported on $(- \infty, 1)$. Note that $\mathbb{I} (\alpha
\bar{\varphi}_{\varepsilon}) = 1$,  since $\alpha \bar{\varphi}_{\varepsilon}
\leqslant 0$. But, for each $x \in \mathbb{R}^4$,
\begin{align*}
    | ~ | \bar{\varphi}_{\varepsilon} |^{\mathfrak{p}- 2} \alpha
     \bar{\varphi}_{\varepsilon} \exp (\alpha \bar{\varphi}_{\varepsilon})
     \mathbb{I} (\alpha \bar{\varphi}_{\varepsilon}) |  & =~  | \alpha |^{2
     -\mathfrak{p}} | | \alpha \bar{\varphi}_{\varepsilon} |^{\mathfrak{p}- 2}
     \alpha \bar{\varphi}_{\varepsilon} \exp (\alpha
     \bar{\varphi}_{\varepsilon}) \mathbb{I} (\alpha
     \bar{\varphi}_{\varepsilon}) | \\
     & \leq  ~ | \alpha |^{2 -\mathfrak{p}} \sup_{z \in \mathbb{R}^4} [z
     | z |^{\mathfrak{p}- 2} \exp (z) \mathbb{I} (z)]\\
     & \leq  ~ C | \alpha |^{2 -\mathfrak{p}},
\end{align*}
for some $C > 0$, where the r.h.s is independent of $\varepsilon$ and $x$. By
substituting the above estimate into \eqref{eqn17} we obtain
\begin{align}\label{eqn18}
    m^2 \int \rho^{\mathfrak{p}} | \bar{\varphi}_{\varepsilon}
    |^{\mathfrak{p}} & ~ \leq ~ \frac{2 (\mathfrak{p}- 1)}{\mathfrak{p}}
    \int | \bar{\varphi}_{\varepsilon} |^{\mathfrak{p}} \rho^{\mathfrak{p}}
    \frac{| \nabla \rho |^2}{\rho^2} - \left( 1 - \frac{2}{\mathfrak{p}}
    \right) \int \rho^{\mathfrak{p}} | \bar{\varphi}_{\varepsilon}
    |^{\mathfrak{p}} \frac{\Delta \rho}{\rho} \nonumber\\
    & \qquad + C | \alpha |^{2 -\mathfrak{p}} \int \eta_{\varepsilon}
    \rho^{\mathfrak{p}} .
\end{align}
Now since $\nabla \rho (x) = - m \beta \frac{x - x_1}{| x - x_1 |} \rho (x)$
for $x \in \mathbb{R}^4 \setminus \{ x_1 \}$ and $\Delta \rho = \rho m^2
\beta^2$, we can choose $\beta > 0$ \ such that
\begin{equation}
    \frac{2 (\mathfrak{p}- 1)}{\mathfrak{p}} \frac{| \nabla \rho |^2}{\rho^2}
    - \left( 1 - \frac{2}{\mathfrak{p}} \right) \frac{\Delta \rho}{\rho} ~=~ m^2
    \beta^2 ~\leqslant~ \frac{m^2}{2} .
    \label{beta-2}
\end{equation}
Consequently, with $\beta$ such that \eqref{eqn-beta} and \eqref{beta-2} hold
true, from \eqref{eqn18} we deduce that
\begin{equation}
    \frac{m^2}{2} \| \rho \bar{\varphi}_{\varepsilon}
    \|_{L^{\mathfrak{p}}}^{\mathfrak{p}} ~\leqslant~ C | \alpha |^{2
    -\mathfrak{p}} \int \eta_{\varepsilon} \rho^{\mathfrak{p}} .
  \label{eqn19} 
\end{equation}
Thus, $\mathbb{E} \left[\| \rho \bar{\varphi}_{\varepsilon}
\|_{L^{\mathfrak{p}}}^{\mathfrak{p}} \right] < \infty$ and the bound is uniform in $\varepsilon$ because
\begin{equation}\label{eqn20} 
  \mathbb{E} \left[ \int \eta_{\varepsilon} \rho^{\mathfrak{p}} \right] = \int
  e^{- m\mathfrak{p} \beta | x - x_1 |} \mathd x < \infty . 
\end{equation}
Similarly we can show that $\mathbb{E} [\| \rho \bar{\varphi}_{1, \varepsilon}
\|_{L^{\mathfrak{p}}}^{\mathfrak{p}}] < \infty$ uniformly in $\varepsilon$.\\

\noindent Hence, substituting \eqref{eqn15} together with \eqref{eqn19} and 
 the uniform boundedness of $\mathbb{E} [\| \rho \bar{\varphi}_{\varepsilon}
\|_{L^{\mathfrak{p}}}^{\mathfrak{p}}]$ and $\mathbb{E} [\| \rho
\bar{\varphi}_{1, \varepsilon} \|_{L^{\mathfrak{p}}}^{\mathfrak{p}}]$
from~\eqref{eqn12}, for $\beta$ satisfying \eqref{eqn-beta} and
\eqref{beta-2}, we have
\begin{equation}
  \begin{array}{lll}
    \mathbb{E} \left[\| f \cdot \varphi_{\varepsilon} (\cdot + x_1) - f \cdot
    \varphi_{1, \varepsilon} (\cdot + x_1) \|_{B_{p, p,
    \ell}^s}^{\mathfrak{p}} \right] & \lesssim_{m, \mathfrak{p}, M_{\theta}, | \alpha
    |, \| f \|_{L^{\infty}}} & e^{- m \beta \frac{l}{8}},
  \end{array} \label{eqn21}
\end{equation}
which is independent of $\varepsilon$ and $x_1$. Here we have also used \
$e^{m \beta \left( 1 - \frac{l}{8} \right)} \simeq_{m, \mathfrak{p}} e^{- m
\beta \frac{l}{8}}$. Hence we get \eqref{formal-decay} and due to inequality
\eqref{coupling-cov-est1} the proof of Theorem \ref{thm-main} is complete.

\section{The Malliavin calculus approach}\label{sec-malliavin}
\qquad In this section our aim is to present the proof of
Theorem~\ref{thm-main} via the approach based on Malliavin calculus. The proof will start by considering an approximation $\varphi_{\varepsilon, R}$ useful
to be able to apply easily the Malliavin calculus, see
eqns.~\eqref{SPDE-Kregapprox0} and~\eqref{SPDE-e-R-approx}. The solution
theory to~\eqref{SPDE-Kregapprox0} is closely related to Lemmata 30 and 31
of~{\cite{AVG2021}} and proved in Proposition~\ref{prop-reg soln of
varphi-bar} and Lemma~\ref{lemma_exponentialuniqueness} below. The Malliavin
calculus enters in estimating $\tmop{Cov} (\varphi_{\varepsilon, R} (x_1),
\varphi_{\varepsilon, R} (x_2))$ in terms of the Malliavin derivative of
$\varphi_{\varepsilon, R}$ which we denote by $D_z \varphi_{\varepsilon, R}$,
see eqs.~\eqref{covExpression}, \eqref{rhs2Est} and \eqref{rhs1Est}. The
existence of $D_z \varphi_{\varepsilon, R}$ and the linear elliptic SPDE it
satisfies are established in Theorem~\ref{thm-MallDiff} thanks to a
preliminary abstract result from~{\cite{Tindel98}} which we state as
Theorem~\ref{thm-MallavinDiffAbstract}. Finally the Feynman--Kac formula and
some estimates from Malliavin calculus, for example \eqref{corB.6}, help us to
finish the proof.

\subsection{Preliminaries}\label{subsec-malliavin-notation}

\qquad Before moving on, let us first recall the tools from Malliavin calculus that
we will need. Most of the definitions and preliminary results here are taken
from Chapter~1 of Nualart's book~{\cite{Nualart2006}}. Let $H$ be a separable
Hilbert space and $W = \{ W (h), h \in H \}$ an isonormal Gaussian process
defined on a complete probability space $(\Omega, \mathfrak{F}, \mathbb{P})$.
Let $\mathcal{E}$ be the $\sigma$-field generated by the random variables $\{
W (h), h \in H \}$. Since $\mathcal{E} \subseteq \mathfrak{F}$, note that when
we write $(\Omega, \mathcal{E}, \mathbb{P})$ we mean that $\mathbb{P}$ is the
restriction of the probability measure defined on $\mathfrak{F}$ to
$\mathcal{E}$.

For each $n \geqslant 0$ by $H_n (x)$ we denote the well known $n$th Hermite
polynomial and by $\mathcal{H}_n,$ the Wiener chaos of order $n$, that is, the
closed linear subspace of $L^2 (\Omega, \mathcal{E}, \mathbb{P})$ generated by
the random variables $\{ H_n (W (h)), h \in H, \| h \|_{H} = 1 \}$ whenever $n
\geqslant 1$, and the set of constants for $n = 0$. One of the important
results in the Malliavin calculus is the Wiener chaos decomposition of $L^2
(\Omega, \mathcal{E}, \mathbb{P})$ into its projections in the spaces
$\mathcal{H}_n$, i.e.,
\[ L^2 (\Omega, \mathcal{E}, \mathbb{P}) = \oplus_{n \geqslant 0}
   \mathcal{H}_n . \]
In particular for any $F \in L^2 (\Omega, \mathcal{E}, \mathbb{P})$, we have $F = \sum_{n = 0}^{\infty} J_n F$ where $J_n F$ denotes the projection of $F$ into $\mathcal{H}_n$. We will restrict our discussion of this section to $L^2 (\Omega, \mathcal{E}, \mathbb{P})$ and to shorten the notation we will denote it by $L^2 (\Omega)$.

The Malliavin derivative operator $D$ maps the domain $\mathbb{D}^{1, 2}
\subseteq L^2 (\Omega) $ to the space of $H$-valued random variables $L^2
(\Omega ; H) .$ Note that $F \in \mathbb{D}^{1, 2}$ if and only if $\sum_{n =
1}^{\infty} n \| J_n F \|_{L^2 (\Omega)}^2 < \infty$. Moreover, in this
setting for all $n \geqslant 1$, we have
\begin{equation}\label{DandJrelation}
  D (J_n F) = J_{n - 1} (D F) . 
\end{equation}
The divergence operator $\delta : \tmop{Dom} \delta \subseteq L^2 (\Omega ; H)
\rightarrow L^2 (\Omega)$ is defined as the adjoint of the derivative operator
$D$. We will work in the special case of $H = L^2 (T, \mathcal{B}, \tau)$,
where $(T, \mathcal{B})$ is a measurable space and $\tau$ is a $\sigma$-finite
atom-less measure on $(T, \mathcal{B})$. Also, we will identify $L^2 (\Omega ;
L^2 (T))$ with $L^2 (T \times \Omega)$ which is the set of square integrable
stochastic processes. Thus, for $F \in \mathbb{D}^{1, 2}$, $D F \in L^2 (T
\times \Omega)$ and we write $D_t F = D F (t), ~ \forall t \in T$.
By $\mathbb{D}^{1, 2} (L^2 (T))$ we denote the set of stochastic processes $u
\in L^2 (T \times \Omega)$ such that $u (t) \in \mathbb{D}^{1, 2}$ for almost all $t \in T$ and there exists a measurable version of the two parameter process $\{ D_s u (t) \}_{s, t \in T} \subset L^2 (\Omega)$ satisfying
\[ \mathbb{E} \left[ \int_T \int_T (D_s u (t))^2 \, \tau(\mathd s)\,  \tau (\mathd t)
   \right] < \infty . \]
In the Malliavin calculus literature, the space $\mathbb{D}^{1, 2} (L^2 (T))$
is generally denoted by $\mathbb{L}^{1, 2}$. Note that $\mathbb{L}^{1, 2}$ is
a subset of $\tmop{Dom} \delta$ and isomorphic to $L^2 (T ; \mathbb{D}^{1, 2})
.$ Then, see (1.54) of {\cite{Nualart2006}}, for $u, v \in \mathbb{L}^{1, 2}$
we have
\begin{equation}
  \mathbb{E} [\delta (u) \delta (v)] = \int_T \mathbb{E} [u (t) v (t)] \, \tau
  (\mathd t) + \int_T \int_T \mathbb{E} [D_s u (t) D_s v (t)] \, \tau (\mathd s)\,  \tau
  (\mathd t) . \label{LandDrelation}
\end{equation}

Let $\{ P_t, t \geqslant 0 \}$ be the one parameter Ornstein-Uhlenbeck
semigroup of contraction operators in $L^2 (\Omega)$ and by $L : L^2 (\Omega)
\ni F \rightarrow \sum_{n = 0}^{\infty} - n J_n F \in L^2 (\Omega)$ we denotes
its infinitesimal generator with domain
\[ \tmop{Dom} L = \left\{ F \in L^2 (\Omega) : \sum_{n = 0}^{\infty} n^2 \|
   J_n F \|_{L^2 (\Omega)} < \infty \right\} . \]
From Proposition~1.4.3 of {\cite{Nualart2006}} we know that, for $F \in L^2
(\Omega),$ $F \in \tmop{Dom} L$ if and only if $F \in \mathbb{D}^{1, 2}$ and
$D F \in \tmop{Dom} \delta$. In this case we have $\delta D F = - L F.$

With the above notation, equality (90) in {\cite{FG2019}} gives the following commutation property
\begin{equation}
  D (I - L)^{- 1} F = (2 I - L)^{- 1} D F, \quad \forall F \in L^2 (\Omega),
  \label{FG2019_90}
\end{equation}
and the proof of Lemma~B.1 in {\cite{FG2019}} give the following first order
expansion
\begin{equation}
  F -\mathbb{E} [F] = \delta (I - L)^{- 1} D F, \qquad \forall F \in
  \mathbb{D}^{1, 2} . \label{lemB.1}
\end{equation}
To proceed with our analysis, let us fix the $\sigma$-finite measure space
$(T, \mathcal{B}, \tau)$ as $(\mathbb{R}^4, \mathcal{B} (\mathbb{R}^4), \mathd
x)$ where $\mathcal{B} (\mathbb{R}^4)$ denotes the Borel $\sigma$-field on $\mathbb{R}^4$ and $\mathd x$ stands for the Lebesgue measure.

\subsection{Proof of Theorem \ref{thm-main}}\label{subsec-malliavin-proof}

\quad We recall that $\xi$ is a given space white noise on $\mathbb{R}^4$. Thus, the
isonormal Gaussian process we consider here is $W (h) = \langle \xi, h
\rangle, h \in L^2_{\ell} (\mathbb{R}^4)$, indexed by the Hilbert space
$L^2_{\ell} (\mathbb{R}^4) .$ We will be working under the framework of
Malliavin calculus associated to white noise $\xi$ on $\mathbb{R}^4$. To
setup, let $\Omega = B^{- 2 - \kappa}_{\infty, \infty, \ell} (\mathbb{R}^4)$
and let $\mathbb{P}$ be the law of $\xi$ on $\Omega .$

It turns out that the following approximation of the equation
\eqref{goodSPDE}, instead of \eqref{good-approx-SPDE}, is more suitable to work with the above mentioned tools from Malliavin calculus
\begin{equation}  \label{SPDE-Kregapprox0}
      \mathcal{L} \bar{\varphi}_{\varepsilon, R} + \alpha K_R (\exp (\alpha
      \bar{\varphi}_{\varepsilon, R}) \exp (\alpha X_{\varepsilon} -
      C_{\varepsilon})) = 0,
\end{equation}
where $K_R : (0, \infty) \rightarrow (0, \infty)$ is a smooth function which
is equal to $x$ if $x \in (0, R - 1]$, equal to $R$ if $x \geqslant R$ and
$K_R$ is increasing for $x \in (R - 1, R)$. Since the proof presented here of the solution theory to equation \eqref{SPDE-Kregapprox0} is closely related to Lemmata 30 and 31 of~{\cite{AVG2021}}, the results about the existence of a unique solution $\bar{\varphi}_{\varepsilon, R}$ to \eqref{SPDE-Kregapprox0} are postponed to Proposition~\ref{prop-reg soln of varphi-bar} and Lemma~\ref{lemma_exponentialuniqueness} in Appendix \ref{sec-appendix}. Moreover, it is straightforward to see that $\bar{\varphi}_{\varepsilon, R}
\rightarrow \bar{\varphi}_{\varepsilon}$ as $R \rightarrow \infty$, where
$\bar{\varphi}_{\varepsilon}$ is the unique solution to the equation
\eqref{good-approx-SPDE}.

Further recall, from \eqref{def-eta-eps}, that we denote the expression $\exp (\alpha X_{\varepsilon} -
C_{\varepsilon})$ by $\eta_{\varepsilon}$. Let us define the following random field
\begin{equation}
  (\mathcal{G}_{\varepsilon} \ast \xi) (x) \assign \int_{\mathbb{R}^4}
  (\mathfrak{a}_{\varepsilon} \ast \mathcal{G}) (x - y) \, \xi (\mathd y), \qquad x \in
  \mathbb{R}^4, \label{defn-varphi-0}
\end{equation}
where $\mathcal{G}$ is the Green function associated with the operator $(-
\Delta + m^2)^{- 1}$ and $\mathcal{G}_{\varepsilon} \assign
\mathfrak{a}_{\varepsilon} \ast \mathcal{G}$. It can be shown that
$\mathcal{G}_{\varepsilon} \ast \xi$ is a smooth Gaussian process, see Theorem 5.1 of {\cite{mourratGlobalWellposednessDynamic2017}}. 

By setting $\varphi_{\varepsilon, R} = \bar{\varphi}_{\varepsilon, R} +
X_{\varepsilon}$, from \eqref{SPDE-Kregapprox0} we get that
$\varphi_{\varepsilon, R}$ uniquely solves the following equation
\begin{equation}\label{SPDE-e-R-approx}
    \mathcal{L} \varphi_{\varepsilon, R} + \alpha K_R (\exp (\alpha
    \varphi_{\varepsilon, R} - C_{\varepsilon})) = \xi_{\varepsilon},
\end{equation}
which is equivalent to say that, for $x \in \mathbb{R}^4$ and $\omega \in \Omega$,
\begin{equation}\label{int-form-e-R-approx}
    \varphi_{\varepsilon, R} (x, \omega) + \alpha \int_{\mathbb{R}^4}
    \mathcal{G} (x - y) K_R (\exp (\alpha \varphi_{\varepsilon, R} (y, \omega)
    - C_{\varepsilon})) \, \mathd y = (\mathcal{G}_{\varepsilon} \ast \xi) (x) .
\end{equation}
To shorten the notation we will write \begin{equation}
    \int_{\mathbb{R}^4} \mathcal{G} (x - y) K_R (\exp (\alpha
     \varphi_{\varepsilon, R} (y, \omega) - C_{\varepsilon})) \, \mathd y =
     (\mathcal{G} \ast K_R (\exp (\alpha \varphi_{\varepsilon, R} (\cdot,
     \omega) - C_{\varepsilon}))) (x) .    
\end{equation}

Since one can write the term $\tmop{Cov} (F_1 (f \cdot \varphi_{\varepsilon,
R} (\cdot + x_1)), F_2 (f \cdot \varphi_{\varepsilon, R} (\cdot + x_2)))$,
that we want to estimate, in terms of $D_z \varphi_{\varepsilon, R}$, see
\eqref{covExpression} for precise expression, we aim next to find the equation
for $D_z \varphi_{\varepsilon, R}$. This we achieve in Theorem
\ref{thm-MallDiff} whose proof is based on the following abstract result which
is stated as Theorem 2.5 in~{\cite{Tindel98}}.
\begin{theorem}
  \label{thm-MallavinDiffAbstract}Let $(\Omega, \mathbb{P})$ be a complete probability space on which $\xi$ is a canonical process. Further, assume that  $H$ is continuously embedded in $\Omega$ and let us denote
  this embedding by $i$. Let $F \in L^2 (\Omega)$. Then $F \in \mathbb{D}^{1,
  2}$ iff the following conditions are satisfied.
  \begin{enumerate}
    \item For all $h \in H$, there exists a version $\tilde{F}_h$ of $F$ such
    that, for every $\omega \in \Omega$, the mapping $\mathbb{R} \ni t \mapsto
    \tilde{F}_h [\omega + t i (h)]$ is absolutely continuous.
    
    \item There exists $\varsigma \in L^2 (\Omega ; H)$ such that, for all \
    $h \in H,$
    \begin{equation}
        \lim_{t \rightarrow 0} \frac{1}{t} \{ F [\omega + t i (h)] - F
         (\omega) \} = \langle \varsigma (\omega), h \rangle, \qquad
         \mathbb{P} \textrm{-a.s.}
    \end{equation}
  \end{enumerate}
\end{theorem}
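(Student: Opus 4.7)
The plan is to prove this characterization via Cameron--Martin style arguments that exploit the quasi-invariance of the Gaussian measure on $\Omega$ under translations by elements of the Cameron--Martin space $i(H)$. Both directions should reduce, in the end, to the density of smooth cylindrical functionals in $\mathbb{D}^{1,2}$ and the integration by parts formula $\mathbb{E}[\langle DF, h\rangle_H] = \mathbb{E}[F\, W(h)]$ that defines the adjoint pair $(D, \delta)$.

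For the necessity direction (assume $F \in \mathbb{D}^{1,2}$), I would first verify conditions 1 and 2 on the core of smooth cylindrical random variables $F = \psi(W(h_1), \ldots, W(h_n))$ with $\psi \in C_b^\infty$, for which the map $t \mapsto F[\omega + ti(h)]$ is genuinely smooth and the directional derivative equals $\langle DF, h\rangle_H$ by the chain rule. Choosing an approximating sequence $F_n$ of smooth cylindrical functionals converging to $F$ in $\mathbb{D}^{1,2}$, I would pass to the limit. The candidate for $\varsigma$ is then $DF$ itself, and the limit identifies the directional derivative in condition 2 via dominated convergence.

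For the sufficiency direction (assume conditions 1 and 2), I would derive the integration by parts formula directly. By condition 1 and Fubini, together with the Cameron--Martin formula
\begin{equation}
\mathbb{E}\bigl[F(\omega + ti(h))\bigr] = \mathbb{E}\Bigl[F(\omega) \exp\!\bigl(tW(h) - \tfrac{t^2}{2}\|h\|_H^2\bigr)\Bigr],
\end{equation}
differentiating at $t = 0$ and invoking condition 2 with a uniform integrability argument produces $\mathbb{E}[\langle \varsigma, h\rangle_H] = \mathbb{E}[F\,W(h)]$. Testing the same translation-and-differentiate argument against a cylindrical $G$ gives the full integration by parts relation $\mathbb{E}[\langle \varsigma, h\rangle_H G] = \mathbb{E}[F W(h) G] - \mathbb{E}[F \langle DG, h\rangle_H]$, which by closability of $D$ on smooth cylindrical functionals forces $F \in \mathbb{D}^{1,2}$ with $DF = \varsigma$.

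The hard part is the version selection in the necessity direction: for a general $F \in \mathbb{D}^{1,2}$ the composition $t \mapsto F(\omega + ti(h))$ is a priori only defined up to a $\mathbb{P}$-null set for each $t$, so producing a single version $\tilde F_h$ that is absolutely continuous for \emph{every} $\omega$ requires a quasi-sure analysis argument --- essentially controlling increments along a countable dense set of translation parameters by the $\mathbb{D}^{1,2}$-Poincar\'e-type inequality and then invoking the one-dimensional Sobolev embedding to select a good representative. Once this technical point is handled, the sufficiency direction is comparatively routine because the integration by parts identity is exactly the defining property of the Malliavin derivative.
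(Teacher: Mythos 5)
The paper does not prove this theorem. It is stated as a quotation of Theorem~2.5 in Tindel's 1998 article (\cite{Tindel98}); the authors invoke it as an off-the-shelf abstract characterization of $\mathbb{D}^{1,2}$, so there is no in-paper proof to compare your attempt against.

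Taken on its own terms, your sketch is morally the correct Sugita-style argument, but two of its load-bearing steps are weaker than you present them. In the necessity direction you correctly identify the version-selection problem and gesture at a quasi-sure / one-dimensional Sobolev embedding argument; that is indeed the right idea (disintegrate $\mathbb{P}$ along the direction $i(h)$, deduce that almost every fiber restriction lies in a one-dimensional Sobolev space, then patch the good representatives together), but ``essentially controlling increments along a countable dense set of translation parameters'' does not by itself produce a representative that is absolutely continuous for \emph{every} $\omega$, as the statement demands. In the sufficiency direction you call the argument ``comparatively routine,'' and this is where I would push back hardest: differentiating the Cameron--Martin identity
\begin{equation}
\mathbb{E}\bigl[F(\omega+ti(h))\,G(\omega)\bigr]
  = \mathbb{E}\Bigl[F(\omega)\,G(\omega-ti(h))\,\exp\!\bigl(tW(h)-\tfrac{t^2}{2}\|h\|_H^2\bigr)\Bigr]
\end{equation}
at $t=0$ under the expectation requires more than condition~2 (a.s.\ pointwise convergence of the difference quotient) plus square-integrability of $\varsigma$. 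One needs to control the family of difference quotients in $L^2$ uniformly in $t$, and this is exactly where condition~1 (absolute continuity along the fiber) is indispensable: it gives the representation
\begin{equation}
F(\omega+ti(h))-F(\omega)=\int_0^t \langle\varsigma(\omega+si(h)),h\rangle_H\,\mathd s,
\end{equation}
from which the required $L^2$ domination follows via Cameron--Martin and Jensen. Your sketch treats condition~1 only as a regularity nicety and the ``uniform integrability argument'' as a footnote, so the actual role of hypothesis~1 in the converse direction is invisible. Closing these two gaps is where the real content of the theorem lies; the rest of your outline (cylindrical core, duality with $\delta$, closability) is standard and correct.
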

From the proof of Theorem \ref{thm-MallDiff} it can be observed that we apply
Theorem \ref{thm-MallavinDiffAbstract}, for each $x \in \mathbb{R}^4,$
$\varepsilon$ and $R$ on $F$ with $H \assign L_{\ell}^2 (\mathbb{R}^4)$ where
\begin{equation} \label{defn-F}
    F (\omega) \assign \varphi_{\varepsilon, R} (x, \omega), \quad \omega \in \Omega .
\end{equation}
Since most of the results of this section are independent of $\varepsilon, R$
and $x$ or for fixed $\varepsilon, R$ and $x$, unless otherwise stated we will
not write the explicit dependence of functions defined here on $\varepsilon,
R$ and $x$.

To study the required properties of $F$, which allow us to apply Theorem
\ref{thm-MallavinDiffAbstract}, we write \eqref{int-form-e-R-approx} in the
functional form as, for $\omega \in \Omega$,
\begin{equation}
  \begin{array}{l}
    \mathcal{T} (\varphi_{\varepsilon, R} (\cdot, \omega)) =
    (\mathcal{G}_{\varepsilon} \ast \xi) (\cdot) .
  \end{array} \label{rel-varphi-0}
\end{equation}
Here $\mathcal{T}$ is defined as
\begin{equation}
  \begin{array}{l}
    \mathcal{T}: \mathcal{B} \ni w \rightarrow w + \alpha \mathcal{G} \ast K_R
    (\exp (\alpha w - C_{\varepsilon})) \in \mathcal{B} \assign
    C_{\ell}^0 (\mathbb{R}^4) .
  \end{array} \label{T-map}
\end{equation}
Note that, because of the convolution, the map $\mathcal{T}$ is well-defined.
Moreover, by definition of the map $\mathcal{T}$, \eqref{defn-F} can be
understood as, for each $\omega \in \Omega$,
\begin{equation}
  \begin{array}{l}
    F (\omega) = (\mathcal{T}^{- 1} (\mathcal{G}_{\varepsilon} \ast \xi)) (x)
    . \label{eq:F-and-T}
  \end{array}
\end{equation}
Thus, because of \eqref{eq:F-and-T}, in order to study $F$ we first show in
Lemma~\ref{lem-T-bijectivity} that $\mathcal{T}^{- 1}$ exists, i.e., prove the
bijectivity of the map $\mathcal{T}$. This is precisely our next result.
Before this we prove an auxiliary result as follows.
\begin{lemma}
  Let $\mathfrak{v} \in L_{\ell}^2 (\mathbb{R}^4) $ and $\mathfrak{u} \in
  H_{\ell}^2 (\mathbb{R}^4)$ be a unique weak solution to $(- \Delta + m^2)
  \mathfrak{u}=\mathfrak{v}$. Then
  \begin{equation}\label{est-poincare}
      \langle \nabla (\mathcal{G} \ast \mathfrak{v}), (\mathcal{G} \ast
      \mathfrak{v}) \nabla r_{\ell}^2 \rangle + m^2 \| \mathcal{G} \ast
      \mathfrak{v} \|_{L^2_{\ell} (\mathbb{R}^4)}^2 \leqslant \langle
      \mathcal{G} \ast \mathfrak{v}, \mathfrak{v} \rangle_{\ell},
  \end{equation}
  where $\langle \cdot, \cdot \rangle$ and $\langle \cdot, \cdot
  \rangle_{\ell}$, respectively, denote the standard inner product in $L^2
  (\mathbb{R}^4)$ and $L^2_{\ell} (\mathbb{R}^4)$. 
\end{lemma}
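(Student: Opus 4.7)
The plan is to rewrite the inequality purely in terms of $\mathfrak{u} \assign \mathcal{G}\ast \mathfrak{v}$, which by assumption is the unique weak solution in $H^2_{\ell}(\mathbb{R}^4)$ of $(-\Delta + m^2)\mathfrak{u} = \mathfrak{v}$. In this notation the claim reads
\[
\int \mathfrak{u}\, \nabla \mathfrak{u} \cdot \nabla r_{\ell}^2 \, \mathd x + m^2 \int r_{\ell}^2 \mathfrak{u}^2 \, \mathd x \;\leqslant\; \int r_{\ell}^2 \, \mathfrak{u}\, \mathfrak{v} \, \mathd x,
\]
so the natural strategy is simply to test the PDE against $r_{\ell}^2 \mathfrak{u}$ and perform an integration by parts on the Laplacian term.

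Concretely, I would replace $\mathfrak{v}$ by $-\Delta \mathfrak{u}+m^2\mathfrak{u}$ in the right-hand side and apply the product rule to move one derivative onto $r_{\ell}^2 \mathfrak{u}$, producing
\[
\int r_{\ell}^2 \mathfrak{u}\, \mathfrak{v}\, \mathd x = \int r_{\ell}^2 |\nabla \mathfrak{u}|^2 \, \mathd x + \int \mathfrak{u}\, \nabla \mathfrak{u} \cdot \nabla r_{\ell}^2 \, \mathd x + m^2 \int r_{\ell}^2 \mathfrak{u}^2 \, \mathd x.
\]
Discarding the non-negative term $\int r_{\ell}^2 |\nabla \mathfrak{u}|^2 \, \mathd x$ yields exactly \eqref{est-poincare}. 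Thus the estimate is ultimately a clean weighted energy identity with one non-negative term thrown away.

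The only real issue is justifying the integration by parts, since the weight $r_{\ell}^2$ does not have compact support. I would handle this by the standard cut-off argument: introduce $\chi_N \in C_c^{\infty}(\mathbb{R}^4)$ with $\chi_N = 1$ on $B(0,N)$, $|\nabla \chi_N| \lesssim 1/N$, and carry out the computation above with $r_{\ell}^2$ replaced by $\chi_N r_{\ell}^2$. The commutator terms involving $\nabla \chi_N$ are controlled using $\mathfrak{u} \in H^2_{\ell}$ and the fact that $r_{\ell}^2(x)$ together with its derivatives are bounded polynomial-decay weights, so they vanish in the limit $N \to \infty$ by dominated convergence. Alternatively one may first prove the identity for $\mathfrak{v} \in C_c^{\infty}$ (in which case $\mathfrak{u}$ decays exponentially like $\mathcal{G}$) and then extend by density of $C_c^{\infty}$ in $L^2_{\ell}$, using continuity of the linear map $\mathfrak{v} \mapsto \mathcal{G}\ast \mathfrak{v}$ from $L^2_{\ell}$ into $H^2_{\ell}$.

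The main obstacle is thus not algebraic but purely analytic: making sure the boundary/weight terms at infinity are under control. Once this justification is in place, the inequality is immediate from discarding $\int r_{\ell}^2 |\nabla \mathfrak{u}|^2$ in the identity above.
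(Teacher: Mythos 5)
Your proof is correct and follows essentially the same route as the paper: test $(-\Delta+m^2)\mathfrak{u}=\mathfrak{v}$ against $r_\ell^2\mathfrak{u}$, integrate by parts, and discard the non-negative term $\int r_\ell^2|\nabla\mathfrak{u}|^2$. You are in fact more careful than the paper in flagging the cut-off/density argument needed to justify the integration by parts with the non-compactly-supported weight, which the paper leaves implicit.
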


\begin{proof}
  Let $\mathfrak{u}=\mathcal{G} \ast \mathfrak{v} \in H_{\ell}^2
  (\mathbb{R}^4)$ be a unique weak solution to $(- \Delta + m^2)
  \mathfrak{u}=\mathfrak{v}$ for given $\mathfrak{v} \in L_{\ell}^2
  (\mathbb{R}^4)$.
  Multiplying on both sides of $(- \Delta + m^2) \mathfrak{u}=\mathfrak{v}$ by   $r_{\ell}^2 \mathfrak{u}$ give
  \begin{equation}
      \langle (- \Delta + m^2) \mathfrak{u}, \mathfrak{u} \rangle_{\ell} = \langle \mathfrak{u}, \mathfrak{v} \rangle_{\ell} .
  \end{equation}
  Integration by parts yield,
  \[ \langle \nabla \mathfrak{u}, \mathfrak{u} \nabla r_{\ell}^2 \rangle +
     m^2 \| \mathfrak{u} \|_{L^2_{\ell} (\mathbb{R}^4)}^2 \leqslant \langle
     \mathfrak{u}, \mathfrak{v} \rangle_{\ell} . \]
  By substituting $\mathfrak{u}=\mathcal{G} \ast \mathfrak{v}$, above gives
  the conclusion. 
\end{proof}

To avoid complexity in notation we set $G (w) \assign \alpha K_R (\exp
(\alpha w - C_{\varepsilon})), w \in \mathcal{B}$. Then, $G$ is non-negative, bounded, smooth and non-decreasing.
\begin{lemma}\label{lem-T-bijectivity}
The map $\mathcal{T}$ is bijective from $\mathcal{B}$ onto $\mathcal{B}$. 
\end{lemma}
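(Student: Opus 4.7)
The plan is to reformulate $\mathcal{T}(w) = g$ as a fixed-point problem and to exploit the structural properties of $G$ highlighted just above the lemma, namely that $G$ is non-negative, bounded (by $M \assign \alpha R$), globally Lipschitz (with constant $L \assign \alpha^2 R$), smooth, and non-decreasing. Setting $h \assign g - w$, the equation $\mathcal{T}(w) = g$ becomes
\[
h = \Psi(h) \assign \mathcal{G} \ast G(g - h),
\]
so it suffices to show that $\Psi$ has a unique fixed point in $\mathcal{B}$. Because $0 \leq G \leq M$ and $\int \mathcal{G}\,\mathd x = 1/m^2$, the operator $\Psi$ maps the whole of $\mathcal{B}$ into the closed convex set $\mathcal{K} \assign \{h \in \mathcal{B} : 0 \leq h(x) \leq M/m^2 \text{ for all } x \in \mathbb{R}^4\}$.

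For injectivity of $\mathcal{T}$, I argue uniqueness of the fixed point. Let $h_1, h_2 \in \mathcal{K}$ be two fixed points and put $u \assign h_1 - h_2$. Subtracting and applying $\mathcal{L}$ gives $\mathcal{L} u = G(g - h_1) - G(g - h_2)$; monotonicity and Lipschitz continuity of $G$ allow us to write $G(g-h_1) - G(g-h_2) = -V u$ with $0 \leq V \leq L$ measurable. Hence
\[
(-\Delta + m^2 + V)\, u = 0
\]
on $\mathbb{R}^4$, with $m^2 + V \geq m^2 > 0$ and $u$ bounded (as it is a difference of elements of $\mathcal{K}$). A Liouville-type maximum-principle argument then forces $u \equiv 0$; equivalently, one can test against $u\, r_\ell^2$, invoke the preceding weighted energy lemma applied to $\mathfrak{v} = \mathcal{L}u$, and absorb the cross term $\langle \nabla u, u\, \nabla r_\ell^2\rangle$ via Young's inequality together with the pointwise bound $|\nabla r_\ell^2|/r_\ell^2 \leq \ell$, using boundedness of $u$ to ensure $u \in L^2_{\ell}(\mathbb{R}^4)$.

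For surjectivity I would apply Schauder's fixed-point theorem to $\Psi$ on $\mathcal{K}$. Continuity of $\Psi$ on $\mathcal{K}$ follows from continuity and boundedness of $G$ together with dominated convergence. Compactness of $\Psi(\mathcal{K})$ in $\mathcal{B}$ rests on two ingredients: first, uniform equicontinuity produced by the bound $\|\nabla \Psi(h)\|_{L^\infty} \leq M \|\nabla \mathcal{G}\|_{L^1(\mathbb{R}^4)}$, which is finite because $\mathcal{G}$ is a Bessel-type kernel with integrable gradient on $\mathbb{R}^4$; and second, uniform smallness at infinity provided by the weight $r_\ell$, since $|\Psi(h)(x)|\, r_\ell(x) \leq (M/m^2)\, r_\ell(x) \to 0$ as $|x| \to \infty$ uniformly in $h \in \mathcal{K}$. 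Combining these via an Arzelà--Ascoli argument on an exhaustion by balls, plus the tail bound from the weight, yields precompactness of $\Psi(\mathcal{K})$ in $C^0_\ell(\mathbb{R}^4) = \mathcal{B}$, and Schauder's theorem delivers a fixed point.

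The step I expect to require most care is the compactness claim in the global weighted space $\mathcal{B}$: local equicontinuity on compact subsets of $\mathbb{R}^4$ together with the uniform pointwise bound $|\Psi(h)| \leq M/m^2$ must be combined, using the decay of $r_\ell$ at infinity, into genuine precompactness in $C^0_\ell(\mathbb{R}^4)$. Once this is in place, the rest of the proof is a clean combination of boundedness and monotonicity of $G$ with standard PDE tools.
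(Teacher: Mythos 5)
Your proof is correct, but it takes a genuinely different route from the paper's. The paper does not reformulate $\mathcal{T}(w)=g$ as a fixed-point problem for $\Psi(h)=\mathcal{G}\ast G(g-h)$; instead it proves injectivity by testing $\mathcal{T}u-\mathcal{T}v=0$ against $r_{\lambda,\ell'}^2\,(G(u)-G(v))$, using monotonicity of $G$ together with the auxiliary weighted-energy inequality \eqref{est-poincare} and, crucially, the scaled weight $r_{\lambda,\ell'}(x)=(1+\lambda|x|^2)^{-\ell'/2}$ with $\lambda$ small so that the term coming from $\Delta r_{\lambda,\ell'}^2$ is dominated by $m^2$. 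Surjectivity in the paper is obtained by approximating $v\in\mathcal{B}$ by $v_n\in C^2_c$, solving $\mathcal{L}u_n+G(u_n)=\mathcal{L}v_n$ via Proposition~\ref{prop-reg soln of varphi-bar} (itself Schaefer's fixed point theorem), and showing $\{u_n\}$ is Cauchy in $L^2_{\lambda,\ell'}$ by the same weighted monotonicity estimate. Your approach replaces all of this by a single application of Schauder's theorem directly on the convex set $\mathcal{K}=\{0\le h\le M/m^2\}$, with compactness obtained from the uniform Lipschitz bound $\|\nabla\Psi(h)\|_{L^\infty}\le M\|\nabla\mathcal{G}\|_{L^1}$ (valid since $\nabla\mathcal{G}\sim|x|^{-3}$ near the origin in $d=4$ is locally integrable and decays exponentially) combined with the decay of $r_\ell$ at infinity, and by a Liouville maximum-principle argument for injectivity, exploiting the regularity $u=h_1-h_2\in C^2_\ell$ and the sign $m^2+V>0$. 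What your argument buys is a more self-contained and elementary proof of surjectivity that avoids invoking the separate existence proposition and the $\lambda$-scaled weight; what the paper's $L^2_{\lambda,\ell'}$ monotonicity estimate buys, by contrast, is a quantitative contraction-type bound that is reused verbatim for the Cauchy argument in surjectivity and is closer in spirit to the later differentiability estimates for $(\mathcal{T}'_v)^{-1}$.

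One secondary caveat in your injectivity paragraph: the ``equivalently'' alternative — testing $(-\Delta+m^2+V)u=0$ against $u\,r_\ell^2$ and absorbing $\langle\nabla u,u\nabla r_\ell^2\rangle$ by Young using $|\nabla r_\ell^2|/r_\ell^2\le\ell$ — produces the coefficient $m^2-\ell^2/2$ in front of $\|u\,r_\ell\|_{L^2}^2$, which is not positive for the large $\ell$ the rest of the paper needs. This is exactly the obstruction the paper sidesteps by passing to the weight $r_{\lambda,\ell'}$ with $\lambda$ small, which trades $\ell^2$ for $\lambda\ell'^2$. Since your primary injectivity argument is the maximum-principle one (which does go through for $u\in C^2_\ell$ bounded), this is not a gap in your proof, but the alternative as stated would not work without the $\lambda$-weight device.
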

\begin{proof}
  Let us first show that $\mathcal{T}$ is one-one. In particular, we show that
  for small enough $\lambda > 0$ if $u, v \in \mathcal{B} \subset L_{\lambda,
  \ell'}^2 (\mathbb{R}^4)$ for $\ell \leqslant \ell'$ such that $\mathcal{T}u
  =\mathcal{T}v$, then $u = v$.
  
  Since $\mathcal{T}u =\mathcal{T}v,$ we have
  \begin{equation} \label{Tmap-1}
      u - v + [\mathcal{G} \ast G (u) -\mathcal{G} \ast G (v)] = 0.    
  \end{equation}
  Multiply this by $r_{\lambda, \ell'} (G (u) - G (v))$ and integrate on
  $\mathbb{R}^4$ to get
  \begin{equation}
      \langle u - v, G (u) - G (v) \rangle_{\lambda, \ell'} +
        \langle \mathcal{G} \ast (G (u) - G (v)), G (u) - G (v)
    \rangle_{\lambda, \ell'} = 0, \label{uv-eq1}
  \end{equation}
  where $\langle a, b \rangle_{\lambda, \ell'} \assign \int a (x) b (x) (1 + \lambda | x |^2)^{- \ell'} d x.$
  Consequently, since $G$ is non-decreasing and $\langle u - v, G (u) - G (v)
  \rangle_{\lambda, \ell'} \geqslant 0$, from \eqref{uv-eq1} we get
  \begin{equation}
    \langle \mathcal{G} \ast (G (u) - G (v)), G (u) - G (v) \rangle_{\lambda
    \comma \ell'} \leqslant 0. \label{Tmap-1-est1}
  \end{equation}
  But by substituting $G (u) - G (v)$ in place of $\mathfrak{v}$ in   \eqref{est-poincare} we obtain
  \begin{align}
      & \langle \nabla (\mathcal{G} \ast (G (u) - G (v))), (\mathcal{G} \ast
      (G (u) - G (v))) \nabla r_{\lambda, \ell'}^2 \rangle
      + m^2 \| \mathcal{G} \ast (G (u) - G (v)) \|_{L^2_{\lambda, \ell'}
      }^2 \nonumber \\
      & \quad \leq \langle \mathcal{G} \ast (G (u) - G (v)), (G (u) - G (v)) \rangle_{\lambda, \ell'} .
    \label{est-poincare-1}
  \end{align}
  So, using \eqref{Tmap-1-est1} in above yield
  \begin{equation}
      \langle \nabla (\mathcal{G} \ast (G (u) - G (v))), (\mathcal{G} \ast (G
      (u) - G (v))) \nabla r_{\lambda, \ell'}^2 \rangle + m^2 \| \mathcal{G}
      \ast (G (u) - G (v)) \|_{L^2_{\lambda, \ell'}}^2
      \leqslant 0.
    \label{uv-est2}
  \end{equation}
  But due to the integration by parts we have
  \begin{align*}
        & \langle \nabla (\mathcal{G} \ast (G (u) - G (v))), (\mathcal{G} \ast
       (G (u) - G (v))) \nabla r_{\lambda, \ell'}^2 \rangle\\
       & =  - \langle \nabla (\mathcal{G} \ast (G (u) - G (v))), (\mathcal{G}
       \ast (G (u) - G (v))) \nabla r_{\lambda, \ell'}^2 \rangle - \int (\mathcal{G} \ast (G (u) - G (v)))^2 \Delta r_{\lambda,
       \ell'}^2 \mathd x.   
  \end{align*}
  This gives
  \begin{align}
      & 2 \langle \nabla (\mathcal{G} \ast (G (u) - G (v))), (\mathcal{G}
      \ast (G (u) - G (v))) \nabla r_{\lambda, \ell'}^2 \rangle  =- \int (\mathcal{G} \ast (G (u) - G (v)))^2 \Delta r_{\lambda,
      \ell'}^2 \mathd x,
   \label{uv-est3}
  \end{align}
  where $r_{\lambda, \ell'}^2 (x) = (1 + \lambda | x |^2)^{- \ell'}$ \ and
  $\nabla r_{\ell'}^2 (x) = - 2 \lambda \ell' (1 + \lambda | x |^2)^{- (\ell'
  + 1)} x$ and
  \begin{equation}
      \Delta r_{\ell'}^2 (x)  =  - 4 \lambda \ell (1 + \lambda | x |^2)^{-
      (\ell + 1)} + 4 \lambda^2 \ell (\ell + 1) | x |^2 (1 + \lambda | x
      |^2)^{- (\ell + 2)} .
    \label{r-est1}
  \end{equation}
  Hence, substituting \eqref{r-est1} into \eqref{uv-est3} give
  \begin{align}
      & 2 \langle \nabla (\mathcal{G} \ast (G (u) - G (v))), (\mathcal{G}
      \ast (G (u) - G (v))) \nabla r_{\lambda, \ell'}^2 \rangle \nonumber \\
      & \qquad = - 4 \lambda^2 \ell' (\ell' + 1) \int (\mathcal{G} \ast (G (u) - G
      (v)))^2  | x |^2 (1 + \lambda | x |^2)^{- (\ell' + 2)} \mathd x \nonumber \\
      & \qquad \qquad + 4 \lambda \ell' \int (\mathcal{G} \ast (G (u) - G (v)))^2  (1 +
      \lambda | x |^2)^{- (\ell' + 1)} \mathd x.
    \label{uv-est4}
  \end{align}
  Consequently, using \eqref{uv-est4} into \eqref{uv-est2} provides
  \begin{equation}
      ( m^2 {- 2 \lambda \ell'}^2 ) \| \mathcal{G} \ast (G (u) - G
      (v)) \|_{L^2_{\lambda, \ell'}}^2 \leqslant 0.
    \label{Tmap-2}
  \end{equation}
  By taking sufficiently small $\lambda$, using \eqref{Tmap-1} together with
  \eqref{Tmap-2} we get $\| u - v \|_{L^2_{\lambda, \ell'}}^2
  \leqslant 0$. This implies $u = v$ in $L^2_{\lambda, \ell'} (\mathbb{R}^4)$
  and hence the map $\mathcal{T}$ is 1-1.
  
  To prove surjectivity let $v \in \mathcal{B}$ and $\{ v_n \}_n \subset
  C_c^2 (\mathbb{R}^4)$ such that
  \[ 
       \| v_n - v \|_{L^2_{\ell'}} \rightarrow 0 \quad \tmop{as} \quad n
       \rightarrow \infty .
      \]
  Let $h_n \assign (- \Delta + m^2) v_n$. Then it follows, from the first part
  of Proposition~\ref{prop-reg soln of varphi-bar}, that the elliptic PDE
  \[ 
       (- \Delta + m^2) u_n + G (u_n) = h_n
      \]
  admits a unique solution in $C_{\ell}^2 (\mathbb{R}^4)$. Then we get
  \begin{equation}
      u_n +\mathcal{G} \ast G (u_n) =\mathcal{G} \ast h_n = v_n \quad \Rightarrow       \mathcal{T} (u_n) = v_n .
    \label{T-un-vn}
  \end{equation}
  Next, we prove that $\{ u_n \}_n$ forms a Cauchy sequence in $L_{\lambda,
  \ell'}^2 (\mathbb{R}^4)$ for sufficiently small $\lambda > 0$. By
  multiplying
  \begin{equation}
      u_n - u_m +\mathcal{G} \ast G (u_n) -\mathcal{G} \ast G (u_m) = v_n -   v_m \label{unvn}
  \end{equation}
  by $r_{\lambda, \ell'}^2 (G (u_n) - G (u_m))$ and integrate on $\mathbb{R}^4$ we get
  \begin{align*}
     & \langle u_n - u_m, G (u_n) - G (u_m) \rangle_{\lambda, \ell'} +
       \langle \mathcal{G} \ast G (u_n) -\mathcal{G} \ast G (u_m), G (u_n) - G
       (u_m) \rangle_{\lambda, \ell'}\\
       & \qquad =\langle v_n - v_m, G (u_n) - G (u_m) \rangle_{\lambda, \ell'} . 
  \end{align*}
  Since $G$ is increasing, $\langle u_n - u_m, G (u_n) - G (u_m)
  \rangle_{\lambda, \ell'} \geqslant 0$. Thus, the above implies
  \begin{equation}
      \langle \mathcal{G} \ast G (u_n) -\mathcal{G} \ast G (u_m), G (u_n) - G
      (u_m) \rangle_{\lambda, \ell'} \leqslant \langle v_n - v_m, G (u_n) - G
      (u_m) \rangle_{\lambda, \ell'} .
    \label{unvn-est1}
  \end{equation}
  Thus, taking $\mathfrak{v}= G (u_n) - G (u_m)$ in \eqref{est-poincare}
  (modified version for $\lambda$) yield
  \begin{align}
      & \langle \nabla (\mathcal{G} \ast (G (u_n) - G (u_m))), (\mathcal{G}
      \ast (G (u_n) - G (u_m))) \nabla r_{\lambda, \ell'}^2 \rangle\\
      & + m^2 \| \mathcal{G} \ast (G (u_n) - G (u_m)) \|_{L^2_{\lambda,
      \ell'}}^2 \leqslant  \langle \mathcal{G} \ast (G (u_n) - G (u_m)), G (u_n) - G
      (u_m) \rangle_{\lambda, \ell'} .
     \label{unvn-est2}
  \end{align}
  So the last two estimates together with the Cauchy-Schwartz inequality give
  \begin{align}
      & \langle \nabla (\mathcal{G} \ast (G (u_n) - G (u_m))), (\mathcal{G}
      \ast (G (u_n) - G (u_m))) \nabla r_{\lambda, \ell'}^2 \rangle\\
      & \qquad + m^2 \| \mathcal{G} \ast (G (u_n) - G (u_m)) \|_{L^2_{\lambda,
      \ell'}}^2     \leqslant  \| v_n - v \|_{L^2_{\ell} } \| G (u_n) - G
      (u_m) \|_{L^2_{\lambda, \ell'}} .
    \label{unvn-est3}
  \end{align}
  Consequently, the computation as in \eqref{Tmap-2} gives
  \begin{equation}
      ( m^2 {- 2 \lambda \ell'}^2 ) \| \mathcal{G} \ast (G (u_n)
      - G (u_m)) \|_{L^2_{\lambda, \ell'}}^2
      \leqslant  \| v_n - v \|_{L^2_{\ell'}} \| G (u_n) - G
      (u_m) \|_{L^2_{\lambda, \ell'}} .
    \label{unvn-est5}
  \end{equation}
  Substituting $\mathcal{G} \ast (G (u_n) - G (u_m))$ from \eqref{unvn} into
  \eqref{unvn-est5} followed by the reverse triangle inequality yield
  \begin{align}
    \begin{array}{lll}
      ( m^2 {- 2 \lambda \ell'}^2 ) \| u_n - u_m \|_{L^2_{\lambda,
      \ell'}}^2  \leq  ( m^2 {- 2 \lambda \ell'}^2) \| v_n - v \|_{L^2_{\lambda, \ell'} }
       + \| v_n - v \|_{L^2_{\ell'}} \| G (u_n) - G (u_m)
      \|_{L^2_{\lambda, \ell'}} .
    \end{array} \label{unvn-est6}
  \end{align}
  Since $\| v_n - v \|_{L^2_{\ell'} (\mathbb{R}^4)} \rightarrow 0 \tmop{as} n
  \rightarrow \infty$ and $G$ is bounded, for sufficiently small $\lambda > 0$
  we get that $\{ u_n \}_n$ forms a Cauchy sequence in $L_{\lambda, \ell'}^2
  (\mathbb{R}^4)$. Since $L_{\lambda, \ell'}^2 (\mathbb{R}^4)$ is complete,
  there exists $L_{\lambda, \ell'}^2 (\mathbb{R}^4) \ni u = \lim_{n
  \rightarrow \infty} u_n$. Since $G$ is bounded, $G (u) = \lim_{n \rightarrow
  \infty} G (u_n)$ in $L_{\lambda, \ell'}^2 (\mathbb{R}^4)$. Thus by taking
  limit $n \rightarrow \infty$ in \eqref{T-un-vn} we obtain the existence of
  $u \in L_{\lambda, \ell'}^2 (\mathbb{R}^4)$ such that
  \[ 
       u +\mathcal{G} \ast G (u) = v  \quad \Rightarrow \mathcal{T} (u) = v.
      \]
  So if we show that $u \in \mathcal{B}$ then we are done but that is true
  because,
  \[ 
       \| u \|_{C_{\ell}^0} \leqslant  | \alpha |
       R \int_{\mathbb{R}^4} \mathcal{G} (x - y) r_{\lambda, \ell'} (x) \, \mathd
       x + \| v \|_{C_{\ell}^0},
      \]
  which is finite. Hence $u \in \mathcal{B}$ and we finish the proof of
  bijectivity of $\mathcal{T}$.
\end{proof}

Hence we know that $\mathcal{T}^{- 1}$ exists. Let ${\mathcal{T}^{- 1}}  (V) =
v$ for some $v, V \in \mathcal{B}$. Then $V =\mathcal{T} (v)$ and from \eqref{T-map}, we have that
\begin{equation}
    {\mathcal{T}^{- 1}}  (V) = V - \alpha \mathcal{G} \ast K_R \left( \exp
    \left( {\alpha \mathcal{T}^{- 1}}  (V) - C_{\varepsilon} \right) \right) .
  \label{T-inv-map}
\end{equation}
From here it is clear that, for $V \in \mathcal{B}$, \begin{align}
    \left| {\mathcal{T}^{- 1}}  (V) (x) \right| & \leqslant  | V (x) | +
     \alpha \left| \mathcal{G} \ast K_R \left( \exp \left( {\alpha
     \mathcal{T}^{- 1}}  (V) (x) - C_{\varepsilon} \right) \right) \right|\\
     & \leqslant  | V (x) | + \alpha R \int_{\mathbb{R}^4} | \mathcal{G} (x
     - y) | \, \mathd y.   
\end{align}
Consequently, by the Minkowski inequality for integral we get
\begin{align}
    \left\| {\mathcal{T}^{- 1}}  (V) (x) \right\|_{L^2 (\Omega)}^2 & \leq
     \int_{\Omega} | V (x, \omega) |^2 \, \mathbb{P} (\mathd \omega) + (\alpha R)^2
    \left( \int_{\mathbb{R}^4} \left( \int_{\Omega} | \mathcal{G} (x - y) |^2 \, 
    \mathbb{P} (\mathd \omega) \right)^{1 / 2} \, \mathd y \right)^2\\
    & \leq \int_{\Omega} | V (x, \omega) |^2 \, \mathbb{P} (\mathd \omega) +
    (\alpha R)^2 \left( \int_{\mathbb{R}^4} | \mathcal{G} (x - y) | \, \mathd y
    \right)^2 \backassign  C_R .
  \label{est-invT}
\end{align}

In our next result we show that $\mathcal{T}^{- 1}$ is continuous as well.
\begin{lemma}
  \label{lem-T-inv-continuity}The map $\mathcal{T}^{- 1}$ is continuous on
  $\mathcal{B}.$
\end{lemma}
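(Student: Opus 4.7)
The plan is to take a sequence $V_n \to V$ in $\mathcal{B}$, set $u_n := \mathcal{T}^{-1}(V_n)$ and $u := \mathcal{T}^{-1}(V)$ (both exist by Lemma~\ref{lem-T-bijectivity}), and prove $\|u_n - u\|_{C^0_\ell} \to 0$. The strategy proceeds in three stages: first obtain convergence in a weighted $L^2$ space by mimicking the injectivity argument of Lemma~\ref{lem-T-bijectivity}; then upgrade this to uniform convergence on every compact set via dominated convergence combined with an Arzel\`a--Ascoli type argument; and finally handle the tail using the uniform boundedness of $G(w) = \alpha K_R(\exp(\alpha w - C_\varepsilon))$.

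In the first stage I would subtract the defining equations to get $(u_n - u) + \mathcal{G} \ast (G(u_n) - G(u)) = V_n - V$, fix $\ell' > \ell + 2$ large enough that the embedding $\mathcal{B} \hookrightarrow L^2_{\lambda, \ell'}(\mathbb{R}^4)$ holds for any small $\lambda$, and choose $\lambda > 0$ small enough that $m^2 - 2\lambda (\ell')^2 > 0$. Multiplying by $r_{\lambda, \ell'}^2(G(u_n) - G(u))$, integrating, and exploiting the monotonicity of $G$ together with \eqref{est-poincare-1} applied to $\mathfrak{v} = G(u_n) - G(u)$, followed by Cauchy--Schwarz on the right-hand side, I would arrive at
\[
(m^2 - 2\lambda (\ell')^2)\,\|\mathcal{G} \ast (G(u_n) - G(u))\|^2_{L^2_{\lambda, \ell'}} \leq \|V_n - V\|_{L^2_{\lambda, \ell'}}\, \|G(u_n) - G(u)\|_{L^2_{\lambda, \ell'}}.
\]
The bound $|G| \leq \alpha R$ keeps the last factor uniformly controlled, while $\|V_n - V\|_{L^2_{\lambda, \ell'}} \lesssim \|V_n - V\|_{C^0_\ell} \to 0$ by the embedding, so $\mathcal{G} \ast (G(u_n) - G(u)) \to 0$ in $L^2_{\lambda, \ell'}$ and therefore $u_n \to u$ in $L^2_{\lambda, \ell'}$.

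In the second stage, $L^2_{\lambda, \ell'}$-convergence restricts to $L^2_{\mathrm{loc}}$-convergence, so along a subsequence $u_n \to u$ almost everywhere and hence $g_n := G(u_n) - G(u) \to 0$ a.e.; combined with $|g_n| \leq 2 \alpha R$ and the fact that $\mathcal{G} \in L^1(\mathbb{R}^4)$ (by exponential decay of the massive Green function), dominated convergence gives $(\mathcal{G} \ast g_n)(x) \to 0$ for every $x$. The family $\{\mathcal{G} \ast g_n\}$ is equicontinuous because $\|\mathcal{G} - \tau_h \mathcal{G}\|_{L^1} \to 0$ as $|h| \to 0$ (continuity of translation in $L^1$), so an Arzel\`a--Ascoli argument upgrades pointwise convergence to uniform convergence on every compact set along the subsequence, and uniqueness of the limit promotes this to the full sequence. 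Since $V_n \to V$ in $C^0_\ell$ implies $V_n \to V$ uniformly on compact sets, I conclude that $u_n \to u$ uniformly on compacts. The third stage dispatches the tail through the pointwise bound $r_\ell(x)|u_n(x) - u(x)| \leq \|V_n - V\|_{C^0_\ell} + 2\alpha R \|\mathcal{G}\|_{L^1} \, r_\ell(x)$: given $\epsilon > 0$, I choose $A$ so large that the second term is below $\epsilon/2$ for all $|x| > A$, and for $n$ large both $\|V_n - V\|_{C^0_\ell}$ and the compact supremum $\sup_{|x| \leq A}|u_n - u|$ are smaller than $\epsilon/2$, giving $\|u_n - u\|_{C^0_\ell} \to 0$.

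The main obstacle is the second stage: a direct weighted H\"older estimate of the form $\|\mathcal{G} \ast g_n\|_{L^\infty_{\lambda, \ell'}} \lesssim \|g_n\|_{L^p_{\lambda, \ell'}}$ does not translate into a $C^0_\ell$ bound, because the weights required for the forward embedding $\mathcal{B} \hookrightarrow L^2_{\lambda, \ell'}$ (namely $\ell' > \ell + 2$) are incompatible with those needed for a reverse embedding $L^\infty_{\lambda, \ell'} \hookrightarrow L^\infty_\ell$, which would force $\ell' \leq \ell$. The detour through dominated convergence and equicontinuity circumvents this obstruction by exploiting the $L^1$-integrability of $\mathcal{G}$ and translation continuity in $L^1$ rather than a weighted $L^p$-style estimate.
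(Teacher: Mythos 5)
Your proof is correct, and it takes a genuinely different route from the paper's. The paper's argument is a single compactness step: it shows the family $\bar{g}_n := \mathcal{G}\ast K_R(\exp(\alpha\bar{w}_n - C_\varepsilon))$ is bounded in $C^2_\ell(\mathbb{R}^4)$ by Schauder estimates for $(-\Delta + m^2)\bar{g}_n = K_R(\cdots)$, hence equicontinuous via the embedding $B^2_{\infty,\infty,\ell} \hookrightarrow B^{1/2}_{\infty,\infty,\ell}$ into weighted H\"older functions, applies Ascoli--Arzel\`a, and then identifies any limit point with $\bar{w}$ by passing to the limit in the fixed-point equation and invoking bijectivity of $\mathcal{T}$. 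You instead first establish convergence in $L^2_{\lambda,\ell'}$ by rerunning the weighted energy estimate from the surjectivity half of Lemma~\ref{lem-T-bijectivity} (a tool the paper's continuity proof does not use), then upgrade to local uniform convergence via a.e.\ convergence along a subsequence, dominated convergence using $\mathcal{G} \in L^1(\mathbb{R}^4)$, and equicontinuity derived from continuity of translation in $L^1$ rather than from elliptic regularity, and finally handle the tail explicitly with the uniform bound $|\mathcal{G}\ast g_n| \leq 2|\alpha| R \|\mathcal{G}\|_{L^1}$ and the decay of $r_\ell$. Your tail argument makes explicit a step the paper leaves implicit: the uniform $L^\infty$ bound on $\bar{g}_n$ is what converts local uniform convergence into $C^0_\ell$-convergence, a point the paper glosses over when it asserts relative compactness in $\mathcal{B}$ directly from Ascoli--Arzel\`a. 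Replacing elliptic regularity with $L^1$-translation continuity makes the equicontinuity step more elementary, at the cost of the additional preliminary $L^2$-convergence stage; both routes rest on the same core facts (boundedness of $K_R$, integrability and decay of $\mathcal{G}$, monotonicity of $G$).
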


\begin{proof}
  Let $\{ w_n \}_n \subset \mathcal{B}$ be a sequence converging to some $w
  \in \mathcal{B}$. Let us set $\mathcal{T}^{- 1} w_n \backassign \bar{w}_n$
  and $\mathcal{T}^{- 1} w \backassign \bar{w} .$ We will show that $\bar{w}_n
  \rightarrow \bar{w} $ as $n \rightarrow \infty$ in $\mathcal{B}.$ Note that,
  we have
  \begin{equation}
      \bar{w}_n (x) + \alpha \int_{\mathbb{R}^4} \mathcal{G} (x - y) K_R (\exp(\alpha \bar{w}_n (y) - C_{\varepsilon})) \, \mathd y = w_n (x) .\label{eq-wn-bar-wn}
  \end{equation}
  The first claim in the current proof is that the sequence $\{ \bar{w}_n
  \}_n$ is relatively compact in $\mathcal{B}$. In order to prove this, first
  we show that $\{ \bar{w}_n \}_n$ is uniformly bounded. Since $\{ w_n \}_n$
  is convergent in $\mathcal{B}$ and $\alpha$ is a constant, due to
  \eqref{eq-wn-bar-wn} it is sufficient to show the uniform boundedness
  property for $\left\{ \int_{\mathbb{R}^4} \mathcal{G} (\cdot - y) K_R (\exp
  (\alpha \bar{w}_n (y))) \, \mathd y \right\}_n \subset \mathcal{B}.$ For this
  observe that, by  (47), (48)  of~{\cite{AVG2021}} we have
  \begin{align}
    \int_{\mathbb{R}^4} \mathcal{G} (x - y) K_R (\exp (\alpha \bar{w}_n
       (y))) \, \mathd y & ~\leq ~R {\int_{| z | < 1}}  \left\{ \frac{-
       2}{(4 \pi)^2} \log_+ (| z |) + C_1 \right\} \,  \mathd z\\
       &   \qquad + R \int_{| z | \geqslant 1} C_2 \exp (- C_3 | z |) \, \mathd z,   
  \end{align}
  where the rhs is bounded uniformly in $x$ and $n$. To move further, let us
  set 
  $$ \int_{\mathbb{R}^4} \mathcal{G} (x - y) K_R (\exp (\alpha \bar{w}_n (y) - C_{\varepsilon})) d y \backassign \bar{g}_n   \textrm{ in }   \mathcal{B}. $$ But by its
  structure we know that $\bar{g}_n$ solves the following equation uniquely
  \[ 
       (- \Delta + m^2) \bar{g}_n = K_R (\exp (\alpha \bar{w}_n -
       C_{\varepsilon})) .
      \]
  Thus, 
  \[ 
       \| \bar{g}_n \|_{C_{\ell}^2} ~\lesssim~ \| K_R
       (\exp (\alpha \bar{w}_n - C_{\varepsilon})) \|_{C_{\ell}^0
       } ~\leq~ R .
     \]
  This further implies, due to embedding, see (3.10) in
  {\cite{mourratGlobalWellposednessDynamic2017}}, $B^2_{\infty, \infty, \ell}
  (\mathbb{R}^4) \hookrightarrow B^{1 / 2}_{\infty, \infty, \ell}
  (\mathbb{R}^4)$ and the equivalency of $B^{1 / 2}_{\infty, \infty, \ell}
  (\mathbb{R}^4)$ with $\frac{1}{2}$-H{\"o}lder weighted continuous functions,
  the equicontinuity of $\{ \bar{g}_n \}_n$. Thus, since the uniform topology,
  which space $\mathcal{B}$ has, implies the topology of compact convergence,
  \ the Ascoli--Arzel{\`a} theorem (e.g. see Theorem~47.1 on page~290
  in~{\cite{JM00}}) \ implies the relative compactness of $\{ \bar{w}_n \}_n
  \subset \mathcal{B}$. Let us denote a converging subsequence $\{
  \bar{w}_{n_k} \}_k$ of $\{ \bar{w}_n \}_n$ and set the limit as $\mathcal{B}
  \ni \hat{w} \assign \lim_{k \rightarrow \infty} \bar{w}_{n_k}$. Since $G
  (\cdot) = \alpha K_R (\exp (\alpha \cdot - C_{\varepsilon}))$ is smooth and
  bounded, we have
  \[ 
       \alpha \int_{\mathbb{R}^4} \mathcal{G} (x - y) K_R (\exp (\alpha
       \bar{w}_{n_k} (y) - C_{\varepsilon}))\,  \mathd y ~\rightarrow~ \alpha
       \int_{\mathbb{R}^4} \mathcal{G} (x - y) K_R (\exp (\alpha \hat{w} (y) -
       C_{\varepsilon})) \, \mathd y, \]
  as $k \rightarrow \infty$ and thus passing the limit $k \rightarrow \infty$ in \eqref{eq-wn-bar-wn}
  yield
  \[ 
       \hat{w} (x) + \alpha \int_{\mathbb{R}^4} \mathcal{G} (x - y) K_R (\exp
       (\alpha \hat{w} (y) - C_{\varepsilon})) \, \mathd y = w (x) \quad \Rightarrow~
       \mathcal{T} (\hat{w}) = w.\]
  But since $\mathcal{T}^{- 1} w = \bar{w}$ and $\mathcal{T}$ is bijective, we
  have $\hat{w} = \bar{w} .$ Consequently, any converging subsequence $\{
  \bar{w}_{n_k} \}_k$ converges to $\bar{w}$, which implies the continuity as
  desired. Hence the proof of continuity of $\mathcal{T}^{- 1}$ on
  $\mathcal{B}$ is complete.
\end{proof}

Recall that we aim to prove that, for fixed $\varepsilon$ and $R$,
$\varphi_{\varepsilon, R}$, which solves~\eqref{SPDE-e-R-approx} and has
representation~\eqref{defn-F}, is Malliavin differentiable. Due
to~\eqref{eq:F-and-T}, in order to prove the differentiability of
$\varphi_{\varepsilon, R}$ or say $F$ as the next step we show that the map
$\mathcal{T}^{- 1}$, whose existence and continuity is proved, respectively,
in Lemmata~\ref{lem-T-bijectivity} and~\ref{lem-T-inv-continuity}, is differentiable.
\begin{lemma}
  \label{lemma-T-differentiability}The map $\mathcal{T}^{- 1}$ is
  differentiable and there exists a constant $M > 0$ (depends on $m$) such
  that
  \begin{equation}
      \| (\mathcal{T}_v')^{- 1} \|_{\mathfrak{L} (\mathcal{B}, \mathcal{B})}
      \leqslant M,
        \label{bound-derv-T-inverse}
  \end{equation}
  where $\mathfrak{L} (\mathcal{B}, \mathcal{B})$ is the set of all bounded linear operators from $\mathcal{B}$ to $\mathcal{B}$, uniformly for $v \in \mathcal{B}$.
\end{lemma}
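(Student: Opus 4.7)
The lemma is a Banach-space inverse-function-theorem statement applied to $\mathcal{T}$ together with a uniform bound on the inverse of the linearisation $\mathcal{T}'_v$. I would proceed in four steps.

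\emph{Step 1: Fréchet differentiability of $\mathcal{T}$.} Since $K_R$ is smooth with bounded derivatives and $\exp(\alpha \cdot - C_\varepsilon)$ is smooth with locally bounded derivatives, a direct Taylor expansion of $G(v + th) - G(v)$ in $\mathcal{B}$ combined with the mapping properties of $\mathcal{G} *$ used in Lemma~\ref{lem-T-bijectivity} yields
\[
  \mathcal{T}'_v h = h + \mathcal{G} * \bigl(G'(v) h\bigr), \qquad G'(v) = \alpha^2 K_R'\bigl(\exp(\alpha v - C_\varepsilon)\bigr)\exp(\alpha v - C_\varepsilon),
\]
where $G'(v)$ is smooth, non-negative and bounded by a constant $Q_R$ independent of $v$. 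Continuity of $v \mapsto \mathcal{T}'_v$ follows by the same computation.

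\emph{Step 2: Bijectivity of $\mathcal{T}'_v$ on $\mathcal{B}$.} For fixed $v$ set $q \assign G'(v) \geq 0$. For any $g \in \mathcal{B}$ the equation $h + \mathcal{G}*(qh) = g$ is solved by approximating $g$ by smooth compactly supported $g_n$ as in the surjectivity part of Lemma~\ref{lem-T-bijectivity}; the argument goes through verbatim with the monotone nonlinearity $G$ replaced by the linear, spatially inhomogeneous, bounded and non-negative multiplier $q$. Injectivity is obtained by testing the difference of two solutions against $r_{\lambda,\ell'}^2 (q (h_1-h_2))$ and proceeding as in the derivation of~\eqref{Tmap-2}: the sign $q \geq 0$ forces $\langle q(h_1-h_2), h_1-h_2\rangle_{\lambda,\ell'} \geq 0$, and for $\lambda$ small enough (depending only on $m$) the weighted Poincaré estimate~\eqref{est-poincare} gives $h_1 = h_2$.

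\emph{Step 3: Uniform bound on $(\mathcal{T}'_v)^{-1}$.} Let $g \assign \mathcal{T}'_v h$. Repeating the computation of Lemma~\ref{lem-T-bijectivity} with $\mathfrak{v} = qh$ and using $h = g - \mathcal{G}*(qh)$ produces, after absorbing the quadratic term on the right-hand side via the lower bound $m^2 - 2\lambda (\ell')^2 > 0$, an estimate of the form $\|\mathcal{G}*(qh)\|_{L^2_{\lambda,\ell'}} \lesssim_m \|g\|_{L^2_{\lambda,\ell'}}$ and hence $\|h\|_{L^2_{\lambda,\ell'}} \lesssim_m \|g\|_{L^2_{\lambda,\ell'}}$, uniformly in $v$ thanks to $q \geq 0$. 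To promote this weighted $L^2$ bound to the $\mathcal{B} = C^0_\ell$ bound required by~\eqref{bound-derv-T-inverse}, I would use the representation $h = g - \mathcal{G}*(qh)$ together with the pointwise domination of the Green's function of $\mathcal{L} + q$ by $\mathcal{G}$ (a consequence of the weak maximum principle, valid precisely because $q \geq 0$). This gives $\|\mathcal{G}*(qh)\|_{C^0_\ell} \lesssim_m \|g\|_{C^0_\ell}$ with a constant depending only on $m$, and hence $\|h\|_{C^0_\ell} \leq M \|g\|_{C^0_\ell}$ with $M = M(m)$.

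\emph{Step 4: Conclusion.} Combining Steps 1--3 with Lemmata~\ref{lem-T-bijectivity}--\ref{lem-T-inv-continuity}, $\mathcal{T}$ is a $C^1$ bijection of $\mathcal{B}$ onto itself whose differential is a uniform Banach-space isomorphism. The open mapping theorem and the chain rule then give differentiability of $\mathcal{T}^{-1}$ at every $V \in \mathcal{B}$ with $(\mathcal{T}^{-1})'_V = (\mathcal{T}'_{\mathcal{T}^{-1}V})^{-1}$, and~\eqref{bound-derv-T-inverse} follows from Step~3. The main obstacle is Step~3, specifically transferring the natural weighted $L^2$ monotonicity bound into a $C^0_\ell$ bound that is uniform in $v$ (and in particular independent of the a priori unbounded range of $v$), ensuring that $M$ depends only on $m$ and not on $R$ or $\varepsilon$.
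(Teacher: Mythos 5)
Your overall strategy mirrors the paper's: compute $\mathcal{T}'_v h = h + \mathcal{G}*(G'(v)h)$, establish its bijectivity using $G'(v) \geq 0$, and deduce a $v$-uniform bound on $(\mathcal{T}'_v)^{-1}$. Steps~1 and~2 are essentially the paper's; for Step~2 the paper cites Proposition~\ref{prop-reg soln of varphi-bar} and Lemma~\ref{lemma_exponentialuniqueness} applied to $(-\Delta + m^2 + G'(v))w = 0$ rather than re-running the weighted $L^2$ monotonicity computation of Lemma~\ref{lem-T-bijectivity}, but your route is equally valid.

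The genuine gap is in Step~3, the passage from a weighted $L^2$ bound to the $C^0_\ell$ bound. First, the representation you want is $h = g - \mathcal{G}_q*(qg)$, where $\mathcal{G}_q$ is the Green's function of $\mathcal{L}+q$; the identity $h = g - \mathcal{G}*(qh)$ that you write is just the defining equation and gives nothing by itself. Second, and more seriously, even once you have the correct representation, the pointwise domination $0 \leq \mathcal{G}_q(x,y) \leq \mathcal{G}(x-y)$ together with the only a priori bound $q = G'(v) \leq Q_R \lesssim R$ yields $\int \mathcal{G}_q(x,y)q(y)(1+|y|^2)^{\ell/2}\,\mathd y \leq Q_R \int \mathcal{G}(x-y)(1+|y|^2)^{\ell/2}\,\mathd y$, a constant that scales with $R$ — whereas the lemma needs $M = M(m)$. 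To obtain $R$-uniformity one must instead use the Feynman--Kac ``killing'' identity $\int\mathcal{G}_q(x,y)q(y)\,\mathd y \leq 1$ (the total mass absorbed by the non-negative potential $q$ before the $m^2$-exponential clock is at most one) and its weighted refinement $\int\mathcal{G}_q(x,y)q(y)(1+|y|^2)^{\ell/2}\,\mathd y \lesssim_{m,\ell}(1+|x|^2)^{\ell/2}$; the crude domination is not enough. You correctly flag this $R$-uniformity as ``the main obstacle'', but you do not carry it out, so Step~3 as written is not a proof. The paper sidesteps the issue entirely by writing the Feynman--Kac formula for $W = (\mathcal{T}'_v)^{-1}U$ as the solution of $(\mathcal{L}+G'(v))W = \mathcal{L}U$ and bounding $\exp(-\int_0^t G'(v(B_s))\,\mathd s) \leq 1$ directly inside the probabilistic integral, so the constant is $\int_0^\infty e^{-m^2 t}(1+t)^{\ell/2}\,\mathd t$, depending only on $m$ and $\ell$.
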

\begin{proof}
  It is straightforward to see that the Gateaux derivative of $\mathcal{T}$ at
  $v \in \mathcal{B}$ in the direction of an arbitrary $w \in \mathcal{B}$, is
  \[ 
       \lim_{t \rightarrow 0} \frac{\mathcal{T} (v + t w) -\mathcal{T} (v)}{t}
       = w + \int_{\mathbb{R}^4} \mathcal{G} (\cdot - y) G' (v (y)) w (y)\, 
       \mathd y, \]
  where recall that $G (v (\cdot)) = \alpha K_R (\exp (\alpha v (\cdot) -
  C_{\varepsilon})) .$ Thus, $\mathcal{T}$ is differentiable. Let us denote by
  $\mathcal{T}_v' (w)$ the derivative of $\mathcal{T}$ at $v \in \mathcal{B}$
  in the direction of $w \in \mathcal{B}$ which is defined above, i.e., \
  \begin{equation}
      \mathcal{T}_v' (w) \assign w + \int_{\mathbb{R}^4} \mathcal{G} (\cdot -
      y) G' (v (y)) w (y) \, \mathd y.
    \label{map-T-prime}
  \end{equation}
  Note that since $G'$ is bounded and $w \in \mathcal{B}$, $\mathcal{T}_v'
  (w)$ is a well-defined element of $\mathcal{B}$. Next, let us fix $v \in
  \mathcal{B}$ in the remaining part of the proof. 
  
  We claim that
  $\mathcal{T}_v'$ is one-one. Indeed, let $\mathcal{T}_v' (w) = 0$ for each
  $w \in \mathcal{B}$ as element of $\mathcal{B}$ then by \eqref{map-T-prime}
  we deduce that $w$ solves the following equation
  \begin{equation}
    \begin{array}{l}
      (- \Delta + m^2) w + G' (v) w = 0.
    \end{array} \label{temp-diffeqn}
  \end{equation}
  It is clear that $w = 0$ is a solution to~\eqref{temp-diffeqn}. From the computation in the proof of Proposition~\ref{prop-reg soln of varphi-bar} and
  Lemma~\ref{lemma_exponentialuniqueness},  we know that~\eqref{temp-diffeqn}
  has a unique solution in $C_{\ell}^2 (\mathbb{R}^4)$, in
  particular $w = 0$ is the unique solution to \eqref{temp-diffeqn}. Thus,
  $\mathcal{T}_v'$ is non-degenerate and $(\mathcal{T}_v')^{- 1} \in
  \mathfrak{L} (\mathcal{B}, \mathcal{B})$ is well-defined. 
  
  We aim
  to show that $(\mathcal{T}_v')^{- 1} \in \mathfrak{L} (\mathcal{B},
  \mathcal{B})$ is uniformly bounded in $v \in \mathcal{B}$. For this let us
  take any $U \in \mathcal{B}$ and $W \assign (\mathcal{T}_v')^{- 1} (U)$. \
  Note that $(\mathcal{T}_v')^{- 1} (U)$ satisfies
  \begin{equation}
      (\mathcal{T}_v')^{- 1} (U) = U - \int_{\mathbb{R}^4} \mathcal{G} (\cdot
      - y) G' (v (y)) (y) (\mathcal{T}_v')^{- 1} (U) \, \mathd y.
    \label{Tinv-diffeq}
  \end{equation}
  Let us consider $C_{c, \ell}^2 (\mathbb{R}^4)$, space of functions in
  $C_{\ell}^2 (\mathbb{R}^4)$ having compact support, as subset of
  $\mathcal{B}$. Let $U \in C_{c, \ell}^2 (\mathbb{R}^4)$ and let $V = (-  \Delta + m^2) U \in \mathcal{B}$. Then from \eqref{Tinv-diffeq} we get that
  $(\mathcal{T}_v')^{- 1} (U)$ satisfy the following equation
  \[ 
       (- \Delta + m^2 + G' (v (y))) (\mathcal{T}_v')^{- 1} (U) = V.
     \]
  Here $G' (v) (\mathcal{T}_v')^{- 1} (U)$ is simply the product of two
  functions $G' (v)$ and $(\mathcal{T}_v')^{- 1} (U)$. Thus, since $G'
  \geqslant 0$, Theorem~5.1 on page~145 in~{\cite{AF75}} implies, for $w \in
  \mathcal{B}$,
  \[ 
       (\mathcal{T}_v')^{- 1} (U) (x) = \tilde{\mathbb{E}}_x \left[
       \int_0^{\infty} e^{- m^2 t} V (B_t) \exp \left( - \int_0^t G' (v (B_s)) \,  \mathd s \right)\,  \mathd t \right],
      \]
  by the Feynman--Kac formula, where $B$ is an $\mathbb{R}^4$-valued Brownian
  motion which starts at $x$ defined on a complete probability space
  $(\tilde{\Omega}, \tilde{\mathfrak{F}}, \tilde{\mathbb{P}})$ and
  $\tilde{\mathbb{E}}_x$ denotes the expectation w.r.t. $\tilde{\mathbb{P}}$.
  But the r.h.s. in above can be estimated as follows to get, since $\exp \left(
  - \int_0^t G' (v (B_s)) \, \mathd s \right)$ is bounded,
  \begin{align*}
    (\mathcal{T}_v')^{- 1} (U) (x)  & ~\leq~ \| V \|_{\mathcal{B}}~
       \tilde{\mathbb{E}}_x \left[ \int_0^{\infty} e^{- m^2 t} (1 + \tilde{\mathbb{E}}_x [| B_t |^2])^{\ell / 2} \, \mathd t \right]\\
       & ~\leq~ \| U \|_{C_{\ell}^2 } \int_0^{\infty} e^{-
       m^2 t} (1 + | x |^2 + t)^{\ell / 2} \, \mathd t.  
  \end{align*}
  This gives
  \begin{align*}
    {\| (\mathcal{T}_v')^{- 1} (U) \|_{\mathcal{B}}}  & ~\leq~ \| U
       \|_{C_{\ell}^2} \sup_{x \in \mathbb{R}^4}
       \int_0^{\infty} e^{- m^2 t} (1 + | x |^2)^{- \ell / 2} (1 + | x |^2 +
       t)^{\ell / 2} \, \mathd t\\
       & ~\leq~ \| U \|_{C_{\ell}^2} \sup_{x \in
       \mathbb{R}^4} \int_0^{\infty} e^{- m^2 t} \, \mathd t,   
  \end{align*}
  which is finite. Consequently, by extension to $\mathcal{B}$, we get that
  there exists a constant $M > 0$ (depends on $m$) such that
  \[ 
       \| (\mathcal{T}_v')^{- 1} \|_{\mathfrak{L} (\mathcal{B}, \mathcal{B})}
       \leqslant M \textrm{ for every } v \in \mathcal{B}.
     \]
\end{proof}

Now we come to an important result of our paper that justifies the Malliavin
differentiability of $\varphi_{\varepsilon, R}$, for fix $\varepsilon$ and
$R$, which solves \eqref{SPDE-e-R-approx}. In other words, the next result
gives the differentiability of $F$ which is defined in \eqref{defn-F}.
\begin{theorem} \label{thm-MallDiff}
  Let us fix $\varepsilon > 0$, $R > 1$ and $x \in
  \mathbb{R}^4$. The solution $\varphi \assign \varphi_{\varepsilon, R}$ to
  \eqref{SPDE-e-R-approx} is such that $\varphi (y) \in \mathbb{D}^{1, 2}$,
  for every $y \in \mathbb{R}^4$. Moreover, the process $\{ D_z \varphi (x), z
  \in \mathbb{R}^4 \}$ satisfies
  \[ 
       D_z \varphi (x) + \alpha \int_{\mathbb{R}^4} \mathcal{G} (x - y) D_z
       \varphi (y) G' (\varphi (y)) \, \mathd y ~=~ (\mathfrak{a}_{\varepsilon}
       \ast \mathcal{G}) (x - z),
      \]
  which is equivalent to
  \[ 
       (\mathcal{L}D_z \varphi) (x) + \alpha G' (\varphi (x)) D_z \varphi (x)
       ~=~ (\mathfrak{a}_{\varepsilon} \ast \delta_z) (x)
       ~=~\mathfrak{a}_{\varepsilon} (x - z) .
      \]
\end{theorem}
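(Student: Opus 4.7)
The plan is to apply Theorem~\ref{thm-MallavinDiffAbstract} to the functional
\[ F(\omega) \assign \varphi_{\varepsilon, R}(x, \omega) = (\mathcal{T}^{-1}(\mathcal{G}_\varepsilon \ast \xi))(x), \]
leveraging the bijectivity, continuity and differentiability of $\mathcal{T}$ from Lemmata~\ref{lem-T-bijectivity}, \ref{lem-T-inv-continuity} and~\ref{lemma-T-differentiability}. The equation for $D_z\varphi$ will then fall out of the linear integral equation satisfied by the directional derivative of $\mathcal{T}^{-1}$.

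To verify the two conditions of Theorem~\ref{thm-MallavinDiffAbstract}, I would fix $h \in H = L^2_\ell(\mathbb{R}^4)$. Since $\mathfrak{a}_\varepsilon \ast \mathcal{G}$ is smooth with exponential decay, $\mathcal{G}_\varepsilon \ast h \in \mathcal{B}$. The shift $\omega \mapsto \omega + t i(h)$ translates to $\xi \mapsto \xi + th$, so with $V_t \assign \mathcal{G}_\varepsilon \ast \xi + t (\mathcal{G}_\varepsilon \ast h)$ and $\varphi_t \assign \mathcal{T}^{-1}(V_t) \in \mathcal{B}$, we have $\tilde F_h[\omega + t i(h)] = \varphi_t(x)$. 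Differentiating $\mathcal{T}(\varphi_t) = V_t$ in $t$ and inverting via $(\mathcal{T}_{\varphi_t}')^{-1} \in \mathfrak{L}(\mathcal{B}, \mathcal{B})$ (Lemma~\ref{lemma-T-differentiability}) shows that $t \mapsto \varphi_t$ is $C^1$ in $\mathcal{B}$, hence absolutely continuous, which gives condition (1); the pointwise limit at $t = 0$ equals $W(x)$, where $W \assign (\mathcal{T}_\varphi')^{-1}(\mathcal{G}_\varepsilon \ast h)$ satisfies
\begin{equation*}
  W(x) + \alpha \int_{\mathbb{R}^4} \mathcal{G}(x - y)\, G'(\varphi(y))\, W(y)\, \mathd y = (\mathcal{G}_\varepsilon \ast h)(x).
\end{equation*}

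To realise $W(x)$ in the form $\langle \varsigma(\omega), h\rangle_H$, I would invoke the Feynman--Kac representation from the proof of Lemma~\ref{lemma-T-differentiability} for the operator $\mathcal{L} + \alpha G'(\varphi)$. Using $(-\Delta + m^2)(\mathcal{G}_\varepsilon \ast h) = \mathfrak{a}_\varepsilon \ast h$ and Fubini, this gives $W(x) = \int \Gamma(x, z)\, h(z)\, \mathd z$ where
\[ \Gamma(x, z) \assign \tilde{\mathbb{E}}_x\!\left[\int_0^\infty e^{-m^2 t}\, \mathfrak{a}_\varepsilon(B_t - z)\, \exp\!\left(-\int_0^t \alpha G'(\varphi(B_s))\, \mathd s\right) \mathd t\right]. \]
Setting $\varsigma(\omega)(z) \assign \Gamma(x, z)$, testing the integral equation for $W$ against arbitrary $h$ and using uniqueness (as in Lemma~\ref{lemma-T-differentiability}) shows that $\Gamma(x, \cdot)$ satisfies the asserted integral equation for $D_z \varphi_{\varepsilon, R}(x)$. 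Applying $\mathcal{L}$ and using $\mathcal{L}\mathcal{G} = \delta$ together with $\mathcal{L}(\mathfrak{a}_\varepsilon \ast \mathcal{G}) = \mathfrak{a}_\varepsilon$ then yields the differential form.

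The main obstacle I anticipate is the $L^2(\Omega; H)$-integrability of $\varsigma$, namely
\[ \mathbb{E}\!\left[\int_{\mathbb{R}^4} \Gamma(x, z)^2\, r_\ell(z)^2\, \mathd z\right] < \infty. \]
I would extract this from the Feynman--Kac formula: because $\alpha G'(\varphi) \geq 0$ the multiplicative functional is bounded by $1$, so interchanging expectation and time integration gives the $\omega$-free pointwise bound $\Gamma(x, z) \leq \mathcal{G}_\varepsilon(x - z)$. The exponential decay of $\mathcal{G}_\varepsilon$ in $|x - z|$ then provides the required weighted square-integrability uniformly in $\omega$, completing the verification of Theorem~\ref{thm-MallavinDiffAbstract} and delivering the claimed equations for $D_z \varphi$.
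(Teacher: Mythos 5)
Your proposal follows essentially the same strategy as the paper: apply Tindel's criterion (Theorem~\ref{thm-MallavinDiffAbstract}) to $F(\omega) = \varphi_{\varepsilon,R}(x,\omega) = (\mathcal{T}^{-1}(\mathcal{G}_\varepsilon\ast\xi))(x)$, using the bijectivity, continuity and uniform invertibility of $\mathcal{T}'$ from Lemmata~\ref{lem-T-bijectivity}--\ref{lemma-T-differentiability}. The two places where you deviate tactically from the paper are worth noting. For condition (1), the paper asserts absolute continuity somewhat tersely on the grounds that the argument of $\mathcal{T}^{-1}$ is affine in $t$; you instead argue that $t\mapsto\varphi_t$ is $C^1$ via $(\mathcal{T}'_{\varphi_t})^{-1}$, which is the cleaner and more defensible route. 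For condition (2), the paper produces $\varsigma$ abstractly by the Riesz representation theorem from the operator bound $\|(\mathcal{T}^{-1})'_{\varphi_{0,\varepsilon}}(\mathfrak{G}h)\|_{L^\infty_\ell}\lesssim \|h\|_{L^2_\ell}$, whereas you construct it explicitly via the Feynman--Kac kernel $\Gamma(x,z)$ and verify the $L^2(\Omega;H)$-integrability by the domination $\Gamma(x,z)\le\mathcal{G}_\varepsilon(x-z)$. Both are correct; the paper's route requires one fewer computation, while yours simultaneously delivers the stated integral equation for $D_z\varphi$ as a by-product (the paper defers this to the Feynman--Kac formula \eqref{FK} stated after the theorem). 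Two small cautions on your version: when you set $\varsigma(\omega)(z):=\Gamma(x,z)$, you should reconcile $W(x)=\int\Gamma(x,z)h(z)\,\mathd z$ with the weighted pairing on $H=L^2_\ell$ (i.e.\ absorb $r_\ell^{-2}$ into $\varsigma$); and the justification "$\alpha G'(\varphi)\ge0$" should be "$G'(\varphi)\ge0$", since with $G(w)=\alpha K_R(\exp(\alpha w-C_\varepsilon))$ the linearization coefficient is $G'(\varphi)=\alpha^2 K_R'(\cdot)\exp(\cdot)\ge0$ regardless of the sign of $\alpha$ — this is consistent with \eqref{eqn-theta} and with the Feynman--Kac formula in Lemma~\ref{lemma-T-differentiability}, while the extra $\alpha$ in the theorem statement appears to be a typographical slip.
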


\begin{proof}
  Since $\varepsilon > 0$ and $R > 1$ are fixed, we will avoid there explicit
  dependency. The idea of the proof is to show that the conditions of Theorem
  \ref{thm-MallavinDiffAbstract} with $F (\omega) \assign
  \varphi_{\varepsilon, R} (x, \omega)$, are satisfied which will imply the
  conclusions of the current result. By \eqref{est-invT} we have that that
  $\varphi_{\varepsilon, R} (x) \in L^2 (\Omega)$. Indeed,
  \begin{align*}
    \| \varphi_{\varepsilon, R} (x) \|_{L^2  (\Omega)}^2 & =  \|
       \mathcal{T}^{- 1} (\mathcal{G}_{\varepsilon} \ast \xi) (x) \|_{L^2 (\Omega)}^2 \\
       & \leq \int_{\Omega} | (\mathcal{G}_{\varepsilon} \ast \xi) (x)
       |^2 \, \mathbb{P} (\mathd \omega) + (\alpha R)^2 \left(
       \int_{\mathbb{R}^4} | \mathcal{G} (x - y) | \, \mathd y \right)^2,   
  \end{align*}
  which is finite with the bound depends on $R$. Denote by $\mathfrak{G}h$ the
  following defined function
  \[ 
       \mathfrak{G}h (x) = (\mathcal{G}_{\varepsilon} \ast h) (x) \assign
       \int_{\mathbb{R}^4} (\mathfrak{a}_{\varepsilon} \ast \mathcal{G}) (x -
       y) h (y)\,  \mathd y.
     \]
  Note that $\varphi_{0, \varepsilon} (\omega) =\mathfrak{G} \xi$. Observe that, for $h \in L^2_{\ell} (\mathbb{R}^4)$, due to \eqref{eq:F-and-T}
  \begin{equation}
      F [\omega + t i (h)] = \{ \mathcal{T}^{- 1} (\mathfrak{G} \xi +
      t\mathfrak{G}h) \} (x) .
    \label{rel-F-T-inverse}
  \end{equation}
  But, since the above expression is linear in $t$, $F [\omega + t i (h)]$ as   function of $t$ is an absolutely continuous function of $t$. From   Lemma~\ref{lemma-T-differentiability}, we know that $\mathbb{P}$-a.s. $\mathcal{T}_{\varphi_{0, \varepsilon}}'$ exists and non-degenerate and   satisfy $\| (\mathcal{T}_{\varphi_{0, \varepsilon}}')^{- 1} \|_{\mathfrak{L}
  (\mathcal{B}, \mathcal{B})} \leqslant M$, $\mathbb{P}$-a.s. Thus,
  $\mathbb{P}$-a.s. we have
  \[ 
       \lim_{t \rightarrow 0} \frac{F [\omega + t i (h)] - F (\omega)}{t}  ~=~
        \{ (\mathcal{T}^{- 1} )_{\varphi_{0, \varepsilon}}' (\mathfrak{G}h)
       \} (x) .
      \]
  Finally, due to the nice decay property of $\mathfrak{a}_{\varepsilon} \ast
  \mathcal{G}$ and H{\"o}lder inequality, $\mathbb{P}$-a.s. we have
  \[ 
       \| (\mathcal{T}^{- 1} )_{\varphi_{0, \varepsilon}}' (\mathfrak{G}h)
       \|_{L_{\ell}^{\infty}}  ~\leq~  \| (\mathcal{T}^{- 1}
       )_{\varphi_{0, \varepsilon}}' \|_{\mathfrak{L} (\mathcal{B},       \mathcal{B})}  ~ \| h \|_{L^2_{\ell}} \left(
       \int_{\mathbb{R}^4} | (\mathfrak{a}_{\varepsilon} \ast \mathcal{G}) (y)
       r_{\ell}(y) |^2 \, \mathd y \right)^{1 / 2},
     \]
  but this is finite. Now we define an $H$-valued random variable by
  \[ \varsigma : \Omega \ni \omega \mapsto \{ (\mathcal{T}^{- 1}
     )_{\varphi_{0, \varepsilon}}' \mathfrak{G} \} (x) \in H \textrm{ such that } \langle \varsigma (\omega), h \rangle = \{ (\mathcal{T}^{- 1}
     )_{\varphi_{0, \varepsilon}}' (\mathfrak{G}h) \} (x) . \]
  This is well-defined by the Riesz representation theorem and satisfies point~$(2)$ of Theorem~\ref{thm-MallavinDiffAbstract}. Hence, we complete
  the proof of Theorem~\ref{thm-MallDiff}.
\end{proof}

Recall that $\varphi_{\varepsilon, R}$ is the unique solution
to~\eqref{SPDE-e-R-approx}. Proceeding further, we set $\theta_{\varepsilon, R}^z \assign D_z (\varphi_{\varepsilon, R})$. By applying Theorem \ref{thm-MallDiff} to $\varphi_{\varepsilon, R}$, we ascertain that $\varphi_{\varepsilon, R} (x) \in \mathbb{D}^{1, 2}$ and at point $z \in\mathbb{R}^4,$
\begin{equation}
  (\mathcal{L} \theta_{\varepsilon, R}^z) (x) + \alpha^2 K_R' (\exp (\alpha
  \varphi_{\varepsilon, R} (x) - C_{\varepsilon})) \exp (\alpha
  \varphi_{\varepsilon, R} (x) - C_{\varepsilon}) \theta_{\varepsilon, R}^z
  (x) =\mathfrak{a}_{\varepsilon} (x - z) . \label{eqn-theta}
\end{equation}
Here we have also used the chain rule (Proposition~1.2.3
of~{\cite{Nualart2006}}). Since~\eqref{eqn-theta} is linear in
$\theta_{\varepsilon, R}^z$, the Feynman-Kac formula yields
\begin{align}
	\theta_{\varepsilon, R}^z  (x) & = \int_0^{\infty} e^{- m^2 t} ~ \mathbb{E}_x
	\left[\mathfrak{a}_{\varepsilon} (x + B_t - z)  e^{- \int_0^t \alpha^2 \mathfrak{K} (s)  \, \mathd s}
	\right] \, \mathd t, \label{FK}
\end{align}
where
\[ 
     \mathfrak{K} (s) \assign K_R' (\exp (\alpha \varphi_{\varepsilon} (x +
     B_s) - C_{\varepsilon})) \exp (\alpha \varphi_{\varepsilon} (x + B_s) -
     C_{\varepsilon}) .
    \]
Here $\mathbb{E}_x$ is the expectation operator w.r.t. the probability
measure $\mathbb{P}^x$ and  \{$B_t, t \geqslant 0$\}  is a
$\mathbb{R}^4$-valued Brownian motion under $\mathbb{P}^x$ with initial
condition $B_0 = x$. Observe that, for $x_1, x_2 \in \mathbb{R}^4,$
expressions \eqref{lemB.1} followed by \eqref{LandDrelation} yield
\begin{align}
		& \tmop{Cov} (F_1 (f \cdot \varphi_{\varepsilon, R} (\cdot + x_1)), F_2
		(f \cdot \varphi_{\varepsilon, R} (\cdot + x_2))) \nonumber \\
		& =  \mathbb{E} \left[\{ \delta (I - L)^{- 1} D (F_1 (f \cdot
		\varphi_{\varepsilon, R} (\cdot + x_1))) \} \{ \delta (I - L)^{- 1} D (F_2
		(f \cdot \varphi_{\varepsilon, R} (\cdot + x_2))) \} \right] \nonumber\\
		& = \int_{\mathbb{R}^4} \mathbb{E} \left[\{ (I - L)^{- 1} D_z  (F_1 (f \cdot
		\varphi_{\varepsilon, R} (\cdot + x_1))) \} \{ (I - L)^{- 1} D_z  (F_2 (f
		\cdot \varphi_{\varepsilon, R} (\cdot + x_2))) \} \right] \, \mathd z\\
		& \quad + \int_{\mathbb{R}^4} \int_{\mathbb{R}^4} \mathbb{E} \big[D_{z'} \{ (I -
		L)^{- 1} D_z  (F_1 (f \cdot \varphi_{\varepsilon, R} (\cdot + x_1))) \} \times \nonumber \\
		& \qquad \hspace{2 cm}
		D_z \{ (I - L)^{- 1} D_{z'}  (F_2 (f \cdot \varphi_{\varepsilon, R} (\cdot
		+ x_2))) \} \big] \, \mathd z \, \mathd z' .
	 \label{covExpression}
\end{align}
Since, see Corollary B.6 in {\cite{FG2019}} for the proof, for every $F \in
L^2 (\Omega)$ we have
\begin{equation}
  \| D (I - L)^{- 1} F \|_{L^2 (\Omega ; L_{\ell}^2 (\mathbb{R}^4))} \lesssim
  \| F \|_{L^2 (\Omega)}, \label{corB.6}
\end{equation}
we estimate the second term in the r.h.s. of \eqref{covExpression}, by the H{\"o}lder inequality and Proposition~1.2.3 of~{\cite{Nualart2006}} along with Lipschitzness of $F_1$ and $F_2$, as 
\begin{align}
		& \int_{\mathbb{R}^4} \int_{\mathbb{R}^4} \mathbb{E} \big[D_{z'} \{ (I -
		L)^{- 1} D_z  (F_1 (f \cdot \varphi_{\varepsilon, R} (\cdot + x_1))) \} \times \nonumber \\
		& \qquad \hspace{2cm}
		D_z \{ (I - L)^{- 1} D_{z'}  (F_2 (f \cdot \varphi_{\varepsilon, R} (\cdot
		+ x_2))) \} \big] \, \mathd z \, \mathd z' \nonumber \\
		& \lesssim  ~ \int_{\mathbb{R}^4} \| D_z (F_1 (f \cdot \varphi_{\varepsilon,
			R} (\cdot + x_1))) \|_{L^2 (\Omega)} \| D_z (F_2 (f \cdot
		\varphi_{\varepsilon, R} (\cdot + x_2))) \|_{L^2 (\Omega)} \, \mathd z \nonumber \\
		& = ~  \int_{\mathbb{R}^4} \| F_1' (f \cdot \varphi_{\varepsilon, R} (\cdot +
		x_1)) D_z (f \cdot \varphi_{\varepsilon, R} (\cdot + x_1)) \|_{L^2
			(\Omega)} \times  \nonumber \\
		& \qquad  \hspace{2cm}\| F_2' (f \cdot \varphi_{\varepsilon, R} (\cdot + x_2)) D_z (f
		\cdot \varphi_{\varepsilon, R} (\cdot + x_2)) \|_{L^2 (\Omega)} \, \mathd z \nonumber \\
		& \lesssim_{F_1, F_2} ~ \int_{\mathbb{R}^4} \| f \cdot D_z
		(\varphi_{\varepsilon, R} (\cdot + x_1)) \|_{L^2 (\Omega)} \| f \cdot D_z
		(\varphi_{\varepsilon, R} (\cdot + x_2)) \|_{L^2 (\Omega)} \, \mathd z \nonumber \\
		& \leq_f ~ \sup_{\substack{x \in B (x_1, 1) \\ y \in B (x_2, 1)}}
		\int_{\mathbb{R}^4} \left(\mathbb{E} \left[| \theta_{\varepsilon, R}^z (x) |^2 \right]\right)^{1 /
			2} \left(\mathbb{E} \left[| \theta_{\varepsilon, R}^z (y) |^2 \right]\right)^{1 /
			2} \,  \mathd z.
	\label{rhs2Est}
\end{align}

Further, since $(I - L)^{- 1}$ is a bounded operator on $L^2 (\Omega)$, as in \eqref{rhs2Est},  the first integral in r.h.s. of \eqref{covExpression} satisfies
\begin{align}  
    & \int_{\mathbb{R}^4} \mathbb{E} \left[\{ (I - L)^{- 1} D_z  (F_1 (f \cdot
		\varphi_{\varepsilon, R} (\cdot + x_1))) \} \{ (I - L)^{- 1} D_z  (F_2 (f
		\cdot \varphi_{\varepsilon, R} (\cdot + x_2))) \} \right] \, \mathd z \nonumber \\
    & \lesssim_{F_1, F_2, f}  ~ \sup_{\substack{x \in B (x_1, 1) \\ y \in B (x_2, 1)}} 
    \int_{\mathbb{R}^4} \left(\mathbb{E} \left[| \theta_{\varepsilon, R}^z (x) |^2 \right]\right)^{1 /
			2} \left(\mathbb{E} \left[| \theta_{\varepsilon, R}^z (y) |^2 \right]\right)^{1 /
			2} \,  \mathd z.
  \label{rhs1Est}
\end{align}
Thus, substituting \eqref{rhs2Est}-\eqref{rhs1Est} into \eqref{covExpression}
yield
\begin{align}
    & | \tmop{Cov} (F_1 (f \cdot \varphi_{\varepsilon, R} (\cdot + x_1)), F_2
    (f \cdot \varphi_{\varepsilon, R} (\cdot + x_2))) |\\
     & \lesssim_{F_1, F_2, f} \sup_{\substack{x \in B (x_1, 1) \\ y \in B (x_2, 1)}}
    \int_{\mathbb{R}^4} \left(\mathbb{E} \left[| \theta_{\varepsilon, R}^z (x) |^2 \right]\right)^{1 /
			2} \left(\mathbb{E} \left[| \theta_{\varepsilon, R}^z (y) |^2 \right]\right)^{1 /
			2} \,  \mathd z.
  \label{covEst}
\end{align}
Applying the Minkowski inequality for integrals and the Fubini theorem  to \eqref{FK} we further have
\begin{align}
    & \left(\mathbb{E} \left[| \theta_{\varepsilon}^z (x) |^2 \right] \right)^{1 / 2} =  \left( \mathbb{E} \left[ \left| \int_0^{\infty} e^{- m^2 t}
    \mathbb{E}_x \left[ \mathfrak{a}_{\varepsilon} (x + B_t - z) e^{- \alpha^2
    \int_0^t \mathfrak{K} (s) \, \mathd s} \right] \mathd t \right|^2 \right]
    \right)^{1 / 2} \nonumber\\
    & \leq  \mathbb{E}_x \left[ \int_0^{\infty} e^{- m^2 t}
    \mathfrak{a}_{\varepsilon} (x + B_t - z) \left( \mathbb{E} \left[ e^{- 2
    \alpha^2 \int_0^t \mathfrak{K} (s) \, \mathd  s} \right] \right)^{1 / 2} \mathd t    \right] \nonumber \\
    & \lesssim \mathbb{E}_x \left[ \int_0^{\infty} e^{- m^2 t}
    \mathfrak{a}_{\varepsilon} (x + B_t - z) \, \mathd t \right]. 
  \label{E-theta-est}
\end{align}
Note that the r.h.s. of \eqref{E-theta-est} is independent of $R$. Hence, substituting the above into \eqref{covEst} and using the Feynman-Kac formula \eqref{FK} (with zero nonlinearity), we obtain 
\begin{align}
    & | \tmop{Cov} (F_1 (f \cdot \varphi_{\varepsilon, R} (\cdot + x_1)), F_2
    (f \cdot \varphi_{\varepsilon, R} (\cdot + x_2))) | \nonumber \\
    & \lesssim  \sup_{\substack{x \in B (x_1, 1) \\ y \in B (x_2, 1)}}  \int_{\mathbb{R}^4}
    \mathbb{E}_x \left[ \int_0^{\infty} e^{- m^2 t} \mathfrak{a}_{\varepsilon}
    (x + B_t - z) \, \mathd t \right] \mathbb{E}_y \left[ \int_0^{\infty} e^{-
    m^2 t} \mathfrak{a}_{\varepsilon} (y + B_t - z) \, \mathd t \right] \, \mathd  z \nonumber \\
    & = \sup_{\substack{x \in B (x_1, 1)\\ y \in B (x_2, 1)}}  \int_{\mathbb{R}^4} \left(
    \int_{\mathbb{R}^4} \mathcal{G}  (x - u) \mathfrak{a}_{\varepsilon} (u - z)\, \mathd u
     \right) \left( \int_{\mathbb{R}^4} \mathcal{G}  (y - u)
    \mathfrak{a}_{\varepsilon} (u - z) \, \mathd u \right) \, \mathd z \nonumber \\
    & \backassign \sup_{\substack{x \in B (x_1, 1)\\ y \in B (x_2, 1)}} I (x, y) . \label{covEst-1}
\end{align}
To estimate $I (x, y)$, we utilize the following representation of the kernel
$\mathcal{G}$, which is based on the Fourier transform, see pg. 273 of {\cite{AY2002}}, 
\[ (\mathcal{G} \ast \mathfrak{a}_{\varepsilon} (\cdot - z)) (x) = [\mathcal{F}^{- 1}
   (\mathcal{F} ((m^2 - \Delta)^{- 1} (\mathfrak{a}_{\varepsilon} (\cdot -
   z))))] (x), \qquad \forall x, z \in \mathbb{R}^4 . \]
Thus, we deduce that, for all $x, y \in \mathbb{R}^4$,
\begin{align}
    I (x, y) & =  \int_{\mathbb{R}^4} (\mathcal{G} \ast \mathfrak{a}_{\varepsilon}) (x
    - z) (\mathcal{G} \ast \mathfrak{a}_{\varepsilon}) (y - z) \, \mathd z  =  ((m^2 - \Delta)^{- 2} (\mathfrak{a}_{\varepsilon} \ast
    \mathfrak{a}_{\varepsilon})) (x - y) \nonumber \\
    & =  [\mathcal{F}^{- 1} (\mathcal{F} ((m^2 - \Delta)^{- 2}
    (\mathfrak{a}_{\varepsilon} \ast \mathfrak{a}_{\varepsilon})))] (x - y)  =  \int_{\mathbb{R}^4} e^{i z \cdot (x - y)} \frac{((\mathcal{F}
    (\mathfrak{a}_{\varepsilon})) (z))^2}{(m^2 + | z |^2)^2} \,  \mathd z.  \label{Kconv}
\end{align}
Next, we apply a change of variable $z \mapsto A z$, in \eqref{Kconv}, where $A$ represents the rotation matrix on $\mathbb{R}^4$ such that the vector $x - y \in
\mathbb{R}^4_H$ transform to align with one axis, let's say the first axis. Then, with $z =A w$, we have 
\begin{align}
    I (x, y) &  =  \int_{\mathbb{R}^4} \frac{e^{i w \cdot \frac{(\bar{x}_1 -
    \bar{y}_1, 0, 0, 0)}{| x - y |}} ((\mathcal{F}
    (\mathfrak{a}_{\varepsilon})) (A w))^2}{(m^2 | x - y |^2 + | A w |^2)^2} \, 
    \mathd w \nonumber \\
    & =  \int_{\mathbb{R}^3} \int_{\mathbb{R}} \frac{e^{\pm i w_1}
    ((\mathcal{F} (\mathfrak{a}_{\varepsilon})) (A w))^2}{(m^2 | x - y |^2 +     w_1^2 + | w_1 |^2)^2} \, \mathd w_1 \, \mathd w_{1, \perp},
  \label{I3a}
\end{align}
where
\[ 
     A x = (\bar{x}_1, 0, 0, 0), A y = (\bar{y}_1, 0, 0, 0) \infixand w =
     (w_1, w_{1, \perp}) \in \mathbb{R} \times \mathbb{R}^3 .
   \]
Let us only consider the positive sign in $e^{\pm i w_1}$. A similar approach will handle the negative sign case, with the contour $C$ containing $-i$ instead $i$.

First, we compute the integral $\int_{\mathbb{R}} \frac{e^{i w_1} ((\mathcal{F}
(\mathfrak{a}_{\varepsilon})) (A (w_1, w_{1, \perp})))^2}{(m^2 | x - y |^2 +
w_{^1}^2 + | w_{1, \perp} |^2)^2} \mathd w_1$ for fixed $w_{1, \perp} \in
\mathbb{R}^3$ using the residue theorem. We define the contour $C$ that traverses along the real lime from $- a$ to $a$ and then counterclockwise along a semicircle centered at 0 from $- a$ to $a$. Choosing $a \geq 1$ ensures that the point $i (m^2 | x - y |^2 + | w_{1, \perp} |^2)^{1 / 2}$ lies within the contour. Now, consider the contour integral
\begin{equation}
  \int_C \frac{e^{i w_1} ((\mathcal{F} (\mathfrak{a}_{\varepsilon})) (A (w_1,
  w_{1, \perp})))^2}{(m^2 | x - y |^2 + w_{^1}^2 + | w_{1, \perp} |^2)^2}
  \mathd w_1 . \label{contourInt}
\end{equation}
Since the integrand in \eqref{contourInt} has singularities at $\pm i (m^2 |
x - y |^2 + | w_{1, \perp} |^2)^{1 / 2}$ with multiplicity 2, by the residue theorem, for fixed $w_{1, \perp} \in \mathbb{R}^3$, we have
\begin{align}
  \int_C \frac{e^{i w_1} ((\mathcal{F} (\mathfrak{a}_{\varepsilon})) (A (w_1,
  w_{1, \perp})))^2}{(m^2 | x - y |^2 + w_{^1}^2 + | w_{1, \perp} |^2)^2}
  \, \mathd w_1  = - 2 \pi i (\tmop{Res} (f (w_1), w_1 = i (m^2 | x - y |^2 + |
  w_{1, \perp} |^2)^{1 / 2})), \label{R1}
\end{align}
where 
$$ f (w_1) = \frac{e^{i w_1} ((\mathcal{F} (\mathfrak{a}_{\varepsilon}))
(A (w_1, w_{1, \perp})))^2}{(m^2 | x - y |^2 + w_{^1}^2 + | w_{1, \perp}
|^2)^2}$$ 
and
\begin{align}
    & \tmop{Res} (f (w_1), w_1 = i (m^2 | x - y |^2 + | w_{1, \perp} |^2)^{1
    / 2}) \nonumber \\
    & = ~ \lim_{z \rightarrow i (m^2 | x - y |^2 + | w_{1, \perp} |^2)^{1 / 2}}
    \frac{\mathd}{\mathd w_1} \left[ \frac{e^{i w_1} ((\mathcal{F}
    (\mathfrak{a}_{\varepsilon})) (A (w_1, w_{1, \perp})))^2}{(w_1 + i (m^2 |
    x - y |^2 + | w_{1, \perp} |^2)^{1 / 2})^2} \right].
  \label{eqn-residue}
\end{align}
Here, with abbreviated notation
$\mathcal{F} (\mathfrak{a}_{\varepsilon}) = (\mathcal{F}
(\mathfrak{a}_{\varepsilon})) (A (w_1, w_{1, \perp}))$, we find that
\begin{align}
    & \frac{\mathd}{\mathd w_1} \left[ \frac{e^{i w_1} ((\mathcal{F}
     (\mathfrak{a}_{\varepsilon})) (A (w_1, w_{1, \perp})))^2}{(w_1 + i (m^2 |
     x - y |^2 + | w_{1, \perp} |^2)^{1 / 2})^2} \right]\\
     & = ~ \frac{e^{i w_1} \mathcal{F} (\mathfrak{a}_{\varepsilon}) \{
     [i\mathcal{F} (\mathfrak{a}_{\varepsilon}) + 2 (\mathcal{F} (- i x_1
     \mathfrak{a}_{\varepsilon} (x)))] (w_1 + i (m^2 | x - y |^2 + | w_{1,
     \perp} |^2)^{1 / 2}) - 2\mathcal{F} (\mathfrak{a}_{\varepsilon}) \}}{(w_1
     + i (m^2 | x - y |^2 + | w_{1, \perp} |^2)^{1 / 2})^3}   . 
\end{align}
Hence, substituting the above expression in \eqref{eqn-residue} and then taking the limit as $a
\rightarrow \infty $ in \eqref{R1}, we obtain, for fixed $w_{1, \perp} \in\mathbb{R}^3$,
\begin{align}
    & \int_{\mathbb{R}} \frac{e^{i w_1} ((\mathcal{F}
    (\mathfrak{a}_{\varepsilon})) (A (w_1, w_{1, \perp})))^2}{(m^2 | x - y |^2
    + w_{^1}^2 + | w_{1, \perp} |^2)^2} \, \mathd w_1 \nonumber\\
    & = ~ \frac{\pi}{4} \frac{e^{- (\mathfrak{w}_{1, \perp} (x, y))^{1 / 2}}
    \mathcal{F} (\mathfrak{a}_{\varepsilon}) \{ [i\mathcal{F}
    (\mathfrak{a}_{\varepsilon}) + 2 (\mathcal{F} (-
    i\mathfrak{a}_{\varepsilon} (x)))] (2 i (\mathfrak{w}_{1, \perp} (x,
    y))^{1 / 2}) - 2\mathcal{F} (\mathfrak{a}_{\varepsilon}) \}}{{(m^2 | x - y
    |^2 + | w_{1, \perp} |^2)^{3 / 2}} } .
  \label{I4}
\end{align}
where we set  $m^2 | x - y |^2 + | w_{1, \perp} |^2 \backassign \mathfrak{w}_{1, \perp} (x, y) .$ 
Consequently,
\begin{align}
    I (x, y) & \lesssim \int_{\mathbb{R}^3} \frac{e^{- (\mathfrak{w}_{1, \perp} (x,
    y))^{1 / 2}} | (\mathcal{F} (\mathfrak{a}_{\varepsilon})) (A (w_1, w_{1,
    \perp})) |^2}{\mathfrak{w}_{1, \perp} (x, y)} \mathd w_{1, \perp} \nonumber \\
    & \quad + \int_{\mathbb{R}^3} \frac{e^{- (\mathfrak{w}_{1, \perp} (x, y))^{1 /
    2}} | (\mathcal{F} (\mathfrak{a}_{\varepsilon})) (A (w_1, w_{1, \perp})) |
    | (\mathcal{F} (- i x_1 \mathfrak{a}_{\varepsilon} (x) \nobracket) (A
    (w_1, w_{1, \perp})) |}{\mathfrak{w}_{1, \perp} (x, y)} \mathd w_{1,
    \perp} \nonumber \\
    & \quad + \int_{\mathbb{R}^3} \frac{e^{- (\mathfrak{w}_{1, \perp} (x, y))^{1 /
    2}} | (\mathcal{F} (\mathfrak{a}_{\varepsilon})) (A (w_1, w_{1, \perp}))
    |^2}{{(\mathfrak{w}_{1, \perp} (x, y))^{3 / 2}} } \mathd w_{1, \perp} \nonumber \\
    &  \backassign I_1 (x, y) + I_2 (x, y) + I_3 (x, y) .
  \label{I5}
\end{align}

Since $- i x_1 \mathfrak{a}_{\varepsilon} (x)$, where $x= (x_1,x_2,x_3,x_4)$,  is also a smooth and compactly supported function, it suffices to estimate $I_1$ and $I_3$ in \eqref{I5}. For $I_1$, using estimate \eqref{est-PWS}, for $N =1$, where we let $w_{1, \perp} = m u | x - y |$, we have 
\begin{align}
    I_1 (x, y) & \lesssim  C_N \int_{\mathbb{R}^3} \frac{e^{-
    (\mathfrak{w}_{1, \perp} (x, y))^{1 / 2}} (1 + | A (i (\mathfrak{w}_{1,
    \perp} (x, y))^{1 / 2}, w_{1, \perp}) |)^{- 2 N}}{\mathfrak{w}_{1, \perp}
    (x, y)} \, \mathd w_{1, \perp} \nonumber \\
    & \lesssim_N  \int_{\mathbb{R}^3} \frac{e^{- (\mathfrak{w}_{1, \perp}
    (x, y))^{1 / 2}} (1 + | i (\mathfrak{w}_{1, \perp} (x, y))^{1 / 2}, w_{1,
    \perp} |^2)^{- N}}{\mathfrak{w}_{1, \perp} (x, y)} \mathd w_{1, \perp} \nonumber \\
    \dela{& =  \int_{\mathbb{R}^3} \frac{e^{- \left( m^2 | x - y |^2 + m^2 | x - y
    |^2 |u|^2  \right)^{1 / 2}} (1 + m^2 | x - y |^2 + 2 | x -
    y |^2 | u |^2)^{- N}}{m^2 | x - y |^2 + m^2 | x - y |^2 {| u |^2} } m^3 |
    x - y |^3 \mathd u  \quad \nonumber\\}
    & \lesssim  e^{- m | x - y |} \int_{\mathbb{R}^3} \frac{m | x - y | e^{-
    | u |}}{\left( 1 + |u|^2  \right) (m^2 | x - y |^2 + 2 m^2
    | x - y |^2 | u |^2)^N} \, \mathd u \nonumber \\
    & \lesssim e^{- m | x - y |} \int_{\mathbb{R}^3} \frac{e^{- |u |}}{\left( 1 + | u|^2 \right)^2} \, \mathd u. 
  \label{I6}
\end{align}
For $I_3$ in \eqref{I5},  we can perform similar computation and obtain, 
\begin{align}
    I_3 (x, y) & \lesssim  C_N \int_{\mathbb{R}^3} \frac{e^{-
    (\mathfrak{w}_{1, \perp} (x, y))^{1 / 2}} (1 + | A (i (\mathfrak{w}_{1,
    \perp} (x, y))^{1 / 2}, w_{1, \perp}) |)^{- 2 N}}{(\mathfrak{w}_{1, \perp}
    (x, y))^{3 / 2}} \,  \mathd w_{1, \perp} \nonumber\\
    & \lesssim  \int_{\mathbb{R}^3} \frac{e^{- \left( m^2 | x - y |^2 + m^2
    | x - y |^2 | u |^2  \right)^{1 / 2}} (1 + m^2 | x - y |^2 +
    2 | x - y |^2 | u |^2)^{- N}}{\left( m^2 | x - y |^2 + | x - y |^2 |
    u|^2  \right)^{3 / 2}} m^3 | x - y |^3 \, \mathd u  \nonumber \\
    & \lesssim  e^{- m | x - y |} \int_{\mathbb{R}^3} \frac{e^{- | u |}}{(m
    | x - y |)^2 \left( 1 + | u |^2  \right)^{3 / 2} (1 + | u
    |^2)} \, \mathd u \nonumber \\
    & \lesssim  e^{- m | x - y |}\int_{\mathbb{R}^3} (1 + | u | )^{- 5} \mathd u . 
  \label{I7}
\end{align}
Thus, substituting \eqref{I6}-\eqref{I7} into  \eqref{I5}, yields
\[ 
     I (x, y) \lesssim e^{- m | x - y |}.
   \]
Further, by substituting this into \eqref{covEst-1}, we  can make the estimation under the condition $l = | x_1 - x_2 | > 2$ as
\begin{equation}
    | \tmop{Cov} (F_1 (f \cdot \varphi_{\varepsilon, R} (\cdot + x_1)), F_2
    (f \cdot \varphi_{\varepsilon, R} (\cdot + x_2))) | \lesssim_{F_1, F_2,
    f}  e^{- m l} .
  \label{covEst-final-approx}
\end{equation}
The complementary case of $| x_1 - x_2 | \leqslant 2$ is  straightforward, akin to the coupling approach.  
Therefore, since \eqref{covEst-final-approx}  holds uniformly in $\varepsilon$ and $R$, we conclude the proof of Theorem~\ref{thm-main} by first taking the limit taking $R \rightarrow \infty$ and then letting $\varepsilon \rightarrow 0$.

\appendix\section{Auxiliary results}\label{sec-appendix}
The first result in this section is about the solution theory to
equation~\eqref{SPDE-Kregapprox0}.
\begin{proposition}
  \label{prop-reg soln of varphi-bar}For given $\varepsilon \in (0, 1)$ and $R
  \geqslant 1$, there exists a $\bar{\varphi}_{\varepsilon, R} \in
  C_{\ell}^2 (\mathbb{R}^4) \assign B^2_{\infty, \infty, \ell}
  (\mathbb{R}^4)$ which solves \eqref{SPDE-Kregapprox0}. Moreover, $\alpha
  \bar{\varphi}_{\varepsilon, R} \leqslant 0.$
\end{proposition}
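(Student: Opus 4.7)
The plan is to construct $\bar\varphi_{\varepsilon,R}$ as a fixed point of the nonlinear map
\[
\mathcal{T}: v \longmapsto -\alpha\, \mathcal{G} \ast K_{R}\big(\exp(\alpha v)\, \eta_{\varepsilon}\big),
\]
where $\mathcal{G}$ is the Green's function of $\mathcal{L}=-\Delta+m^{2}$, and then to read off the sign property directly from this integral representation. The key observation is that $K_{R}\leqslant R$ pointwise irrespective of how large $\eta_{\varepsilon}=\exp(\alpha X_{\varepsilon}-C_{\varepsilon})$ or $v$ become; combined with $\|\mathcal{G}\|_{L^{1}(\mathbb{R}^{4})}=m^{-2}$ this yields the uniform a priori bound $\|\mathcal{T}(v)\|_{L^{\infty}}\leqslant |\alpha|R/m^{2}$ for every measurable $v$, so $\mathcal{T}$ preserves the closed ball $\mathcal{K}$ of this radius inside $L^{\infty}(\mathbb{R}^{4})\cap C_{\ell}^{0}(\mathbb{R}^{4})$.

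Next I would verify the hypotheses of Schauder's fixed-point theorem on $\mathcal{K}$. Since the right-hand side in $\mathcal{L}\mathcal{T}(v) = -\alpha K_{R}(\exp(\alpha v)\eta_{\varepsilon})$ is uniformly in $L^{\infty}$ and $\eta_{\varepsilon}$ is smooth, classical Schauder (or Bessel-potential) estimates for $\mathcal{L}^{-1}$ produce a uniform bound $\|\mathcal{T}(v)\|_{C^{2}_{\ell}}\lesssim_{m,\alpha,R} 1$. The compact embedding $C^{2}_{\ell}\hookrightarrow C^{0}_{\ell'}$ for $\ell'>\ell$, which follows from Arzel{\`a}--Ascoli together with the decay encoded in the weight $r_{\ell'}$, then gives relative compactness of $\mathcal{T}(\mathcal{K})$. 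Continuity of $\mathcal{T}$ in the $C^{0}_{\ell'}$ topology follows from the Lipschitz character of $K_{R}\circ\exp(\alpha\,\cdot)$ on bounded sets together with dominated convergence in the convolution against the integrable kernel $\mathcal{G}$. Schauder's theorem thus produces a fixed point $\bar\varphi_{\varepsilon,R}\in\mathcal{K}$, whose regularity is upgraded to $C^{2}_{\ell}(\mathbb{R}^{4})$ by the a priori bound.

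The sign property $\alpha\bar\varphi_{\varepsilon,R}\leqslant 0$ is then immediate from the integral representation. Multiplying the fixed-point identity $\bar\varphi_{\varepsilon,R}=\mathcal{T}(\bar\varphi_{\varepsilon,R})$ by $\alpha$ yields
\[
\alpha\bar\varphi_{\varepsilon,R} \;=\; -\alpha^{2}\, \mathcal{G}\ast K_{R}\big(\exp(\alpha\bar\varphi_{\varepsilon,R})\,\eta_{\varepsilon}\big)\;\leqslant\;0,
\]
since $\alpha^{2}\geqslant 0$, $\mathcal{G}\geqslant 0$ and $K_{R}\geqslant 0$; no maximum principle is needed, as the sign follows from positivity of every ingredient in the convolution.

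The main technical obstacle I anticipate lies in calibrating the weighted function-space framework so that the a priori $C^{2}_{\ell}$ control and the compact embedding $C^{2}_{\ell}\hookrightarrow C^{0}_{\ell'}$ cooperate with the nonlocal convolution against $\mathcal{G}$; this mirrors the technical heart of Lemmata~30--31 of \cite{AVG2021}. Once the weights are properly chosen, the remaining estimates are quantitatively benign precisely because the cutoff $K_{R}$ renders the nonlinearity bounded, decoupling the argument from the (at most) logarithmic growth of $X_{\varepsilon}$ at infinity.
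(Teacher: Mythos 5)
Your existence argument runs parallel to the paper's: both construct the solution as a fixed point of essentially the same map $v \mapsto -\alpha\,\mathcal{G}\ast K_R(\exp(\alpha v)\eta_\varepsilon)$, using the cutoff $K_R\leqslant R$ to obtain the uniform a priori bound and then invoking a compactness principle (you use Schauder on the invariant ball; the paper uses Schaefer on the Banach space $B^{2-\delta}_{\infty,\infty,\ell+\delta'}$, with the uniform bound also covering the homotopy condition). These are interchangeable here, and your compactness and continuity steps match the paper's use of the compact Besov embedding and Schauder estimate $\mathcal{L}^{-1}:L^\infty_\ell\to B^2_{\infty,\infty,\ell}$; only note that ``$C^2_\ell$'' in this paper is the Besov space $B^2_{\infty,\infty,\ell}$, so the ``Schauder estimate'' you invoke is really the Besov-level lift of two derivatives, not the H\"older-scale Schauder theory. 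The genuine divergence is in the sign property: the paper deduces $\alpha\bar\varphi_{\varepsilon,R}\leqslant 0$ via a weighted maximum-principle argument (testing $-\Delta\psi\geqslant 0$ at the maximum of $\psi=r_{\ell,\theta}\,\alpha\bar\varphi_{\varepsilon,R}$, with a separate discussion if the supremum is not attained), whereas you observe that the fixed-point identity $\alpha\bar\varphi_{\varepsilon,R}=-\alpha^2\,\mathcal{G}\ast K_R(\exp(\alpha\bar\varphi_{\varepsilon,R})\eta_\varepsilon)$ already displays the right-hand side as $-\alpha^2$ times a convolution of a nonnegative kernel $\mathcal{G}$ against a positive function, so the sign is immediate. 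Your argument is correct and strictly simpler: it avoids the choice of the weight $r_{\ell,\theta}$, the second-derivative test, and the case analysis on whether the maximum is attained, and it exploits the very structure ($\mathcal{G}\geqslant 0$, $K_R>0$) that the paper's maximum principle is implicitly re-deriving.
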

\begin{proof}
  Let us introduce the following map
  \begin{equation}
      \mathcal{K} (\bar{\varphi}_{\varepsilon, R}, \eta_{\varepsilon}) \assign
      - \alpha (- \Delta + m^2)^{- 1} (K_R (\exp (\alpha
      \bar{\varphi}_{\varepsilon, R}) \eta_{\varepsilon})) .
    \label{defn-K}
  \end{equation}
  We first show that there exists a solution $\bar{\varphi}_{\varepsilon, R}
  \in B^2_{\infty, \infty, \ell} (\mathbb{R}^4)$ to the equation
  \[ 
       \bar{\varphi}_{\varepsilon, R} =\mathcal{K}
       (\bar{\varphi}_{\varepsilon, R}, \eta_{\varepsilon}) .
      \]
  We aim to use Schaefer's fixed-point theorem (see Theorem 4 in Section 9.2
  of Chapter 9 of {\cite{Evans98}}) to prove the claim. In order to do this we
  have to prove that $\mathcal{K}$ is continuous in
  $\bar{\varphi}_{\varepsilon, R}$, that it maps any bounded set into a
  compact set and that the set of solutions to the equations
  \[ 
       \bar{\varphi} ~=~ \lambda \mathcal{K} (\bar{\varphi}, \eta)
      \]
  is bounded uniformly for all $0 \leqslant \lambda \leqslant 1$. The
  continuity of $\mathcal{K}$ is an easy consequence of continuity of $(-
  \Delta + m^2)^{- 1}$ from $B^0_{\infty, \infty, \ell} (\mathbb{R}^4)$ into
  $B^2_{\infty, \infty, \ell} (\mathbb{R}^4)$ and properties of functions $K_R$ and  $\exp$. The map $\mathcal{K}$ is compact because the Schauder
  estimates and embedding $L^{\infty}_{\ell} (\mathbb{R}^4)
  \longhookrightarrow B^0_{\infty, \infty, \ell} (\mathbb{R}^4)$ imply
  \begin{equation}
    \begin{array}{lllll}
      \| \mathcal{K} (\bar{\varphi}, \eta_{\varepsilon}) \|_{B^2_{\infty,
      \infty, \ell}} & \leqslant & | \alpha |~ \| K_R (\exp
      (\alpha \bar{\varphi}) \eta) \|_{L^{\infty}_{\ell}} &
      \leqslant & R | \alpha |,
    \end{array} \label{unif-1}
  \end{equation}
  and the immersion $B^2_{\infty, \infty, \ell} (\mathbb{R}^4)
  \longhookrightarrow B^{2 - \delta}_{\infty, \infty, \ell + \delta'}
  (\mathbb{R}^4)$ is compact, see Proposition~52 of~{\cite{AVG2021}}. Finally
  the uniform boundedness in $\mathlambda$ follows from inequality
  \eqref{unif-1}. Thus, by Schaefer's fixed-point theorem there exists a fixed point of $\bar{\varphi} =\mathcal{K} (\bar{\varphi}, \eta)$ in $B^{2 -
  \delta}_{\infty, \infty, \ell + \delta'} (\mathbb{R}^4)$. Let us call it
  $\bar{\varphi}_{\varepsilon, R}$. Further note that, since
  $\bar{\varphi}_{\varepsilon, R}$ is a fixed point, \eqref{unif-1} also give
  \begin{equation}
    \begin{array}{lllll}
      \| \bar{\varphi}_{\varepsilon, R} \|_{B^2_{\infty, \infty, \ell}} & = & \| \mathcal{K} (\bar{\varphi}_{\varepsilon, R},
       \eta_{\varepsilon}) \|_{B^2_{\infty, \infty, \ell}}       & \leqslant & R | \alpha |.
    \end{array} \label{unif-1temp}
  \end{equation}
  Thus, $\bar{\varphi}_{\varepsilon, R} \in B^2_{\infty, \infty, \ell}
  (\mathbb{R}^4)$. Hence the first part of the proof.
  
  Next, since $\bar{\varphi}_{\varepsilon, R} \in C_{\ell}^2
  (\mathbb{R}^4)$, $\bar{\varphi}_{\varepsilon, R} \in L_{\ell}^{\infty}
  (\mathbb{R}^4)$. Let us define, for $x \in \mathbb{R}^4$,
  \[      r_{\ell, \theta} (x) \assign (1 + \theta | x |^2)^{- \ell}, \quad 
       \bar{\varphi}_{\varepsilon, R, \alpha} \assign \alpha
       \bar{\varphi}_{\varepsilon, R} \quad \infixand \quad \psi \assign r_{\ell, \theta}
       \bar{\varphi}_{\varepsilon, R, \alpha},
      \]
  where $\ell$ is chosen such that the first part of the current proposition
  holds valid. Note that from the first part, $\psi$ is bounded and locally
  belongs to $C^2 (\mathbb{R}^4)$. Assume for the moment that $\psi$
  has a global maximum and attains its maximum value at $\hat{x} .$ Then,
  since $\hat{x}$ is a critical point,
  \begin{equation}
      0 = \nabla \psi = \bar{\varphi}_{\varepsilon, R, \alpha} \nabla r_{\ell,
      \theta} + r_{\ell, \theta} \nabla \bar{\varphi}_{\varepsilon, R,
      \alpha}, \label{ineq:1stDerv}
  \end{equation}
  and, thus, by the second derivative test,
  \begin{equation}
      0 \leqslant - \Delta \psi = - r_{\ell, \theta} \Delta
      \bar{\varphi}_{\varepsilon, R, \alpha} - \bar{\varphi}_{\varepsilon, R,
      \alpha} \Delta r_{\ell, \theta} + 2 \frac{| \nabla r_{\ell, \theta}
      |^2}{r_{\ell, \theta}} \bar{\varphi}_{\varepsilon, R, \alpha}. \label{ineq:2ndDerv}
  \end{equation}
  But
  \[ \Delta \bar{\varphi}_{\varepsilon, R, \alpha} = \alpha \Delta
     \bar{\varphi}_{\varepsilon, R} = \alpha m^2 \bar{\varphi}_{\varepsilon,
     R} + \alpha^2 K_R (\exp (\alpha \bar{\varphi}_{\varepsilon, R})
     \eta_{\varepsilon}), \]
  so, from \eqref{ineq:2ndDerv}
  \[ 
       \alpha m^2 \bar{\varphi}_{\varepsilon, R} + \alpha^2 K_R (\exp (\alpha
       \bar{\varphi}_{\varepsilon, R}) \eta_{\varepsilon}) \leqslant - \alpha
       \left[ \frac{\Delta r_{\ell, \theta}}{r_{\ell, \theta}} - 2 \frac{|
       \nabla r_{\ell, \theta} |^2}{(r_{\ell, \theta})^2} \right]
       \bar{\varphi}_{\varepsilon, R} .
      \]
  Since $\alpha^2 K_R (\exp (\alpha \bar{\varphi}_{\varepsilon, R})
  \eta_{\varepsilon}) \geqslant 0$, we get
  \begin{equation}
      \alpha m^2 \bar{\varphi}_{\varepsilon, R} + \alpha \left[ \frac{\Delta
      r_{\ell, \theta}}{r_{\ell, \theta}} - 2 \frac{| \nabla r_{\ell, \theta}
      |^2}{(r_{\ell, \theta})^2} \right] \bar{\varphi}_{\varepsilon, R}
      \leqslant 0. \label{ineq:varphi-r-m}
  \end{equation}
  But note that due to the choice of weight $r_{\ell, \theta}$, we can choose
  $\theta > 0$ such that
  \[ \left| \frac{- r_{\ell, \theta} \Delta r_{\ell, \theta} + 2 | \nabla
     r_{\ell, \theta} |^2}{r_{\ell, \theta}^2} \right| < m^2 . \]
  Hence, with the choice of $\theta$ from \eqref{ineq:varphi-r-m} we have
  \[ \alpha \bar{\varphi}_{\varepsilon, R} \left[ m^2 - \left\{ - \frac{\Delta
     r_{\ell, \theta}}{r_{\ell, \theta}} + 2 \frac{| \nabla r_{\ell, \theta}
     |^2}{r_{\ell, \theta}^2} \right\} \right] \leqslant 0, \]
  where the quantity in curly bracket is positive. Thus, the above is only possible if $\alpha \bar{\varphi}_{\varepsilon, R} \leqslant 0$.  Hence
  we have prove the result in the case $\bar{\varphi}_{\varepsilon, R}$
  attains its maximum. The case when it does not, can be taken care as  explained in Lemma~2.8 in~{\cite{GH2019}}. This completes the proof. 
\end{proof}

Now we move to the uniqueness of above constructed solution. The approach to
prove the next result is closely related to the proof of Lemma~31
in~{\cite{AVG2021}}.

\begin{lemma}
  \label{lemma_exponentialuniqueness}For given $\varepsilon \in (0, 1)$ and $R
  \geqslant 1$, the solution to equation \eqref{SPDE-Kregapprox0} is unique in
  $C_{\ell}^2 (\mathbb{R}^4)$. 
\end{lemma}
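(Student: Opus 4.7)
The plan is to reduce the uniqueness problem to a linear elliptic equation with a non-negative zeroth-order coefficient and then apply the weighted maximum principle argument already used in the proof of Proposition~\ref{prop-reg soln of varphi-bar}. Let $\bar{\varphi}^{(1)}, \bar{\varphi}^{(2)} \in C^2_\ell(\mathbb{R}^4)$ be two solutions of \eqref{SPDE-Kregapprox0} and set $w \assign \bar{\varphi}^{(1)} - \bar{\varphi}^{(2)}$. Writing the difference of the nonlinearities via the fundamental theorem of calculus,
\begin{equation*}
\alpha \bigl[ K_R(\exp(\alpha \bar{\varphi}^{(1)}) \eta_\varepsilon) - K_R(\exp(\alpha \bar{\varphi}^{(2)}) \eta_\varepsilon) \bigr] = Q_\varepsilon\, w,
\end{equation*}
where
\begin{equation*}
Q_\varepsilon \assign \alpha^2 \int_0^1 K_R'\bigl( e^{\alpha \bar{\varphi}_\theta} \eta_\varepsilon \bigr) \, e^{\alpha \bar{\varphi}_\theta} \, \eta_\varepsilon \, \mathd \theta, \qquad \bar{\varphi}_\theta \assign \theta \bar{\varphi}^{(1)} + (1-\theta) \bar{\varphi}^{(2)}.
\end{equation*}
Since $K_R' \geq 0$ and $\eta_\varepsilon > 0$, we have $Q_\varepsilon \geq 0$. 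Thus $w \in C^2_\ell(\mathbb{R}^4)$ solves the linear equation $\mathcal{L} w + Q_\varepsilon w = 0$.

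Next I would set $\psi \assign r_{\ell, \theta} w$ with the same weight $r_{\ell,\theta}(x) = (1 + \theta|x|^2)^{-\ell}$ used in Proposition~\ref{prop-reg soln of varphi-bar}, and choose $\theta > 0$ small enough so that
\begin{equation*}
\left| \frac{-r_{\ell,\theta} \Delta r_{\ell,\theta} + 2|\nabla r_{\ell,\theta}|^2}{r_{\ell,\theta}^2} \right| < m^2 \quad \text{on } \mathbb{R}^4.
\end{equation*}
Because $w \in C^2_\ell$ the product $\psi$ is continuous and satisfies $\psi(x) \to 0$ as $|x| \to \infty$, so it attains both its maximum and its minimum. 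Assume $\psi$ attains a positive maximum at some $\hat{x}$, so that $w(\hat{x}) > 0$, $\nabla \psi(\hat{x}) = 0$ and $-\Delta \psi(\hat{x}) \geq 0$. Exactly as in \eqref{ineq:1stDerv}--\eqref{ineq:2ndDerv}, the first derivative test yields $\nabla w = -(w/r_{\ell,\theta}) \nabla r_{\ell,\theta}$ at $\hat{x}$, and substituting $\mathcal{L} w = -Q_\varepsilon w$ into the second derivative test gives
\begin{equation*}
0 \leq -\Delta \psi(\hat{x}) = r_{\ell,\theta}(\hat{x}) \, w(\hat{x}) \left[ -(m^2 + Q_\varepsilon(\hat{x})) + \frac{-r_{\ell,\theta}\Delta r_{\ell,\theta} + 2|\nabla r_{\ell,\theta}|^2}{r_{\ell,\theta}^2}(\hat{x}) \right].
\end{equation*}
Since $Q_\varepsilon \geq 0$ and the bracket is strictly negative by the choice of $\theta$, while $r_{\ell,\theta}(\hat{x}) w(\hat{x}) > 0$, we obtain a contradiction. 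The symmetric argument applied to $-w$ (i.e., to a negative minimum of $\psi$) rules out $\min \psi < 0$. Therefore $\psi \equiv 0$ and hence $w \equiv 0$.

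The only subtlety is guaranteeing that an extremum is actually attained; here this is immediate from $\psi \to 0$ at infinity. In the borderline case where this decay is not available, the cut-off/exhaustion argument from Lemma~2.8 of~\cite{GH2019} (also used at the end of the proof of Proposition~\ref{prop-reg soln of varphi-bar}) handles the issue, and I expect this to be the only genuinely delicate step. The rest is bookkeeping of the weighted maximum principle, which is already developed in the existence proof.
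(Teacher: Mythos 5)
Your proposal is correct, but it takes a genuinely different route from the paper. The paper's proof of Lemma~\ref{lemma_exponentialuniqueness} is an \emph{integral (weak-form) monotonicity argument}: it tests the difference of the two fixed-point equations against $r_{\ell'}(\lambda z)\, J(\bar\varphi_1-\bar\varphi_2)$, where $J$ is an auxiliary smooth, bounded, strictly increasing odd function, then uses integration by parts together with the bound $J^{-1}(t)\le t J(t)$ to control the Laplacian piece, and the monotonicity of $t\mapsto \alpha K_R(\exp(\alpha t)\eta_\varepsilon)$ combined with $J$ increasing and $J(0)=0$ to discard the nonlinear piece, finally concluding $\int r_{\ell'}(\lambda z)\,J(w)\,w\,\mathd z\le 0$ and hence $w\equiv 0$. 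You instead linearize the equation for $w=\bar\varphi^{(1)}-\bar\varphi^{(2)}$ via the fundamental theorem of calculus to get $\mathcal L w+Q_\varepsilon w=0$ with $Q_\varepsilon\ge0$ and run a \emph{pointwise weighted maximum principle} argument, mirroring exactly the technique used in the paper's proof of Proposition~\ref{prop-reg soln of varphi-bar} to show $\alpha\bar\varphi_{\varepsilon,R}\le 0$. Both approaches are sound: your computation of $-\Delta\psi(\hat x)$ is correct, the sign of $Q_\varepsilon$ is correct, and the decay $\psi\to 0$ at infinity follows as you say from $|w(x)|\lesssim(1+|x|^2)^{\ell/2}$ against $r_{\ell,\theta}(x)\sim|x|^{-2\ell}$. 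The payoff of your route is that it reuses the existence proof's machinery and needs no auxiliary function $J$, making the uniqueness and sign statements visibly two instances of the same maximum-principle estimate; the payoff of the paper's route is that it is an $L^2$-type monotonicity argument that would survive under weaker pointwise regularity of the solutions, since it never needs to evaluate $\Delta w$ at a single point. One cosmetic remark: you use $\theta$ both for the weight exponent parameter in $r_{\ell,\theta}$ and as the interpolation variable in the definition of $\bar\varphi_\theta$; rename one of the two to avoid a clash.
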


\begin{proof}
  Let $\varepsilon \in (0, 1)$ and $R \geqslant 1$ be fixed parameters. We will omit explicit mention of them for the remainder of the proof. Consider $J : \mathbb{R} \rightarrow \mathbb{R}$,  a smooth, bounded, strictly increasing function
  such that $J (0) = 0$ and $J (- x) = - J (x)$. Further, let
  $\bar{\varphi}_1$ and $\bar{\varphi}_2$ be two solutions to equation   \eqref{SPDE-Kregapprox0}. Since they are smooth, $J (\bar{\varphi}_1 -
  \bar{\varphi}_2) \in C_{\ell}^2 (\mathbb{R}^4)$ implying that
  $r_{\ell'} (\lambda z) J (\bar{\varphi}_1 - \bar{\varphi}_2) \in
  C_{\ell}^2 (\mathbb{R}^4)$ for $\ell' > 0$ sufficiently large enough and any $\lambda > 0$. This implies that, where $\langle \cdot, \cdot \rangle$ denotes
  just the $L^2 (\mathbb{R}^4)$-inner product,
  \[ \langle r_{\ell'} (\lambda z) J (\bar{\varphi}_1 - \bar{\varphi}_2), (-
     \Delta + m^2) (\bar{\varphi}_1 - \bar{\varphi}_2
     -\mathcal{K}(\bar{\varphi}_1, \eta) +\mathcal{K}(\bar{\varphi}_2, \eta))
     \rangle = 0, \]
  where $\mathcal{K}$ is defined in \eqref{defn-K}. 
  We claim that the inequality
  \begin{equation}
      \langle r_{\ell'} (\lambda z) J (\bar{\varphi}_1 - \bar{\varphi}_2), (-
      \Delta + m^2) (\bar{\varphi}_1 - \bar{\varphi}_2) \rangle \geqslant C
      \int r_{\ell'} (\lambda z) J (\bar{\varphi}_1 - \bar{\varphi}_2)
      (\bar{\varphi}_1 - \bar{\varphi}_2) \, \mathd z,
    \label{eq:inequalityuniqueness}
  \end{equation}
  holds for sufficiently small $\lambda > 0$ and some constant $C > 0$. Indeed, we have
    \begin{align*}
        & \langle r_{\ell'} (\lambda z) J (\bar{\varphi}_1 - \bar{\varphi}_2), (-
       \Delta + m^2) (\bar{\varphi}_1 - \bar{\varphi}_2) \rangle  = \int r_{\ell'} (\lambda z) J' (\bar{\varphi}_1 -
       \bar{\varphi}_2) | \nabla \bar{\varphi}_1 - \nabla \bar{\varphi}_2 |^2
       \, \mathd z \\
       & \qquad \hspace{4em} + \lambda \int \nabla r_{\ell'} (\lambda \hat{z}) J
       (\bar{\varphi}_1 - \bar{\varphi}_2) \cdot (\nabla \bar{\varphi}_1 -
       \nabla \bar{\varphi}_2) \, \mathd z\\
       & \qquad \hspace{4em} + m^2  \int r_{\ell'} (\lambda \hat{z}) J
       (\bar{\varphi}_1 - \bar{\varphi}_2) (\bar{\varphi}_1 - \bar{\varphi}_2)\,       \mathd z\\
       & \qquad \qquad \geqslant - \lambda^2  \int (\Delta r_{\ell'} (\lambda
       \hat{z})) J^{- 1} (\bar{\varphi}_1 - \bar{\varphi}_2) \, \mathd z + m^2 
       \int r_{\ell'}  (\lambda \hat{z}) J (\bar{\varphi}_1 - \bar{\varphi}_2)
       (\bar{\varphi}_1 - \bar{\varphi}_2) \, \mathd z\\
       & \qquad \qquad \geqslant \int \left( m^2 - \left| \frac{\lambda^2 \Delta
       r_{\ell'}}{r_{\ell'}} \right| \right) r_{\ell'} (\lambda z) J
       (\bar{\varphi}_1 - \bar{\varphi}_2) (\bar{\varphi}_1 - \bar{\varphi}_2)     \, \mathd z,     
    \end{align*}
  where $J^{- 1} (t) = \int_0^t J (\tau) \mathd \tau$. By selecting a sufficiently small $\lambda > 0$, we get the claim. For the first inequality we utilize  the
  following fact: 
  \begin{align*}
    \int \nabla r_{\ell'} (\lambda z) J (\bar{\varphi}_1 - \bar{\varphi}_2) & 
       \cdot (\nabla \bar{\varphi}_1 - \nabla \bar{\varphi}_2) \, \mathd z = \int
       \nabla r_{\ell'} (\lambda z) \nabla J^{- 1} (\bar{\varphi}_1 -
       \bar{\varphi}_2) \, \mathd z \\
       & = - \lambda \int \Delta r_{\ell'} (\lambda z) J^{- 1} (\bar{\varphi}_1
       - \bar{\varphi}_2) \, \mathd z,   
  \end{align*}
  which holds true since $J^{- 1}$ is a Lipschitz function satisfying $J^{- 1} (0)= 0$. Additionally, we exploit the increasing behavior of $J$ to establish 
  $J^{- 1} (t) \leqslant t J (t) $.
  
  The next claim is that $\langle r_{\ell'} (\lambda z) J
  (\bar{\varphi}_1 - \bar{\varphi}_2), (- \Delta + m^2)
  (-\mathcal{K}(\bar{\varphi}_1, \eta) +\mathcal{K}(\bar{\varphi}_2, \eta))
  \rangle \geqslant 0$. To demonstrate this, we have 
  \begin{align}
      & \langle r_{\ell'} (\lambda z) J (\bar{\varphi}_1 - \bar{\varphi}_2), (-
      \Delta + m^2) (-\mathcal{K}(\bar{\varphi}_1, \eta)
      +\mathcal{K}(\bar{\varphi}_2, \eta)) \rangle \\
      & \quad = \int r_{\ell'} (\lambda z) (\alpha (K_R (\exp (\alpha \bar{\varphi}_1)
      \eta)) - \alpha (K_R (\exp (\alpha \bar{\varphi}_2) \eta))) J
      (\bar{\varphi}_1 - \bar{\varphi}_2) \mathd z ~\geqslant~ 0,
     \label{eq:claim2}
  \end{align}
  where we use the fact that $(\alpha (K_R (\exp (\alpha t_1) \eta)) - \alpha (K_R
  (\exp (\alpha t_2) \eta))) \cdot J (t_1 - t_2)$ is positive since both  $\alpha (K_R (\exp (\alpha \cdot) \eta))$ and $J$ are increasing functions   and $J (0) = 0$. 
  Thus, combining inequalities \eqref{eq:inequalityuniqueness} and
  \eqref{eq:claim2} we deduce that
  \[ \int r_{\ell'} (\lambda z) J (\bar{\varphi}_1 - \bar{\varphi}_2)
     (\bar{\varphi}_1 - \bar{\varphi}_2) \mathd z \leqslant 0, \]
  which implies $\bar{\varphi}_1 - \bar{\varphi}_2 = 0$, since $J$ is a strictly increasing function. Consequently, the proof of uniqueness is established.
\end{proof}

For the next result assume that $\Delta$ is the $d$-dimensional Laplacian. Let
us denote the kernel representation of $\mathcal{L}^{- 1}$ by
\[ (\mathcal{L}^{- 1} \phi) (x) = \int_{\mathbb{R}^d} \mathcal{G}  (x - y) \phi (y) \, 
   \mathd y, \qquad \phi \in \mathcal{S} . \]
\begin{lemma}
  \label{lem-AY2002} $\mathcal{G}$ has the following integral representation
  \[ \mathcal{G} (x) = \frac{1}{{(4 \pi)^{\frac{d}{2}}} } \int_0^{\infty} \exp \left\{ -
     \frac{| x |^2}{4 s} - m^2 s \right\} s^{\frac{- d}{2}} \, \mathd s, \qquad
     x \in \mathbb{R}^d . \]
  Moreover, there exist some constants $C_1, C_2 > 0$ such that the following
  holds:
  \begin{enumeratenumeric}
    \item if $d > 2$ then
    \[ \mathcal{G} (x) \leq C_1 | x |^{- d + 2}  \textrm{ \quad \tmop{if}} | x | < 1
       \textrm{\qquad \infixand}  \qquad C_1 e^{- C_2 | x | } \quad 
       \textrm{\tmop{if}} | x | \geq 1; \]
    \item if $d < 2$ then
    \[ \mathcal{G} (x) \leq C_1 | x |^{- d + 2}  \textrm{ \quad \tmop{for} \qquad } x
       \in \mathbb{R}^d; \]
    \item if $d = 2$ then
    \[ \mathcal{G} (x) \leq C_1 - \frac{2}{(4 \pi)^{\frac{d}{2}} \Gamma \left(
       \frac{d}{2} \right)} \log (| x |)  \textrm{ \quad \tmop{if}} | x | < 1
       \textrm{\qquad \infixand}  \qquad C_1 e^{- C_2 | x | } \quad 
       \textrm{\tmop{if}} | x | \geq 1. \]
  \end{enumeratenumeric}
\end{lemma}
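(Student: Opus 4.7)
The plan is to derive the integral representation from the subordination formula for the resolvent of $\mathcal{L}$ via the heat semigroup, and then to extract the pointwise bounds by choosing two different changes of variables adapted to the regimes $|x|\ge 1$ and $|x|<1$.

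For the representation, since $\mathcal{L}=-\Delta+m^2$ is nonnegative and self-adjoint, I would write $\mathcal{L}^{-1}=\int_0^\infty e^{-t\mathcal{L}}\,dt = \int_0^\infty e^{-m^2 t}\,e^{t\Delta}\,dt$ and insert the explicit heat kernel $e^{t\Delta}(x,y)=(4\pi t)^{-d/2}\exp(-|x-y|^2/(4t))$. Translation invariance then gives $(\mathcal{L}^{-1}\phi)(x)=\int \mathcal{G}(x-y)\phi(y)\,dy$ with $\mathcal{G}$ precisely as claimed, after renaming $t\mapsto s$.

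For the large-$|x|$ bound, the key input is the AM--GM inequality $\tfrac{|x|^2}{4s}+m^2 s\ge m|x|$, attained at $s^\star=|x|/(2m)$. Substituting $s=\tfrac{|x|}{2m}t$ rewrites the exponent as $-\tfrac{m|x|}{2}(t+t^{-1})$, exposing the Gaussian saddle at $t=1$ together with a prefactor $(|x|/(2m))^{-d/2+1}$. For $|x|\ge 1$, pulling out $e^{-m|x|}$ and estimating the remaining integral $\int_0^\infty e^{-(m|x|/2)(t+t^{-1}-2)}t^{-d/2}\,dt$ uniformly (it is bounded by a constant for $m|x|\ge m$, with any mild polynomial growth absorbed by a slightly smaller exponent $C_2<m$) yields $\mathcal{G}(x)\le C_1 e^{-C_2|x|}$.

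For the small-$|x|$ behavior, I would make the complementary substitution $u=|x|^2/(4s)$, which converts the representation into
\[
\mathcal{G}(x)=\frac{4^{d/2-1}|x|^{-d+2}}{(4\pi)^{d/2}}\int_0^\infty e^{-u-m^2|x|^2/(4u)}\,u^{d/2-2}\,du.
\]
When $d>2$ I bound $e^{-m^2|x|^2/(4u)}\le 1$; the remaining $u$-integral converges at both ends and gives $\mathcal{G}(x)\lesssim|x|^{-d+2}$. When $d<2$ the same integral is globally bounded, so $\mathcal{G}$ is in fact continuous and the stated bound holds. The borderline $d=2$ case is where the main technical obstacle lies: I split the $u$-integral at $u=1$, noting that the piece $u\ge 1$ is uniformly bounded while on $u\le 1$ the integrand behaves as $u^{-1}$, yielding the logarithmic term $-\tfrac{2}{(4\pi)\Gamma(1)}\log|x|$ up to an $O(1)$ correction. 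Combining this with the $|x|\ge 1$ estimate completes the proof. No new ideas are needed — everything reduces to careful bookkeeping in these two changes of variables applied to the universal Bessel-type kernel produced by the subordination formula.
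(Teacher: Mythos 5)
The paper does not actually prove Lemma~\ref{lem-AY2002}: the body of its ``proof'' is the single line ``See Proposition~A.1 in~\cite{AY2002}.'' Your proposal therefore is not so much an alternative argument as a self-contained reconstruction of the standard proof that the cited reference carries out. The route you take --- subordination through the heat semigroup, the saddle-point substitution $s=\tfrac{|x|}{2m}t$ together with AM--GM to extract the $e^{-C_2|x|}$ tail, and the complementary substitution $u=|x|^2/(4s)$ to isolate the Bessel-type short-distance singularity, with the $u$-integral split at $u=1$ in the borderline case $d=2$ --- is exactly the classical computation for the massive Green's function, and the constants you produce (in particular $-\tfrac{1}{2\pi}\log|x|$ at $d=2$) agree with the lemma. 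Taking $C_2<m$ to absorb the polynomial prefactor $(|x|/2m)^{1-d/2}$ is the right fix. This is a valid and arguably more useful contribution than a bare citation.

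One small correction: your remark that for $d<2$ ``the same integral is globally bounded, so $\mathcal G$ is in fact continuous and the stated bound holds'' is not quite right, and in fact the bound as printed in the lemma is not literally correct in that regime. After the substitution the factor $u^{d/2-2}$ is \emph{not} integrable near $u=0$ when $d<2$; the integral is finite for each fixed $|x|>0$ only because of the cutoff $e^{-m^2|x|^2/(4u)}$, and it diverges as $|x|\to0$. That divergence is precisely what compensates the vanishing prefactor $|x|^{2-d}$ and keeps $\mathcal G(0)$ finite and positive. Consequently $\mathcal G(x)\le C_1|x|^{2-d}$ fails for small $|x|$ when $d<2$ (the right-hand side tends to $0$ while the left-hand side does not); the honest conclusion is that $\mathcal G$ is bounded, continuous, and exponentially decaying. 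Since the paper works exclusively in $d=4$, this is immaterial for its purposes, but you should not claim the stated inequality in that case without qualification.
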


\begin{proof}
  See Proposition~A.1 in~{\cite{AY2002}}.
\end{proof}

The next result is well-known in the literature.

\begin{theorem}[Paley--Wiener--Schwartz]
  \label{thm-PWS} For any $d \in \mathbb{N}$, the vector space $C_c^{\infty}(\mathbb{R}^d)$, comprising compactly supported smooth functions on $\mathbb{R}^d$, is isomorphic, via the Fourier transform, to the space of entire functions $F$ on $\mathbb{C}^d$ satisfying the following condition: there exists a positive
  real number $B$ such that for every integer $N > 0$, there is a real number $C_N > 0$ such that
  \begin{equation}
    | F (\xi) | \leqslant C_N (1 + | \xi |)^{-
    N} e^{B | \tmop{Im} (\xi) |}, \quad \forall \xi \in \mathbb{C}^d. \label{est-PWS}
  \end{equation}
  This implies that for any $u \in C_c^{\infty} (\mathbb{R}^d)$, there exists an entire function $F = \hat{u}$ satisfying the above estimate.
\end{theorem}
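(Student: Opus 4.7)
The plan is to establish the two directions of the claimed isomorphism separately; injectivity of the Fourier transform is standard. For the forward direction, start with $u \in C_c^{\infty}(\mathbb{R}^d)$ whose support lies in $\overline{B(0, B)}$ and define
\[ F (\xi) \assign \int_{\mathbb{R}^d} u (x) e^{- i x \cdot \xi} \, \mathd x, \qquad \xi \in \mathbb{C}^d. \]
Holomorphy of $F$ on all of $\mathbb{C}^d$ follows from differentiation under the integral: the integrand is entire in $\xi$, and on compacta it is dominated by an integrable function thanks to the compact support of $u$. The exponential factor is controlled by splitting $\xi = \mathrm{Re}(\xi) + i \, \mathrm{Im}(\xi)$ to get
\[ | F (\xi) | \leq \int_{| x | \leq B} | u (x) | e^{x \cdot \mathrm{Im}(\xi)} \, \mathd x \leq \| u \|_{L^1} e^{B | \mathrm{Im}(\xi) |}, \]
and polynomial decay in $\xi$ follows by integrating by parts $N$ times using $\xi^{\alpha} F (\xi) = \int (- i \partial)^{\alpha} u (x) e^{- i x \cdot \xi} \, \mathd x$, yielding the estimate with $C_N$ controlled by $\sum_{| \alpha | \leq N} \| \partial^{\alpha} u \|_{L^1}$.

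For the reverse direction, suppose $F$ is entire and satisfies the stated bound with parameter $B$. Because of the decay for all $N$, the restriction $F |_{\mathbb{R}^d}$ belongs to $\mathcal{S} (\mathbb{R}^d)$, so I define $u \assign \mathcal{F}^{- 1} (F |_{\mathbb{R}^d}) \in \mathcal{S} (\mathbb{R}^d)$. The crux is showing $\mathrm{supp}(u) \subset \overline{B(0, B)}$. I would do this by contour shifting: for every $\eta \in \mathbb{R}^d$,
\[ u (x) = (2 \pi)^{- d} \int_{\mathbb{R}^d} F (\xi + i \eta) e^{i x \cdot (\xi + i \eta)} \, \mathd \xi, \]
which I justify by applying Cauchy's theorem one complex variable at a time (and Fubini), where the rapid decay of $F$ ensures that the vertical sides of the rectangular contours vanish in the limit. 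Then, for any $x$ with $| x | > B$, substituting $\eta = t \, x / | x |$ and using the PWS estimate with $N > d$ yields
\[ | u (x) | \leq (2 \pi)^{- d} C_N \, e^{B t - t | x |} \int_{\mathbb{R}^d} (1 + | \xi |)^{- N} \, \mathd \xi, \]
and letting $t \to \infty$ forces $u (x) = 0$, so $\mathrm{supp}(u) \subset \overline{B(0, B)}$. Smoothness of $u$ is already part of $u \in \mathcal{S}$.

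The main obstacle is the contour-shifting argument, which has to be set up with care so that the exponential factor $e^{B | \mathrm{Im}(\xi) |}$ appearing in the hypothesis exactly balances the factor $e^{- x \cdot \eta}$ produced by $e^{i x \cdot (i \eta)}$, with the balance tipping negatively precisely when $| x | > B$. Reducing to the one-dimensional case variable by variable, and using the polynomial decay to dominate the vertical-edge contributions uniformly in the contour displacement, makes this rigorous. Once the support property is in hand, the map $u \mapsto \hat{u}$ is a linear bijection between $C_c^{\infty} (\mathbb{R}^d)$ and the stated space of entire functions, completing the proof.
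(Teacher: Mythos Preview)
The paper does not actually prove this theorem: it is stated in the appendix as ``well-known in the literature'' and used as a black box, with no argument supplied. Your proposal is therefore not comparable to a paper proof, but it is correct and follows the classical route (holomorphic extension plus integration by parts for the forward direction, contour shifting for the converse), essentially as in H\"ormander's treatment; the only point worth tightening is the polynomial-decay step for complex $\xi$, where one should observe that $\max_j |\xi_j| \ge d^{-1/2}|\xi|$ so that $\sum_{|\alpha|\le N}|\xi^\alpha|$ dominates $(1+|\xi|)^N$ up to a constant, which then combines with the integration-by-parts identity to give the stated bound on all of $\mathbb{C}^d$.
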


Finally we need the following Besov embedding.

\begin{theorem}
  \label{thm-BesovEmbedding}Consider $p_1, p_2, q_1, q_2 \in [1, \infty], s_1
  > s_2$ and $\ell_1, \ell_2 \in \mathbb{R}$ such that
  \[       \ell_1 \leqslant \ell_2 \quad \infixand \quad s_1 - \frac{d}{p_1} \geqslant s_2 -
       \frac{d}{p_2},
      \]
  then $B_{p_1, q_1, \ell_1}^{s_1} (\mathbb{R}^d)$ is continuously embedded in
  $B_{p_2, q_2, \ell_2}^{s_2} (\mathbb{R}^d)$. And if $\ell_1 < \ell_2$ and
  $s_1 - \frac{d}{p_1} > s_2 - \frac{d}{p_2}$ then the embedding $B_{p_1, q_1,
  \ell_1}^{s_1} (\mathbb{R}^d) \hookrightarrow B_{p_2, q_2, \ell_2}^{s_2}
  (\mathbb{R}^d)$ is compact. 
\end{theorem}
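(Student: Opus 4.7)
The plan is to reduce both the continuous embedding and the compactness to two separate ingredients: (i) a weighted Bernstein-type inequality on dyadic blocks, which handles the interplay between $(s_1,p_1)$ and $(s_2,p_2)$, and (ii) a comparison/truncation argument for the weights $r_{\ell_1}, r_{\ell_2}$, which handles the interplay between $\ell_1$ and $\ell_2$ and is also the source of compactness. The starting observation is that if $\ell_1 \leqslant \ell_2$, then $r_{\ell_2}(x) \leqslant r_{\ell_1}(x)$ pointwise, so $\| g\|_{L^p_{\ell_2}} \leqslant \| g\|_{L^p_{\ell_1}}$ for every $p$ and every $g$; this cleanly separates the weight step from the regularity step.

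First I would establish, for any $1 \leqslant p_1 \leqslant p_2 \leqslant \infty$ and any $\ell \in \mathbb{R}$, the weighted Bernstein inequality
\[
\|\Delta_j f\|_{L^{p_2}_{\ell}} \lesssim 2^{j d (1/p_1 - 1/p_2)}\, \|\Delta_j f\|_{L^{p_1}_{\ell}},
\qquad j \geqslant -1.
\]
The proof writes $\Delta_j f = \psi_j * f$ where $\psi_j(x) = 2^{jd} \psi(2^j x)$ has rapid decay at scale $2^{-j}$, applies the standard (unweighted) Bernstein estimate to the convolution, and uses the slow variation of $r_\ell$ (i.e.\ $r_\ell(x) \sim_N r_\ell(y) \,(1+|x-y|)^{|\ell|}$) to absorb the weight past the convolution kernel against the rapidly decaying $\psi_j$. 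Combining this with the weight monotonicity above gives
\[
\|\Delta_j f\|_{L^{p_2}_{\ell_2}} \lesssim 2^{j d (1/p_1 - 1/p_2)}\, \|\Delta_j f\|_{L^{p_1}_{\ell_1}}.
\]
Multiplying by $2^{s_2 j}$ and using the hypothesis $s_1 - d/p_1 \geqslant s_2 - d/p_2$, i.e.\ $s_2 + d/p_1 - d/p_2 \leqslant s_1$, yields
\[
2^{s_2 j}\|\Delta_j f\|_{L^{p_2}_{\ell_2}} \lesssim 2^{s_1 j}\|\Delta_j f\|_{L^{p_1}_{\ell_1}}\, 2^{-(s_1 - s_2 - d/p_1 + d/p_2) j},
\]
and the $\ell^{q_2}$/$\ell^{q_1}$ summation in $j$ is then closed either by $\ell^{q_1} \hookrightarrow \ell^{q_2}$ (in the equality case, with $q_1 \leqslant q_2$) or by Hölder applied to the geometric factor (in the strict case $s_1 - d/p_1 > s_2 - d/p_2$), proving the continuous embedding.

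For the compact embedding I would use a standard truncation plus Rellich scheme. Let $\chi \in C_c^\infty(\mathbb{R}^d)$ equal $1$ on $B(0,1)$ and set $\chi_R(x) = \chi(x/R)$. Given a bounded sequence $\{f_n\}$ in $B^{s_1}_{p_1, q_1, \ell_1}$, decompose $f_n = \chi_R f_n + (1-\chi_R) f_n$. The tail is handled by the strict weight gap $\ell_1 < \ell_2$: on the support of $1-\chi_R$ one has $r_{\ell_2}/r_{\ell_1} \lesssim (1+R^2)^{-(\ell_2-\ell_1)/2} \to 0$, which, combined with the boundedness of multiplication by $(1-\chi_R)$ in weighted Besov spaces (via paraproduct/commutator estimates), forces $\|(1-\chi_R) f_n\|_{B^{s_2}_{p_2,q_2,\ell_2}} \to 0$ uniformly in $n$ as $R \to \infty$. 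For the localized part $\chi_R f_n$, the sequence is uniformly bounded in an unweighted Besov space over the bounded set $B(0, 2R)$ with strictly higher regularity index $s_1$, and classical Rellich for Besov spaces on bounded domains (e.g.\ Proposition 52 of \cite{AVG2021}, which is the localized version already cited in the paper) extracts a subsequence converging in $B^{s_2}_{p_2, q_2}$ on $B(0,2R)$, hence in $B^{s_2}_{p_2, q_2, \ell_2}$. A diagonal extraction across $R \to \infty$ produces a Cauchy subsequence in $B^{s_2}_{p_2, q_2, \ell_2}(\mathbb{R}^d)$.

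The main technical obstacle I expect is the weighted Bernstein inequality: moving weights through convolutions by $\psi_j$ requires tracking polynomial blowups $(1+|x-y|)^{|\ell|}$ that scale unfavourably in $j$ unless one exploits that $\psi_j$ has decay of arbitrary polynomial order (after rescaling one loses no factor of $2^j$). A second delicate point is the paraproduct-type bound ensuring that multiplication by the smooth bounded function $\chi_R$ (or $1-\chi_R$) is continuous on $B^{s_2}_{p_2,q_2,\ell}$ with a constant independent of $R$; this is standard but needs the smoothness of $\chi$ and careful book-keeping of the weight. Once these two lemmas are in hand, both the continuous embedding and the compactness follow by the schemes above.
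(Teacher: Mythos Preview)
The paper does not actually prove this statement: its entire proof is the one-line citation ``See Theorem~6.7 in~\cite{TriebelIII}.'' Your proposal therefore takes a genuinely different route, supplying a direct argument where the paper defers to a reference. The scheme you outline---weighted Bernstein on dyadic blocks for the $(s,p)$ gain, pointwise monotonicity $r_{\ell_2}\leqslant r_{\ell_1}$ for the weight gain, and spatial truncation by $\chi_R$ plus local Rellich for compactness---is the standard one and is correct. Your identification of the two technical points (commuting the polynomial weight through the convolution with $\psi_j$, and obtaining $R$-uniform multiplier bounds for $\chi_R$, $1-\chi_R$) is accurate, and both are routine for polynomial weights and smooth cutoffs. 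Two small caveats worth flagging: your Bernstein step tacitly assumes $p_1\leqslant p_2$, and in the borderline case $s_1 - d/p_1 = s_2 - d/p_2$ you (correctly) need $q_1\leqslant q_2$ to close the $\ell^q$ sum; neither restriction appears in the theorem as stated, so either the statement is slightly loose or one should note that these are the only cases actually used in the paper. What your approach buys is self-containedness; what the citation buys is brevity and the full generality of Triebel's result.
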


\begin{proof}
  See Theorem~6.7 in~{\cite{TriebelIII}}.
\end{proof}

\end{document}